
\documentclass[12pt]{article}

\usepackage{amssymb,amsmath,amsthm}
\usepackage{amsmath}
\usepackage{amssymb}  
\usepackage{bm}
\usepackage[mathscr]{euscript}

\addtolength{\topmargin}{-3\baselineskip}
\addtolength{\textheight}{6\baselineskip}
\addtolength{\textwidth}{2cm}
\addtolength{\oddsidemargin}{-1cm}
\addtolength{\evensidemargin}{-1cm}

\usepackage{amsfonts}
  \newcommand{\field}[1]{\mathbb{#1}}

\renewcommand{\vec}[1]{\bm{#1}}

\newcommand{\mon}{\mu}
\newcommand{\vol}{\mathrm{vol}}
\newcommand{\bin}{\mathrm{Bin}}

\newcommand{\E}{\mathbb{E}}
\newcommand{\Prob}{\mathbb{P}}
\def\nfrac#1#2{{\textstyle\frac{#1}{#2}}}

\newtheorem{theorem}{Theorem}[section]
\newtheorem{lemma}[theorem]{Lemma}
\newtheorem{proposition}[theorem]{Proposition}
\newtheorem{corollary}[theorem]{Corollary}

\date{ }

\begin{document}
\title{Mixing of the Glauber dynamics for the ferromagnetic Potts model}

\author{
Magnus Bordewich\thanks{Supported by EPSRC grant EP/G066604/1}\\
\small School of Engineering and Computing Sciences\\[-0.8ex]
\small Durham University\\[-0.8ex]
\small Durham, DH1 3LE, UK\\[-0.8ex]
\small \tt m.j.r.bordewich@durham.ac.uk\\
\and
Catherine Greenhill\thanks{Research supported by the Australian Research
Council and performed while the second author was on sabbatical at the University
of Durham.  }\\
\small School of Mathematics and Statistics\\[-0.8ex]
\small The University of New South Wales\\[-0.8ex]
\small Sydney NSW 2052, Australia\\[-0.8ex]
\small \tt csg@unsw.edu.au\\
\and
Viresh Patel\thanks{Research performed while the third author was at the University of Durham.  Supported by 
EPSRC Grant EP/G066604/1}\\
\small School of Mathematical Sciences\\[-0.8ex]
\small Queen Mary, University of London\\[-0.8ex]
\small London, E1 4NS\\[-0.8ex]
\small\tt viresh.s.patel@gmail.com 
}\maketitle

\begin{abstract} 
We present several results on the mixing time of the Glauber dynamics for sampling from the Gibbs distribution
in the ferromagnetic Potts model.  At a fixed temperature and interaction strength, we study the interplay between the maximum degree ($\Delta$) of the underlying graph and the number of colours or spins ($q$) in determining whether the dynamics mixes rapidly or not. We find a lower bound $L$ on the number of colours  such that Glauber dynamics is rapidly mixing if at least $L$ colours are used. 
 We give a  closely-matching 
 upper bound $U$ on the number of colours such that with probability that tends to 1, the Glauber dynamics mixes slowly on random $\Delta$-regular graphs when at most $U$ colours are used.
We show that our bounds can be improved if 
we restrict attention to certain types of graphs of maximum degree $\Delta$, e.g.\ toroidal grids for $\Delta=4$. 
\end{abstract}

\section{Introduction}

The Potts model was introduced in 1952~\cite{pot52} as a generalisation of the Ising model of magnetism.  The Potts model has been extensively studied not only in statistical physics, but also in computer science, mathematics  and further afield. In physics the main interest is in studying phase transitions and modelling the evolution of non-equilibrium particle systems; see~\cite{wu82} for a survey. In computer science, the Potts model is a test-bed for approximation algorithms and techniques. It has also been heavily studied in the areas of discrete mathematics and graph theory, through an equivalence to the Tutte polynomial of a graph~\cite{wel93}, and thereby links to the chromatic polynomial and many other graph invariants. The Potts model and its extensions have also appeared many times in the social sciences, for example in modelling financial markets~\cite{voi01} and voter interaction in social networks~\cite{cho07}, and in biology~\cite{gla92}.  

\medskip
\noindent
{\bf Potts Model.}\  
In graph-theoretic language, the Potts model assigns a weight to each possible colouring of a graph (not necessarily proper), and we are interested in sampling from the distribution induced by the weights. The main obstacle to sampling is that the appropriate normalisation factor, the sum of the weights  of all colourings, is hard to compute. To be precise:
for a graph $G=(V,E)$, a \emph{(spin) configuration} $\sigma$ is a function which assigns to each vertex $i$ a colour $\sigma_i \in \{1, \ldots, q \}$ (also called states or spins).  
The probability of finding the system in a given configuration $\sigma$ is given by the Gibbs distribution: 
\[\pi(\sigma) = Z^{-1} e^{\beta \sum_{(i,j)\in E} 
   J \delta(\sigma_i, \sigma_j)},
\] 
where $\delta(\sigma_i, \sigma_j)$ is the Kronecker-$\delta$ (taking value 1 $\sigma_i = \sigma_j$, and taking value 0 otherwise); $\beta = (kT)^{-1}>0$ is the inverse temperature (here $k$ is Boltzman's constant and $T$ is temperature); and $Z=Z(G,\beta,J,q)$, is the \emph{partition function} i.e.\ the appropriate normalisation factor to make this a probability distribution. The strength of the interaction between neighbouring vertices is given by the coupling constant $J$. If $J>0$ then the bias is towards having many edges with like colours at the endpoints; this is the ferromagnetic region. 
If $J<0$ then the bias is towards few edges with like colours at the endpoints: this is the anti-ferromagnetic region.

Our results 
concern only the ferromagnetic region, where $J > 0$, although we discuss some background on the antiferromagnetic region below. 
We regard $e^{\beta J}$ as a single parameter $\lambda \geq 0$, which we will call the \emph{activity}; thus $\lambda>1$ gives the ferromagnetic region and $\lambda<1$ gives the antiferromagnetic region.
 Setting $\mu(\sigma)$ to be the number of monochromatic edges in a configuration $\sigma$ (that is, $\mu(\sigma)=\sum_{(i,j)\in E} \delta(\sigma_i, \sigma_j)$), we obtain the formula
\[ Z(G,\lambda,q)=\sum_{\sigma\in [q]^V} \lambda^{\mu(\sigma)}.\]

\medskip
\noindent
{\bf Computing the partition function.}\
When $q=1$ the evaluation of the partition function is trivial. It is also trivial when $q=2, \lambda=0$, which is the antiferromagnetic Ising model at zero temperature: here the partition function counts the number of proper 2-colourings of $G$. In all other cases it is \#P-hard to compute the partition function exactly, and thus there can be no efficient algorithm (running in time polynomial in the size of the underlying graph) assuming P$\neq$NP. (Note that the related Tutte polynomial has three additional points on the real plane at which it can be efficiently evaluated~\cite{jae90}, but these do not correspond to the ferromagnetic 
Potts model at physically meaningful points, \textit{i.e.} where $q\geq 1$ and $ \lambda\geq 0$.) 
As a result of the hardness of exact evaluation, attention has been focused on approximation algorithms. The specific question is: for what classes of graphs and what ranges of $q$ and $\lambda$ is there a fully polynomial randomised approximation scheme (FPRAS) for computing the partition function? 

In the anti-ferromagnetic case, $\lambda<1$, there can be no FPRAS for the partition function unless NP=RP, except when $q=1$ (for all $\lambda$) and when $q=2$ and $\lambda=0$~\cite{gol08}.
For the ferromagnetic region, $\lambda>1$, there is only known to be an FPRAS when $q=2$ (the Ising model) for general graphs at any temperature~\cite{jer93}. There is also an FPRAS for the entire ferromagnetic region (no restriction on $q$) if we restrict the underlying graphs to the class of dense graphs (those having minimum degree $\Omega(n)$~\cite{alo95}, or having edge connectivity at least $\Omega(\log n)$~\cite{kar99}). In terms of approximation complexity, 
approximating the partition function of the ferromagnetic Potts model is 
equivalent to \#BIS, which is the problem of approximating the number of independent sets in a bipartite graph~\cite{gol12}.  
This puts it in an interesting class of approximation problems, namely,
those which are \#BIS-equivalent: no such problem is known to 
be hard, but none have been shown to exhibit an 
FPRAS~\cite{dye03b}. 

\medskip
\noindent
{\bf Glauber dynamics.}\
A standard approach to approximating the partition function is to simulate \emph{Glauber dynamics}. In {Glauber dynamics} the following process is iterated (starting from any given configuration): a random vertex updates its colour by selecting a colour according to the local Gibbs distribution induced by the current colourings of its neighbours. (This will be formalised in the next subsection.) The distribution on configurations obtained after $t$ steps of Glauber dynamics converges to an equilibrium given by the global Gibbs distribution on the whole graph, as $t$ goes to infinity. The approximation is achieved by simulating the Glauber dynamics for long enough to generate a sample that is distributed with very nearly the equilibrium distribution. This process is Markov chain Monte Carlo sampling (MCMC)~\cite{jer96}. The close link between sampling and approximate counting means that if Glauber dynamics gets sufficiently close to equilibrium in polynomial time (in the size of the graph) then there is an FPRAS for the partition function. In this case the dynamics is said to mix rapidly. 

In the ferromagnetic case, physicists' 
understanding of phase transitions
  indicate that at sufficiently high temperature (all other things being equal) Glauber dynamics will mix rapidly, whereas at sufficiently low temperature Glauber dynamics will mix slowly~\cite{mar99}. 
The intuitive explanation is as follows: at high temperature in the ferromagnetic region, $\beta$ is small and so $\lambda$ is close to 1; thus all configurations are weighted roughly equally
 and the Glauber dynamics walks freely over the state space without getting `stuck'. 
At low temperatures in the ferromagnetic region, $\beta$ is large and so $\lambda$ is also large; thus 
configurations consisting of predominantly one colour are far more heavily weighted than configurations with a balance of colours,
 so the Glauber dynamics will become trapped in configurations of the former type.
However, determining the exact  range of temperature in which Glauber dynamics mixes rapidly is, in general, open.

In the {anti-ferromagnetic} case, where it is known that there can be no FPRAS in general, the MCMC technique has still yielded many results approximating the partition function for restricted classes of graph, notably bounded-degree graphs. In the zero temperature limit of the anti-ferromagnetic Potts model only proper vertex colourings have non-zero weight. Thus approximating the partition function is equivalent to approximately counting {proper} $q$-colourings of the underlying graph. Jerrum~\cite{jer95} first showed that provided  the number of colours is more than twice the maximum degree of the graph then the Glauber dynamics will mix rapidly, also proved independently  in the physics community by Salas and Sokal~\cite{sal97}. This result has been followed by numerous refinements gradually reducing the ratio of colours to degree required for rapid mixing: see~\cite{fri07} for a recent survey. In this paper we shall investigate the interplay of the maximum degree $\Delta$ of the graph $G$ and the number of colours  $q$ in determining whether the convergence of Glauber dynamics for the \emph{ferromagnetic} Potts model is fast (rapid mixing) or slow.

\subsection{Definitions}

Throughout we shall be concerned with discrete-time, reversible, ergodic Markov chains with finite state space $\Omega$. Let 
$\mathcal{M}$ be such a Markov chain with transition matrix $P$ and (unique) stationary distribution $\pi$. For $\varepsilon > 0$ and $x \in \Omega$, we define
\[
 \tau_x(\mathcal{M}, \varepsilon) = \min \{ t: \, \|P^t(x, \cdot) - \pi(\cdot)\|_{TV} \leq \varepsilon \},
\]
where $\| \cdot \|_{TV}$ denotes total variation distance between two distributions: that is, \[
\| \phi - \phi' \|_{TV} := \frac12 \sum_{x \in \Omega}|\phi(x) - \phi'(x)|.
\]
for any two probability distributions $\phi$, $\phi'$  on $\Omega$.
We define $\tau(\mathcal{M},\varepsilon)= \max_x \tau_x(\mathcal{M}, \varepsilon)$.

Let $G=(V,E)$ be a graph with $n:=|V|$,
 and let $[q]=\{ 1, \ldots, q \}$ be a set of colours (or spins).
  We write 
$\Omega = [q]^V$ for the set of configurations  of $G$ (\textit{i.e.} not-necessarily proper $q$-colourings). Fix a constant $\lambda > 1$, which is called the \emph{activity}. 
The \emph{Gibbs distribution} $\pi = \pi(G, \lambda, q)$ on $\Omega$ is given by 
\[ \pi(\sigma) \propto \lambda^{ \mon(\sigma)}
\]
 for all $\sigma\in\Omega$, 
where $\mon(\sigma)$ denotes the number of monochromatic edges of $G$ in the configuration $\sigma$. More precisely, $\pi(\sigma) = \lambda^{ \mon(\sigma)}/Z$, 
where  $Z$ is the partition function
\[
Z = Z(G, \lambda, q) = \sum_{\sigma \in \Omega} \lambda^{\mon(\sigma)}.
\]

The \emph{Glauber dynamics} is a very simple Markov chain on $\Omega$, with stationary distribution given by the Gibbs distribution.   Given a configuration $X\in\Omega$, a vertex $v\in V$, and a colour $c \in [q]$, let $n(X,v,c)$ denote the number of neighbours of $v$ with colour $c$ in $X$. Define the probability distribution $\phi_X^v$ on $[q]$ by
\[  \phi_X^v(c) \propto \lambda^{n(X,v,c)}.\]
The transition procedure
of the Glauber dynamics from current state $X_t\in\Omega$ is as follows:
\begin{itemize} 
\item choose
a vertex $\vec{v}$ of $G$ uniformly at random; 
\item given that $\vec{v}=v$ (here $\vec{v}$ is random and $v$ is fixed), choose a colour $c \in [q]$ according to the distribution $\phi = \phi_{X_t}^v$; 
\item for each $u \in V$ let
$X_{t+1}(u) = 
\begin{cases} 
X_t(u) 	&\text{if } u \not= v, \\
c 		&\text{if } u = v.
\end{cases}$
\end{itemize}
Then $X_{t+1}$ is the new state. We write $\mathcal{M}_{\mathrm{GD}} = \mathcal{M}_{\mathrm{GD}}(G,\lambda,q)$ for the Glauber dynamics as
described above.

We say that $\mathcal{M}_{\mathrm{GD}}$ \emph{mixes rapidly} if $\tau(\mathcal{M},\varepsilon)$ is polynomial in $\log{|\Omega|}$, that is, polynomial in $n$.
If $\tau(\mathcal{M},\varepsilon)$ is exponential in $n$, then we say that $\mathcal{M}_{\mathrm{GD}}$ \emph{mixes slowly}.

\subsection{Results}
Our main results are stated below. In order to keep the presentation simple at this stage, we sometimes postpone giving the explicit relationships amongst constants and mixing times until later, but in each case, we direct the reader to where a more detailed statement can be found.

In Theorem~\ref{th:main1} we present our first, and simplest, bound on the number of colours, as a function of $\lambda$ and $\Delta$, that guarantees rapid mixing of Glauber dynamics. 
Although Theorem~\ref{th:main1} follows from a standard coupling argument, 
for completeness we prove it here, as we will need this result
later to establish our improved bounds.

\begin{theorem}
\label{th:main1}
Let $\Delta,q\geq 2$ 
be integers and take $\lambda>1$ such that 
$q\geq \Delta\lambda^\Delta + 1$.
Then the Glauber dynamics of the $q$-state Potts model at activity $\lambda$ mixes rapidly for
the class of graphs of maximum degree $\Delta$.
\end{theorem}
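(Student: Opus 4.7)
The natural strategy is path coupling (Bubley--Dyer) on the Hamming metric $d(X,Y)=|\{v\in V : X(v)\neq Y(v)\}|$. It suffices to exhibit a one-step coupling on pairs $(X_0,Y_0)$ at Hamming distance $1$ for which $\E[d(X_1,Y_1)] \leq 1-\eta$ with $1/\eta$ polynomial in $n$. Let $w$ be the unique vertex of disagreement, with $X_0(w)=a$ and $Y_0(w)=b$. The coupling picks the same update vertex $\vec{v}$ in both chains, then couples the colour choices via the maximal coupling of $\phi_{X_0}^{\vec{v}}$ and $\phi_{Y_0}^{\vec{v}}$.

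The case analysis on $\vec{v}$ is largely routine. When $\vec{v}=w$ (probability $1/n$), the neighbourhood of $w$ is identical in $X_0$ and $Y_0$, so $\phi_{X_0}^w=\phi_{Y_0}^w$; the coupled chains then agree at $w$ and $d$ drops to $0$. When $\vec{v}$ is neither $w$ nor a neighbour of $w$, the distributions again coincide and $d$ stays at $1$. The only non-trivial case is $\vec{v}\sim w$, in which $d$ rises to $2$ with probability $\|\phi_{X_0}^{\vec v}-\phi_{Y_0}^{\vec v}\|_{TV}$.

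The main obstacle is the estimate $\|\phi_{X_0}^v-\phi_{Y_0}^v\|_{TV}\leq \lambda^\Delta/q$ for each $v\sim w$. The counts at $v$ satisfy $n(X_0,v,a)=n(Y_0,v,a)+1$, $n(X_0,v,b)=n(Y_0,v,b)-1$, and $n(X_0,v,c)=n(Y_0,v,c)$ for $c\notin\{a,b\}$. A short computation gives $Z_{X_0}-Z_{Y_0}=(\lambda-1)\bigl[\lambda^{n(Y_0,v,a)}-\lambda^{n(Y_0,v,b)-1}\bigr]$, and a sign check shows that the TV distance collapses to a single mass difference: when $Z_{X_0}\geq Z_{Y_0}$, the only colour with $\phi_{X_0}^v>\phi_{Y_0}^v$ is $a$, so $\|\phi_{X_0}^v-\phi_{Y_0}^v\|_{TV}=\phi_{X_0}^v(a)-\phi_{Y_0}^v(a)\leq \phi_{X_0}^v(a)$; when $Z_{X_0}<Z_{Y_0}$, the only colour with $\phi_{Y_0}^v>\phi_{X_0}^v$ is $b$, so $\|\phi_{X_0}^v-\phi_{Y_0}^v\|_{TV}=\phi_{Y_0}^v(b)-\phi_{X_0}^v(b)\leq \phi_{Y_0}^v(b)$. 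Both $\phi_{X_0}^v(a)$ and $\phi_{Y_0}^v(b)$ are at most $\lambda^\Delta/q$, using $\lambda^{n(\cdot,v,c)}\leq \lambda^\Delta$ in the numerator and the trivial lower bound $Z\geq q$ in the denominator (each of the $q$ summands in $Z$ is at least $1$ since $\lambda>1$).

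Combining the three cases and using $\deg(w)\leq \Delta$,
$\E[d(X_1,Y_1)]-1 \leq \tfrac{1}{n}\bigl(-1 + \Delta\lambda^\Delta/q\bigr)\leq -\tfrac{1}{n(\Delta\lambda^\Delta+1)},$
where the final inequality invokes the hypothesis $q\geq \Delta\lambda^\Delta+1$. The Bubley--Dyer path coupling theorem, applied with diameter $n$, then yields a mixing time of $O\bigl(n(\Delta\lambda^\Delta+1)\log(n/\varepsilon)\bigr)$, which is polynomial in $n$ for fixed $\Delta$, $\lambda$, and $q$.
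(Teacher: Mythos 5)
Your proof is correct and follows essentially the same route as the paper's: path coupling on Hamming-distance-one pairs with a maximal coupling of the single-site distributions, reducing the total variation distance to a single mass difference at the disagreeing colour. The only difference is cosmetic — you bound the normalising sum below by $q$ rather than by $\lambda^{\Delta}+q-1$, which gives a contraction rate of $1/\bigl(n(\Delta\lambda^{\Delta}+1)\bigr)$ instead of the paper's $1/\bigl((\Delta+1)n\bigr)$, a slightly weaker but still polynomial mixing-time bound.
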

Theorem~\ref{th:main1} will be proved in Section~\ref{se:GD};
see Proposition~\ref{le:vertex} for a more detailed statement.

In Theorem~\ref{th:main2} we improve the exponent of $\lambda$ in the bound, but at the expense of a larger constant. We also show that the exponent achieved is close to the best possible, by proving a corresponding slow-mixing bound for almost all regular graphs of degree $\Delta$.
\begin{theorem}
\label{th:main2}
Fix an integer $\Delta \geq 2$. For any $\eta \in (0,1)$ 
there are
constants $c_1$ and $c_2$ (depending on $\eta$ and $\Delta$), such that for any integer $q \geq 2$ and any $\lambda>1$  
\begin{enumerate}
\item[{\rm (i)}] if $q>c_1\lambda^{\Delta-1+\eta}$ then the Glauber dynamics of the $q$-state Potts
model at activity $\lambda$ mixes rapidly for the class of connected 
graphs of maximum degree $\Delta$;
\item[{\rm (ii)}] if $q<c_2\lambda^{\Delta-1-\tfrac{1}{\Delta-1}-\eta}$ then the Glauber dynamics of
the $q$-state Potts model at activity $\lambda$ mixes slowly for almost all regular graphs of
degree $\Delta \geq 3$.
\end{enumerate}
\end{theorem}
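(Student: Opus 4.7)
The two parts of Theorem~\ref{th:main2} require complementary techniques.

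For part~(i), I would refine the path-coupling argument used to prove Theorem~\ref{th:main1} by combining it with a local-uniformity (burn-in) step. The bound $\Delta\lambda^\Delta$ in Theorem~\ref{th:main1} arises because, at a vertex $v$ adjacent to the single disagreement vertex $u$, the weight $\lambda^{n(X,v,c)}$ appearing in $\phi_X^v$ is bounded crudely by $\lambda^\Delta$, absorbing all $\Delta$ neighbours of $v$ including $u$ itself, and a further union bound is taken over $v\in N(u)$. A careful bookkeeping separates $u$ from its $\Delta-1$ fellow neighbours (already worth one factor of $\lambda$) and sharpens the union bound. To legitimately reduce the exponent from $\Delta$ to $\Delta-1$ in the ``bad'' denominator term, one runs a polynomial-length burn-in during which, with high probability, every vertex has at most $O(\Delta/q+\sqrt{(\Delta\log n)/q})$ neighbours of any given colour; the $\eta$ slack in the exponent absorbs this concentration loss. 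Starting the coupling only after the burn-in and applying path coupling then yields rapid mixing for $q>c_1\lambda^{\Delta-1+\eta}$.

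For part~(ii), I would use a conductance (bottleneck) argument exploiting the expansion of random $\Delta$-regular graphs. For each colour $i\in[q]$ define
\[
 A_i \;=\; \{\sigma\in\Omega : \text{colour } i \text{ is the unique plurality colour of } \sigma\},
\]
so that by colour symmetry $\pi(A_i)\approx 1/q$. Glauber dynamics can leave $A_i$ only through its single-flip edge boundary $\partial A_i$, and I would bound $\pi(\partial A_i)$ by summing the Gibbs weight of each boundary configuration, grouped by the set $S\subseteq V$ on which it differs from the ``all colour $i$'' state. The weight of such a configuration is at most $\lambda^{\Delta n/2 - e(S,V\setminus S)}/Z$, and the number of such configurations with a fixed $|S|$ is at most $\binom{n}{|S|}(q-1)^{|S|}$. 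Invoking the standard edge-isoperimetric inequality for random $\Delta$-regular graphs---$e(S,V\setminus S)\geq (\Delta-1-o(1))|S|$ for small $|S|$, strengthened via a second-eigenvalue bound for $|S|$ of order $n$---and optimising over $|S|$ shows that the conductance is $e^{-\Omega(n)}$ precisely when $q<c_2\lambda^{\Delta-1-1/(\Delta-1)-\eta}$. The standard conductance lower bound on the mixing time then delivers the conclusion.

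The main obstacle in part~(i) will be quantifying the burn-in error sharply enough to close the exponent to $\Delta-1+\eta$, rather than some strictly worse value. In part~(ii) the peculiar exponent $\Delta-1-1/(\Delta-1)$ emerges as the critical value in the entropy--energy balance $\binom{n}{k}(q-1)^k\lambda^{-(\Delta-1)k+O(k^2/n)}$, and the main technical challenge is establishing an isoperimetric inequality for random $\Delta$-regular graphs with the right dependence of $e(S,V\setminus S)$ on $|S|$ uniformly across the relevant range of $|S|$.
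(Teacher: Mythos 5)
Both halves of your proposal rest on steps that fail, and in each case the missing ingredient is precisely the paper's main technical contribution. For part~(i), a refinement of the single-site path coupling cannot reach exponent $\Delta-1$: the worst case in Proposition~\ref{le:vertex} is a vertex $v$ all of whose $\Delta$ neighbours carry the colour $A(u)$, and there $\|\phi_A^v-\phi_B^v\|_{TV}=\frac{\lambda^{\Delta}}{\lambda^{\Delta}+q-1}-\frac{\lambda^{\Delta-1}}{\lambda^{\Delta-1}+\lambda+q-2}$, which for $q=c\lambda^{\Delta-1}$ is roughly $\frac{c}{1+c}$ --- bounded away from zero and far above the $1/\Delta$ that path coupling requires. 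Your burn-in cannot excise such configurations: the uniformity bound $O(\Delta/q+\sqrt{(\Delta\log n)/q})$ is what one gets for \emph{uniform} colourings, whereas the ferromagnetic Gibbs measure \emph{favours} monochromatic neighbourhoods, and at stationarity one expects linearly many vertices whose neighbourhood is monochromatic in the regime $q\approx\lambda^{\Delta-1}$; so no uniform local-uniformity statement holds. (Note also that if each vertex really had at most one neighbour of each colour, the coupling would give $q\gtrsim\Delta\lambda$, so your mechanism would not single out the exponent $\Delta-1$ in any case.) The paper instead passes to block dynamics: the exponent $\Delta-1+1/k$ is the combinatorial quantity $\mu^+$ of Lemma~\ref{mu-bound-k}, reflecting that inside a block of size $k$ a colour class contributing a free colour pays at most $\Delta-1$ monochromatic edges per ``non-free'' vertex; rapid mixing of the block dynamics (Theorem~\ref{le:block}, Theorem~\ref{pr:3items}) is then transferred to Glauber dynamics by Markov-chain comparison (Corollary~\ref{co:comparison}, Theorem~\ref{th:glaubercompare}).

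For part~(ii), the conductance skeleton is right (the paper cuts along the Hamming ball $B_r$ around the monochromatic configuration, which is what your weight bound $\lambda^{\Delta n/2-|E(S,V\setminus S)|}$ implicitly assumes; the plurality set $A_i$ has a quite different boundary), but two steps fail. First, the isoperimetric inequality $|E(S,V\setminus S)|\ge(\Delta-1-o(1))|S|$ is false: a.a.s.\ a set of size $r=\lfloor\beta n\rfloor$ in a random $\Delta$-regular graph can induce up to $\kappa r$ edges with $\kappa>1$ (Lemma~\ref{le:expander}), whence $|E(S,V\setminus S)|=\Delta|S|-2e_G(S)\ge(\Delta-2-o(1))|S|$ and no better. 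Second, and more importantly, the union bound ``(number of configurations) $\times$ (maximum weight)'' $=\binom{n}{k}(q-1)^k\lambda^{e_G(S)}$ charges the full entropy and the full energy simultaneously and only yields slow mixing for $q\lesssim\lambda^{\Delta-2}$. The stated exponent $\Delta-1-\frac{1}{\Delta-1}$ is obtained by replacing this product with the partition function $Z(G[S],\lambda,q-1)$ of the induced subgraph and invoking the extremal bound of Section~\ref{se:extremal} (Theorem~\ref{thm:ext}, via forest decompositions, the binomial law of monochromatic edges on forests, and the generalised H\"older inequality): $Z(G[S],\lambda,q-1)\le\bigl(1+(q-1)^{-1}(\lambda^{\Delta}-1)\bigr)^{\lceil\kappa r/\Delta\rceil}(q-1)^{r}$, which quantifies that configurations on $S$ cannot have both many distinct colours and many monochromatic edges. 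Optimising $\kappa\downarrow1$ in Lemma~\ref{lem:conductance} then produces $\Delta-\kappa-\frac{\kappa^2}{\Delta-\kappa}\to\Delta-1-\frac{1}{\Delta-1}$; your sketch asserts this exponent but contains no mechanism that produces it.
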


Theorem~\ref{th:main2} is proved at the end of the paper: a more detailed statement of Theorem~\ref{th:main2}(i) can be found
in Theorem~\ref{th:glaubercompare}, while a more detailed statement of Theorem~\ref{th:main2}(ii) can be found in Theorem~\ref{th:torpid-random}.

Theorem~\ref{th:main2}(ii) is proved using a conductance argument. It turns out that conductance for the Glauber dynamics is related to the expansion properties of the underlying graph, and so we prove that almost all $\Delta$-regular graphs have the relevant property. This argument alone gives a worse bound than that in Theorem~\ref{th:main2}(ii), but combined with the solution of an interesting extremal problem (proved in Section~\ref{se:extremal}), which we believe may be of independent interest, we are able to obtain the required improvement. 

Theorem~\ref{th:main2}(i) is proved by first using a coupling argument to prove a rapid-mixing result for block dynamics (a more general form of dynamics than Glauber dynamics) and then using a Markov chain comparison argument to obtain rapid mixing for Glauber dynamics. In proving  Theorem~\ref{th:main2}(i), we derive a general combinatorial condition on graphs that guarantees rapid mixing of Glauber dynamics (Theorem~\ref{se:grid}  combined with Corollary~\ref{co:comparison}). This condition can be used to improve the bounds of
Theorem~\ref{th:main2}(i) 
for graph classes of maximum degree $\Delta$ with ``low expansion''. We illustrate this in Theorem~\ref{th:main3} below with the example of the toroidal grid.

\begin{theorem}
\label{th:main3}
 For any $\eta\in (0,1)$ 
there are constants $c_3$, $c_4$ and $c_5$ (depending on $\eta$), such that for any positive integer $q$ and any $\lambda>1$
\begin{enumerate}
\item[{\rm (i)}] if $q>c_3\lambda^{3+\eta}$ then the Glauber dynamics of the $q$-state Potts model
at activity $\lambda$ mixes rapidly for the class of connected graphs of maximum degree 4;
\item[{\rm (ii)}] if $q>c_4\lambda^{2+\eta}$ then the Glauber dynamics of the $q$-state Potts model
at activity $\lambda$ mixes rapidly for the toroidal grid;
\item[{\rm (iii)}] if 
$q<c_5\lambda^{\tfrac{8}{3}-\eta}$ then the Glauber dynamics of the $q$-state
Potts model at activity $\lambda$ mixes slowly for almost all regular graphs of degree $4$.
\end{enumerate}
In particular, for sufficiently large $\lambda$ there is a positive integer 
$q$ such that the Glauber dynamics of
the $q$-state Potts model at activity $\lambda$ mixes rapidly for the toroidal grid, but slowly
for almost all regular graphs of degree $4$.
\end{theorem}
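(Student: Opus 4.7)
The plan is to deduce parts (i) and (iii) as immediate specialisations of Theorem~\ref{th:main2}, and to prove (ii) by applying the general combinatorial criterion behind Theorem~\ref{th:main2}(i) to the toroidal grid, whose sub-exponential growth yields a better exponent.

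Parts (i) and (iii) are instances of Theorem~\ref{th:main2} with $\Delta = 4$. Part (i) gives rapid mixing for $q > c_1 \lambda^{3+\eta}$, so we may take $c_3 := c_1$. Part (iii) gives slow mixing for $q < c_2 \lambda^{3 - 1/3 - \eta} = c_2 \lambda^{8/3 - \eta}$, so we may take $c_5 := c_2$. Neither requires any further argument once Theorem~\ref{th:main2} is in hand.

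The main content is part (ii). The plan is to reuse the two-step strategy that proves Theorem~\ref{th:main2}(i): first establish rapid mixing of a suitable block dynamics by a path-coupling argument, and then transfer it to the Glauber dynamics via the comparison result (Corollary~\ref{co:comparison}) combined with the general combinatorial condition (Theorem~\ref{se:grid}). The only change is the choice of blocks, tailored to exploit the low expansion of the torus. For a general graph of maximum degree $\Delta$, one is effectively forced to take blocks that are balls of small radius $r$; since such a ball can have as many as $\Delta(\Delta-1)^{r-1}$ boundary vertices, the resulting analysis pays a factor of order $\lambda^{\Delta - 1}$ per update, producing the exponent $\Delta - 1$ that appears in Theorem~\ref{th:main2}(i). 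In the toroidal grid, however, a square block of side $k$ has $\Theta(k^2)$ vertices but only $\Theta(k)$ boundary edges, so the boundary-to-volume ratio decays like $1/k$. By choosing $k$ to be a large constant depending on $\eta$, the contribution of the boundary to the coupling contraction is forced to be of order $\lambda^{2+\eta}$, and rapid mixing of the block dynamics follows provided $q > c_4 \lambda^{2+\eta}$. Corollary~\ref{co:comparison} then lifts this to rapid mixing of the Glauber dynamics on the toroidal grid.

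The main obstacle I expect is verifying the combinatorial condition quantitatively on the grid: one must choose a covering by overlapping blocks each carrying enough weight, bound uniformly over boundary configurations the total variation distance between Gibbs measures on a block with different boundaries, and ensure that the block side length is large enough that the $\lambda^{2+\eta}$ bound is genuinely attained rather than the weaker $\lambda^3$ coming from the worst case of Theorem~\ref{th:main2}(i). Granting parts (ii) and (iii), the ``in particular'' statement is immediate: choose any $\eta \in (0, 1/3)$, so that $2 + \eta < 8/3 - \eta$; for all sufficiently large $\lambda$ the interval $(c_4 \lambda^{2+\eta},\, c_5 \lambda^{8/3 - \eta})$ contains a positive integer $q$, and by (ii) and (iii) such a $q$ witnesses rapid mixing on the toroidal grid together with slow mixing on almost all $4$-regular graphs.
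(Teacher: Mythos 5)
Your proposal is correct and follows essentially the same route as the paper: parts (i) and (iii) are exactly the specialisation of Theorem~\ref{th:main2} to $\Delta=4$, and part (ii) is precisely the paper's argument via block dynamics on $r\times r$ subgrids of the torus (Theorems~\ref{th:grid} and~\ref{th:glaubergrid}) lifted to Glauber dynamics by Corollary~\ref{co:comparison}, with $r$ a constant of order $\eta^{-1}$. One minor quibble: the improved exponent $2+2/r$ comes from bounding the parameter $\mu^+$ via the isoperimetric inequality for subsets of a subgrid (Lemma~\ref{expansion}), i.e.\ the internal edge density of subsets of a block, rather than from the size of the block boundary itself, which only enters the constants $\partial^+$ and $\Psi$.
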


The purpose of Theorem~\ref{th:main3} is illustrative and it is proved at the end of the paper. 
Theorem~\ref{th:main3}(i) and (iii) are immediate consequence of Theorem~\ref{th:main2} (by substituting $\Delta=4$), while
Theorem~\ref{th:main3}(ii) is a useful illustration of our general technique applied to the grid.
 A more detailed statement of 
Theorem~\ref{th:main3}(ii) is given 
as part of Theorem~\ref{th:glaubergrid}.

Section~\ref{se:rapid} contains our results on rapid mixing of Glauber dynamics.
Section~\ref{se:extremal} is devoted to an 
extremal problem whose solution allows us to obtain improved bounds for our slow-mixing results in Section~\ref{se:tormix}.

\subsection{Comparison with related results and phase transitions}

We write $o(1)$ for an expression 
that tends to $0$ as $q \to \infty$.
(The most interesting setting for our results is when $q$ is large.)
We now restate our results in terms of the inverse temperature $\beta$, under the assumption that
$J=1$, so that $\lambda = e^\beta$. 

The results of Theorems~\ref{th:main1}, \ref{th:main2}(i), \ref{th:main2}(ii), \ref{th:main3}(ii) say respectively:
\begin{itemize}
\item[(a)]  if $\beta \leq \frac{1 + o(1))}{\Delta}\, \log q$ then the Glauber dynamics of the $q$-state Potts model mixes rapidly on  graphs of maximum degree $\Delta$;
\item[(b)] if $\beta \leq \frac{1 + o(1)}{\Delta-1}\, \log q$ then the Glauber dynamics of the $q$-state Potts mixes rapidly on graphs of maximum degree $\Delta$;
\item[(c)] if $\beta > \frac{1 + o(1)}{\Delta - 1 - \nfrac{1}{\Delta-1}}\, \log q $ then the Glauber dynamics of the $q$-state Potts model  mixes slowly for almost all regular graphs of
degree $\Delta \geq 3$;
\item[(d)] if $\beta < \frac{1+o(1)}{2}\, \log q$ then the Glauber dynamics of the $q$-state Potts model mixes rapidly for toroidal grids.
\end{itemize}

There is some overlap between Theorem~\ref{th:main1}
and a result of Hayes~\cite[Proposition 14]{hayes} for $q=2$,
which was generalised to arbitrary $q$ by 
Ullrich~\cite[Corollary 2.14]{ull}.
Ullrich showed that when the inverse temperature $\beta$
satisfies $\beta \leq 2c/\Delta$ for some $0<c<1$, then the
Glauber dynamics 
is rapidly mixing on graphs of maximum degree $\Delta$.
Hence our result (a) holds for a wider range of $\beta$ when
$q$ is large. (For small values of $q$, 
Theorem~\ref{th:main1} does not apply but~\cite[Corollary 2.14]{ull} is valid).

As we have mentioned, there is often a link between certain phase transitions and the critical inverse temperature of associated dynamics (i.e.\ an inverse temperature below which the dynamics mix rapidly and above which they mix slowly). We will not define what we mean by phase transitions here but mention only that, for Glauber dynamics of the $q$-state Potts model on a random $\Delta$-regular graph, the relevant phase transition is the transition from unique to non-unique Gibbs measure on the infinite $\Delta$-regular tree. H{\"a}ggstr{\"o}m \cite{Hag} showed that this phase transition occurs at an inverse temperature $\beta_0 = \log B$, where $B$ is the unique value for which the polynomial 
\[
(q-1)x^{\Delta} + (2 - B - q)x^{\Delta - 1} +Bx - 1
\]
has a double root in $(0,1)$. While there is no general closed form formula for $\beta_0$, we show in the appendix that $\beta_0 = \frac{\log q}{\Delta - 1} + O(1)$.
 Thus $\beta_0$ approximately matches the rapid mixing bound of (b). 
 
We note that, in a recent related work, Galanis \textit{et al.}~\cite{GSVY} give a very detailed picture of the phase transitions of the ferromagnetic Potts model on the infinite $\Delta$-regular tree. Using this analysis they also show that show that the Swendsen-Wang process (a MCMC process different to Glauber dynamics) mixes slowly at a specific phase transition point on almost all random regular graphs of degree $\Delta$.

As mentioned earlier, result (d) is only illustrative  
since sharper bounds for the grid are known.  It is known that for the 
infinite 2-dimensional grid, the phase transition occurs at $q=(\lambda-1)^2$~\cite{wu82}
(i.e.\ $\beta = \log(1 + q^\frac{1}{2})$) 
and that rapid mixing occurs for finite grids when 
$\beta$ is below this threshold; see~\cite{mar99} 
and Theorem~2.10 of \cite{ull}. 
It is conjectured that the Glauber dynamics mixes slowly when
$\beta$ is above this threshold;
see Remark~2.11 of \cite{ull}). 
Borgs, Chayes and Tetali~\cite{BCT} proved that for $q$ sufficiently large and for 
$\beta > \frac{\log\left(q\right)}{2}+O(q^{-1/2})$, the heat bath Glauber dynamics is slowly mixing on sufficiently large
toroidal grids (with a mixing time exponential in $\beta$ and in $L$, the side length of the grid).
This improved on the earlier result~\cite{bor99}.

\section{Mixing time upper bounds}
\label{se:rapid}

Our goal in this section is to give good lower bounds on the number of colours needed for the Glauber dynamics to mix rapidly. We begin by describing the notions of coupling and path coupling, which are very useful tools in proving upper bounds on mixing times for Markov chains. In Section~\ref{se:GD}, we apply path coupling directly to the Glauber dynamics of bounded-degree graphs to obtain our first lower bound on the number of colours needed for rapid mixing. In Section~\ref{se:BD}, we consider block dynamics, a more general type of dynamics that can be used to sample from the Gibbs distribution. We give a general lower bound on the number of colours needed for rapid mixing of block dynamics (Theorem~\ref{le:block}). 
We illustrate how to apply Theorem~\ref{le:block} to 
bounded-degree graphs 
in Section~\ref{se:grid}. In Section~\ref{se:comparison}, we relate the mixing times of Glauber dynamics to that of the block dynamics and show how this gives various improvements to the bounds obtained in Section~\ref{se:GD}.
This enables us, in Theorems~\ref{th:glaubercompare} and~\ref{th:glaubergrid}, to prove what is needed for Theorem~\ref{th:main2} part (i), and Theorem~\ref{th:main3} parts (i) and (ii). Note that the final proofs of Theorems~\ref{th:main2} and~\ref{th:main3} are left until we have all the pieces, at the end of Section~\ref{se:tormix}.

\subsection{Coupling}
\label{se:coupling}

The notion of coupling (more specifically path coupling \cite{BubleyDyer}) lies at the heart of our proofs of upper bounds
for mixing times.
We give the basic setup in this section.

Let $\mathcal{M} = (X_t)$ be a Markov chain with transition matrix $P$. A \emph{coupling} for $\mathcal{M}$ is a stochastic process  $(A_t,B_t)$ on $\Omega \times \Omega$ such that each of $(A_t)$ and $(B_t)$, considered independently, is a faithful copy of $(X_t)$. Since all our processes are time-homogeneous,  a coupling is determined by its transition matrix: given elements $(a,b)$ and $(a',b')$ of $\Omega \times \Omega$, let $P'((a,b),(a',b'))$ be the probability that $(A_{t+1},B_{t+1}) = (a',b')$ given that $(A_t,B_t) = (a,b)$. Since $(A_t,B_t)$ is a coupling, for each fixed $(a,b) \in \Omega \times \Omega$, we have 
\begin{align*}
\sum_{b' \in \Omega}P'((a,b),(a',b')) &= P(a,a') \quad \text{ for all } a' \in \Omega; \\
\sum_{a' \in \Omega}P'((a,b),(a',b')) &= P(b,b') \quad \text{ for all } b' \in \Omega  .
\end{align*}
Under \emph{path coupling}, the coupling is only defined on a subset $\Lambda $ of $ \Omega \times \Omega$. This restricted coupling is then extended to a coupling on the whole of $\Omega \times \Omega$ along paths in the state
space $\Omega$. 
In our setting, we have $\Omega = [q]^V$, where $V$ is the vertex set of some fixed graph. For $\sigma, \sigma' \in \Omega$, we write $d(\sigma, \sigma')$ for the number of vertices on which $\sigma$ and $\sigma'$ differ in colour (that is, the Hamming distance).  Define $\Lambda \subseteq \Omega \times \Omega$ by 
\[ \Lambda = \{(\sigma,\sigma'): d(\sigma,\sigma')=1 \}.\]
The key property of $\Lambda$ required for the path coupling
method is that for any $\sigma,\sigma'\in\Omega$,
by recolouring the $d(\sigma,\sigma')$ disagreeing
vertices one by one in an arbitrary order, we obtain a path of
length $d(\sigma,\sigma')$ from $\sigma$ to $\sigma'$, with consecutive
elements of the path corresponding to an element of $\Lambda$.
  
\begin{lemma} [See~\cite{DG98} for example]
\label{le:coup}
Let $\Omega=[q]^V$ and $\Lambda$ be as above, 
with $n:=|V|$,
and let $\mathcal{M}$ be some Markov chain on $\Omega$. 
Suppose that we can define a coupling 
$(A,B)\mapsto (A',B')$ for $\mathcal{M}$ on $\Lambda$ such that for some constant $\beta < 1$ and all $(A,B)\in \Lambda$ we have
\[ \E(d(A',B')\mid (A,B)) \leq \beta .\]
  Then by path coupling we may conclude that 
\[ \tau(\mathcal{M},\varepsilon) \leq \frac{\log(n\, \varepsilon^{-1})}{1-\beta}.\]
\end{lemma}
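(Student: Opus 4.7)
The plan is to follow the classical path-coupling argument of Bubley and Dyer. The hypothesis gives a one-step coupling only on the restricted set $\Lambda$ of Hamming-distance-one pairs, so the first task is to extend this to a coupling on all of $\Omega\times\Omega$ that contracts the Hamming distance by the same factor $\beta$. Given arbitrary $(A,B)\in\Omega\times\Omega$ with $k:=d(A,B)$, I would fix an arbitrary ordering of the disagreeing vertices and interpolate between $A$ and $B$ by a sequence $A=Z_0,Z_1,\ldots,Z_k=B$ in which $Z_i$ is obtained from $Z_{i-1}$ by recolouring one disagreeing vertex from its $A$-colour to its $B$-colour. By the key property of $\Lambda$ noted just before the lemma, each consecutive pair $(Z_{i-1},Z_i)$ lies in $\Lambda$, so the hypothesised coupling supplies a joint one-step transition $(Z_{i-1},Z_i)\mapsto(Z_{i-1}',Z_i')$. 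Glueing these together along the path yields a joint distribution on $(Z_0',Z_1',\ldots,Z_k')$ whose consecutive marginals are the given coupling, and whose $(Z_0',Z_k')$ marginal is a valid one-step coupling for $\mathcal{M}$ started at $(A,B)$.

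The contraction estimate is then immediate from the triangle inequality for Hamming distance and linearity of expectation:
\[
\E\!\left(d(A',B')\,\big|\,(A,B)\right) \;\leq\; \sum_{i=1}^{k} \E\!\left(d(Z_{i-1}',Z_i')\,\big|\,(Z_{i-1},Z_i)\right) \;\leq\; k\beta \;=\; \beta\, d(A,B).
\]
Iterating this bound along the chain and taking expectations gives $\E\, d(A_t,B_t) \leq \beta^t\, d(A_0,B_0) \leq \beta^t\, n$ for any starting pair, since the diameter of $\Omega$ under $d$ is at most $n$.

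To convert this into a mixing-time bound, I would couple the chain started at an arbitrary $x\in\Omega$ with a chain started from the stationary distribution $\pi$. Since $d(A_t,B_t)\geq \mathbf{1}[A_t\neq B_t]$, Markov's inequality gives $\Pr(A_t\neq B_t)\leq \beta^t n$, and the standard coupling inequality yields $\|P^t(x,\cdot)-\pi\|_{TV}\leq \Pr(A_t\neq B_t)\leq \beta^t n$. Finally, to force this below $\varepsilon$, solve $\beta^t n\leq \varepsilon$ to get $t\geq \log(n\varepsilon^{-1})/\log(1/\beta)$, and use the elementary inequality $\log(1/\beta)\geq 1-\beta$ (valid for $0<\beta<1$) to replace the denominator by $1-\beta$, giving the claimed bound.

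The only nontrivial step is the path-extension construction, which must be done carefully so that the pairwise couplings along the path are consistent in the sense of defining a genuine joint distribution; however, since the marginals match by construction, the Kolmogorov extension (or a direct inductive definition) presents no real obstacle. Everything else is bookkeeping and the standard coupling inequality.
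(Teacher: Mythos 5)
Your proof is correct and is exactly the standard Bubley--Dyer path-coupling argument that the paper itself invokes by citation (the paper gives no proof of Lemma~\ref{le:coup}, deferring to \cite{DG98}): extend the coupling along geodesic paths in $\Lambda$, contract by $\beta$ per step, and finish with the coupling inequality together with $\log(1/\beta)\geq 1-\beta$. Nothing is missing.
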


\subsection{Glauber dynamics} 
\label{se:GD}

Our goal in this subsection is to prove Theorem~\ref{th:main1}. 
In the subsections that follow, we shall see how we can improve Proposition~\ref{le:vertex} in some special cases, but in Section~\ref{se:tormix}, we shall see that the bound given below is close to best possible, at least
in terms of the exponent of $\lambda$.

We actually prove the following proposition, which immediately implies
Theorem~\ref{th:main1} but also provides a
bound on the mixing time. The proof is a standard coupling calculation.

\begin{proposition} \label{le:vertex}
Let $G$ be a graph with maximum degree $\Delta$, and fix an activity $\lambda>1$. Suppose that $q$ is an integer which satisfies $ q \geq \Delta\lambda^\Delta + 1$. 
Recall that $\mathcal{M}_{\mathrm{GD}} = 
\mathcal{M}_{\mathrm{GD}}(G, \lambda, q)$ denotes the Glauber dynamics
for the $q$-state Potts model on $G$ at activity $\lambda$.
Then
\[ \tau(\mathcal{M}_{\mathrm{GD}},\varepsilon) \leq 
    (\Delta+1)\, n\log(n\, \varepsilon^{-1}).
\]
\end{proposition}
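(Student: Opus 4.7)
The plan is to apply the path coupling lemma (Lemma~\ref{le:coup}) on the set $\Lambda = \{(\sigma, \sigma') \in \Omega \times \Omega : d(\sigma, \sigma') = 1\}$. Given a pair $(A, B) \in \Lambda$ differing at a single vertex $w$, say with $A(w) = c_1$ and $B(w) = c_2$, I would couple one step of the Glauber dynamics by picking the same random vertex $v$ in both chains, then using a maximal coupling of the colour distributions $\phi_A^v$ and $\phi_B^v$ so that the probability of disagreement at $v$ is exactly $\|\phi_A^v - \phi_B^v\|_{TV}$.

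Splitting on the choice of $v$: if $v = w$ then $\phi_A^w = \phi_B^w$ (they depend only on the neighbours of $w$, which agree in $A$ and $B$), so the coupling forces $A'(w) = B'(w)$ and $d(A', B') = 0$; if $v$ is neither $w$ nor a neighbour of $w$, the same reasoning gives $\phi_A^v = \phi_B^v$ and the distance stays at $1$; and if $v \in N(w)$, the distance becomes $2$ with probability $\|\phi_A^v - \phi_B^v\|_{TV}$ and stays at $1$ otherwise. Aggregating,
\[
\E[d(A', B') \mid (A, B)] = 1 - \frac{1}{n} + \frac{1}{n} \sum_{v \in N(w)} \|\phi_A^v - \phi_B^v\|_{TV}.
\]

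The main obstacle is establishing a sufficiently sharp bound on $\|\phi_A^v - \phi_B^v\|_{TV}$ for $v \in N(w)$. Writing $a_c = \lambda^{n(A, v, c)}$, $b_c = \lambda^{n(B, v, c)}$ and local normalisers $Z_A = \sum_c a_c$, $Z_B = \sum_c b_c$, the key observation is that $a_c = b_c$ for $c \notin \{c_1, c_2\}$ while $a_{c_1} = \lambda b_{c_1}$ and $a_{c_2} = b_{c_2}/\lambda$. Using $\|\phi_A^v - \phi_B^v\|_{TV} = \sum_c (\phi_A^v(c) - \phi_B^v(c))^+$ and a short case analysis on the sign of $Z_A - Z_B$, the total variation distance collapses to a single positive term bounded above by $b_{c_1}(\lambda - 1)/Z_A$ (when $Z_A \geq Z_B$) or symmetrically by $b_{c_2}(\lambda - 1)/(\lambda Z_B)$ (otherwise).

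Since the neighbour $w$ has colour $c_2 \neq c_1$ in $B$, I have $n(B, v, c_1) \leq \Delta - 1$ and so $b_{c_1} \leq \lambda^{\Delta - 1}$; together with the crude lower bound $Z_A \geq q - 1 + \lambda b_{c_1}$ (and the analogous estimates in the symmetric case), the function $b_{c_1}(\lambda - 1)/(q - 1 + \lambda b_{c_1})$ is monotonic in $b_{c_1}$ and is therefore maximised at $b_{c_1} = \lambda^{\Delta - 1}$. This yields $\|\phi_A^v - \phi_B^v\|_{TV} \leq (\lambda^\Delta - \lambda^{\Delta - 1})/(q - 1 + \lambda^\Delta)$, which under the hypothesis $q \geq \Delta \lambda^\Delta + 1$ is at most $(1 - 1/\lambda)/(\Delta + 1) \leq 1/(\Delta + 1)$. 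Summing over the at most $\Delta$ neighbours of $w$ gives $\E[d(A', B') \mid (A, B)] \leq 1 - 1/((\Delta + 1) n)$, and Lemma~\ref{le:coup} with $\beta = 1 - 1/((\Delta + 1) n)$ produces the claimed bound $\tau(\mathcal{M}_{\mathrm{GD}}, \varepsilon) \leq (\Delta + 1) n \log(n \varepsilon^{-1})$.
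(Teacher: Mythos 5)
Your proof is correct and follows essentially the same path-coupling argument as the paper: couple at a uniformly random vertex, note that only neighbours of the disagreement vertex can increase the distance, and bound the total variation distance between the two local update distributions by $1/(\Delta+1)$ using $q\geq\Delta\lambda^{\Delta}+1$. The only (immaterial) difference is that you retain both terms of $\phi_A(c_1)-\phi_B(c_1)$ and use monotonicity in $b_{c_1}$, whereas the paper simply drops the subtracted term and maximises $\lambda^{a_1}/Z_A$ over degree sequences; both yield the same contraction rate and the same mixing-time bound.
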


\begin{proof}
Fix $(A,B)\in\Lambda$ and let $u$ be the (unique) vertex which is coloured differently by $A$ and $B$. 
We define a coupling $(A,B)\mapsto (A',B')$ as follows:  let $\vec{v}$
be a uniformly random vertex of $G$, and given that $\vec{v}=v$, obtain $A'$ (respectively, $B'$) by updating the colour of the vertex $v$ in $A$ (respectively, $B$) according to the distributions $\phi_A := \phi_A^v$ (respectively, $\phi_B := \phi_B^v$). The joint distribution on $(\phi_A, \phi_B)$ is chosen so as to maximise the probability that $A'(v) = B'(v)$. Call this maximised probability $p=p(v,A,B)$. It is not hard to see that
\[ 
1-p = \frac{1}{2}\sum_{c \in [q]} |\phi_A(c) - \phi_B(c)| = \|\phi_A - \phi_B \|_{TV}.
\]
Observe that $p(v,A,B) = 1$ if $v=u$ or if $v$ is not a neighbour of $u$ (because in both cases, $A$ and $B$ assign the same colours to the neighbours of $v$ and so $\phi_A$ and $\phi_B$ are the same distribution). 

Now assume that $v$ is a neighbour of $u$, so that  $\phi_A$ and $\phi_B$ are different distributions. Without loss of generality, we may assume that $A(u)=1$ and $B(u)=2$. 
Let $a_i:=n(A,v,i)$, that is,  $a_i$ is the number of neighbours of $v$ 
coloured $i$ by $A$.  Similarly, let $b_i := n(B,v,i)$. Note that $b_1 = a_1 - 1$, $b_2 = a_2+1$ and $b_i = a_i$ for  $i = 3, \ldots, q$. Define 
\[
Z_A = \sum_{i=1}^q \lambda^{a_i} \:\:\:\:\: \text{and} \:\:\:\:\:
Z_B = \sum_{i=1}^q \lambda^{b_i} \,=\, Z_A + (1- \lambda^{-1})(\lambda^{a_2+1} - \lambda^{a_1}),
\]
and assume without loss of generality that $Z_B \leq Z_A$. 
It is easy to see that $\phi_A(i) \leq \phi_B(i)$ for $i=2,\ldots, q$ and hence $\phi_A(1) \geq \phi_B(1)$. Thus
\[ \|\phi_A - \phi_B\|_{TV} = \max_{R \subseteq [q]}|\phi_A(R) - \phi_B(R)| = |\phi_A(1) - \phi_B(1)| = 
\frac{\lambda^{a_1}}{Z_A} - \frac{\lambda^{b_1}}{Z_B}. 
\] 
Given $\vec{a} = (a_1, \ldots, a_q) \in [\Delta]^q$, define $f(\vec{a},\lambda,q) = \frac{\lambda^{a_1}}{Z_A} - \frac{\lambda^{b_1}}{Z_B}$, and let $g(\lambda,q)$ be the maximum of $f$ over all $\vec{a} \in [\Delta]^q$ subject to $a_1 + \cdots + a_q = \Delta$.

Observe that
\begin{align*} 
\E\left(d(A',B')-1\big| (A,B)\right) 
&= (-1)\Prob(\vec{v}=u) + \sum_{v \in N(u)}\Prob(\vec{v}=v)(1-p(v,A,B)) \\
&\leq -\frac{1}{n} + \frac{\Delta}{n} g(\lambda,q).
\end{align*}
We give an easy upper bound for $g(\lambda, q)$ as follows. First, for all $\vec{a}\in [\Delta]^q$ we have
\[ 
	f(\vec{a}, \lambda, q) \leq \frac{\lambda^{a_1}}{Z_A}.
\]
The right hand side of the above is increasing in all directions of the 
form $e_1 - e_i$, where $e_1, \ldots, e_q$ is the standard basis for 
$\field{R}^q$. Therefore the right hand side is maximised when 
$\vec{a} = (\Delta,0,\ldots, 0)$ giving 
\[ g(\lambda,q) \leq \frac{\lambda^\Delta}{\lambda^\Delta + q-1} \leq 
    \frac{1}{\Delta+1},
\]
using the lower bound on $q$ to obtain the final inequality.
Therefore. 
\begin{align*} 
  \E\left(d(A',B')\big| (A,B)\right) \leq 
   1+ \frac{1}{n}\left(-1 + \frac{\Delta}{\Delta+ 1}\right)
   &= 1- \frac{1}{(\Delta+1)n}.
\end{align*}
Applying Lemma~\ref{le:coup} completes the proof.
\end{proof}

\subsection{Block dynamics}
\label{se:BD}

In this section we begin the analysis of \emph{block dynamics} in which, at each step, the colours of several vertices (or a \emph{block} of vertices) are updated. We first present the framework and show general results on block dynamics. In the next subsection we discuss suitable choices of blocks and, in 
Theorem~\ref{pr:3items}, show rapid mixing of block dynamics for certain block systems. 

As before, let $G=(V,E)$ be a graph, fix $\lambda >1$ and let $\Omega = [q]^V$, 
where $[q] = \{1, \ldots, q\}$.
Let $\mathscr{S} = \{ S_1,\ldots, S_R\}$ be a collection of subsets of $V$
such that $\cup_{S\in\mathscr{S}} S = V$.  Each element of $\mathscr{S}$ 
is called a \emph{block}, and we call $\mathscr{S}$ a \emph{block system} 
for $G$.  
Fix a probability distribution $\psi$ on $\mathscr{S}$.
We define a Markov chain $\mathcal{M}_{\mathrm{BD}}=\mathcal{M}_{\mathrm{BD}}^{\mathscr{S}, \psi}(G, \lambda, q)$ with state space $\Omega$, which we call the $(\mathscr{S}, \psi)$-\emph{block dynamics}. We ensure that the new chain also has the Gibbs distribution as its stationary distribution.
First we need some more notation.

Given $S\in \mathscr{S}$, for $c\in [q]^{S}$ and $X\in\Omega$ we
let $X^{(S,c)}\in \Omega$ be the
configuration defined by
\[ X^{(S,c)}(u) = \begin{cases} X(u) & \text{ if $u\not\in S$,}\\
         c(u) & \text{ if $u\in S$.}\end{cases}
\]
Let $\mu_{X,S}(c)$ denote the number of monochromatic edges in $X^{(S,c)}$
which are incident with at least one vertex of $S$.  
Finally, define the distribution $\phi_{X,S}$ on $[q]^{S}$ by
\[ \phi_{X,S}(c) \propto \lambda^{\mu_{X,S}(c)}, 
\quad \text{ that is, } \,\,\,  
\phi_{X,S}(c) = \frac{\lambda^{\mu_{X,S}(c)}}{Z_{X,S}}
\]
where 
\[
Z_{X,S} = \sum_{c \in [q]^{S}}  \lambda^{\mu_{X,S}(c)}.
\]
The transition procedure of the $(\mathscr{S},\psi)$-block dynamics can now be 
described.
From current state $X_t\in\Omega$, obtain the new state $X_{t+1}\in \Omega$ as follows:
\begin{itemize}
\item choose a random $\vec{S} \in \mathscr{S}$ according to the distribution $\psi$;
\item given that $\vec{S}=S$, choose a configuration $c \in [q]^{S}$ for $S$ from the distribution 
$\phi_{X_t,S}$;
\item let $X_{t+1} = {X_t}^{(S,c)}$.
\end{itemize}
The stationary distribution of this chain is the Gibbs distribution on $\Omega$.

Theorem~\ref{le:block} below gives a sufficient condition on the 
number of colours
for the $(\mathscr{S},\psi)$-block dynamics to be rapidly mixing.
The result is stated in terms of three parameters which we now define.

For $S \subseteq V$, write $\partial S$ for the set of vertices in $V \setminus S$ that have a neighbour in $S$. 
Write $s:= \max_{S \in \mathscr{S}}|S|$ for the size of the largest
block in $\mathscr{S}$.  
Let $\vec{S} \in\mathscr{S}$ be a random block chosen according to the 
distribution $\psi$. Given $v\in V$, define
\[ \psi(v) = \Prob(v \in \vec{S}), \quad 
   \psi_{\partial}(v) = \Prob(v \in \partial \vec{S}).
\]
Our first parameter $\partial^+$ is
\begin{equation}
\label{partial-plus}
 \partial^+ = \partial^+(\mathscr{S}) =
  \max_{S\in\mathscr{S}} |\partial S|^{\min\{ |S|,\, |\partial S|\}}.
\end{equation}
Let $\psi_{\min} := \min_{v \in V} \psi(v)$ and 
define our second parameter $\Psi$ by
\begin{equation}
\label{Psi}
 \Psi = \Psi(\mathscr{S}, \psi) = \max_{v \in V} \frac{\psi_\partial(v)}
                                         {\psi(v)}.
\end{equation}
These first two parameters are in some sense less important than the third parameter since they are essentially used as crude estimates for quantities that we do not aim to control too precisely.

For the third parameter we require some terminology.
Given $A \subseteq V$ and $X\in \Omega$,  write $X|_A$ for the
configuration $X$ restricted to $A$.
Consider a configuration
$c\in [q]^S$.  
A colour used by $c$ is called \emph{free} with respect to $X,S$  if it does not appear in $X|_{\partial S}$.   
Write $f(X,S,c)$
 for the number of free colours in $c$ with respect to $X,S$.
For our third parameter, we first define for each positive integer $f$ 
\[
 \mu^+_{X,S,f}  =
   \max \left\{
    \frac{\mu_{X,S}(c)}{|S| - f} : c\in [q]^S,\,\, f(X,S,c)=f\right\},
\]
where the maximum over an empty set 
is defined to be zero.
We set 
\begin{equation}
\label{mu-plus}
 \mu^+ = \mu^+(\mathscr{S}) = \max_{S\in\mathscr{S}}\, \max_{X\in \Omega} \,  
    \max_{f=0,\ldots, |S|-1}\, 
 \mu^+_{X,S,f}.
\end{equation}
Although the definition of $\mu^+$ gives an a priori dependency on $q$, in all our 
applications on bounded-degree graphs we can bound $\mu^+$ independently 
of $q$  (see Proposition~\ref{pr:boundd}).
Hence we suppress this dependence in our notation. 

Let us sketch a very informal argument to show that block dynamics mixes rapidly roughly when $q \geq \lambda^{\mu^+}$; this will be formalised in the statement and proof of Theorem~\ref{le:block}.
Fix $X \in \Omega$ and $S \in \mathscr{S}$, where $|S|$ is typically thought of as a small number and $q$ a large number. We are interested in estimating the quantity $q^{|S|} / Z_{X,S}$, which, in the distribution $\phi_{X,S}$, is approximately the probability of choosing a \emph{free} configuration for $S$.  
A free configuration is one in which each vertex in $S$ receives a distinct free colour, so that $S$ is coloured with $|S|$ free colours in total. If this probability is close to $1$ for all choices of $X,S$ then, intuitively at least, one expects the block dynamics to mix rapidly.

To show $q^{|S|} / Z_{X,S}$ is close to $1$, we must show that the contribution of non-free configurations to $Z_{X,S}$ is relatively small (compared to $q^{|S|}$). Consider the contribution from configurations with a fixed number $f \leq |S|-1$ of free colours. There are approximately $q^f$ such configurations $c$, each contributing 
$\lambda^{\mu_{X,S}(c)} \leq \lambda^{(|S|-f)\mu^+_{X,S,f}}$ to $Z_{X,S}$, giving a total contribution of at most $q^f \lambda^{(|S|-f)\mu^+_{X,S,f}}$. Comparing to $q^{|S|}$ gives
\[
q^{|S|}/q^f \lambda^{(|S|-f)\mu^+_{X,S,f}} = [q \lambda^{-\mu^+_{X,S,f}}]^{|S|-f} \geq q \lambda^{-\mu^+_{X,S}}.
\]
This last expression is at least $1$ provided $q > \lambda^{\mu^+_{X,S,f}}$, and this inequality holds for all choices of $X,S,f$ if $q > \lambda^{\mu^+}$. From these crude calculations we expect rapid mixing of block dynamics roughly when $q > \lambda^{\mu^+}$.

The following theorem formalises the argument above, giving a sufficient condition on the number 
of colours for  $(\mathscr{S},\psi)$-block dynamics to be rapidly mixing.

\begin{theorem} \label{le:block}
Let $G=(V,E)$ be a connected graph and let 
$\mathscr{S}$ be a block system for $G$
such that $V\not\in\mathscr{S}$. 
Let $\psi$ be a distribution on $\mathscr{S}$ and fix $\lambda > 1$. If 
\[ q \geq (2s)^{s+1}\,  \partial^+\, \Psi\, \lambda^{\mu^+}\]
\emph{(}where parameters 
$s$,
$\partial^+$, $\Psi$ and $\mu^+$ are as defined above\emph{)}
then the $(\mathscr{S},\psi)$-block dynamics
$\mathcal{M}_{\mathrm{BD}} = 
  \mathcal{M}_{\mathrm{BD}}^{\mathscr{S},\psi}(G, \lambda, q)$
satisfies 
\[ \tau(\mathcal{M}_{\mathrm{BD}},\varepsilon) \leq 2\psi_{\min}^{-1}\log(n \varepsilon^{-1}). \]
\end{theorem}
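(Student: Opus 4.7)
The plan is to apply path coupling (Lemma~\ref{le:coup}) in the Hamming metric on $\Omega=[q]^V$. Fix $(A,B)\in\Lambda$ and let $u$ be the unique vertex where they disagree. I define the coupling $(A,B)\mapsto(A',B')$ by first drawing one common block $\vec{S}\sim\psi$, and then drawing $(c_A,c_B)\in[q]^{\vec S}\times[q]^{\vec S}$ from the optimal (maximal) coupling of $(\phi_{A,\vec S},\phi_{B,\vec S})$; finally set $A' := A^{(\vec S,c_A)}$ and $B' := B^{(\vec S,c_B)}$.

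Three cases on the position of $u$ relative to $\vec S$ drive the analysis: (i) if $u\notin \vec S\cup\partial\vec S$ then $\phi_{A,\vec S}=\phi_{B,\vec S}$, so $c_A=c_B$ under the maximal coupling and the new Hamming distance is $1$; (ii) if $u\in\partial\vec S$ the old disagreement at $u$ persists and a per-coordinate union bound bounds the contribution from $\vec S$ by $s\|\phi_{A,\vec S}-\phi_{B,\vec S}\|_{TV}$, for conditional expected distance at most $1+s\|\phi_{A,\vec S}-\phi_{B,\vec S}\|_{TV}$; (iii) if $u\in \vec S$ the same union bound shows the conditional expected distance is at most $s\|\phi_{A,\vec S}-\phi_{B,\vec S}\|_{TV}$ (and is $0$ when $c_A=c_B$). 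Averaging over $\vec S$ and using $\sum_{S:u\in\partial S}\psi(S)=\psi_{\partial}(u)\leq\Psi\psi(u)$ yields
\[
   \E(d(A',B')\mid (A,B))\leq 1-\psi(u)\Bigl(1-s(1+\Psi)\cdot\max_{S}\|\phi_{A,S}-\phi_{B,S}\|_{TV}\Bigr).
\]
Hence it suffices to establish that this maximum TV-distance (over $S$ with $u\in S\cup\partial S$) is at most $\tfrac{1}{2s(1+\Psi)}$: Lemma~\ref{le:coup} with contraction factor $\beta=1-\psi_{\min}/2$ then gives $\tau(\mathcal{M}_{\mathrm{BD}},\varepsilon)\leq 2\psi_{\min}^{-1}\log(n\varepsilon^{-1})$ as claimed.

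The heart of the proof is this TV-bound under the hypothesis $q\geq(2s)^{s+1}\partial^+\Psi\lambda^{\mu^+}$, which I would carry out essentially as sketched informally before the theorem statement. Set
\[
  F^* := \{\,c\in[q]^S : c\text{ takes $|S|$ distinct values, none appearing in }A|_{\partial S}\cup B|_{\partial S}\,\};
\]
then $\mu_{A,S}(c)=\mu_{B,S}(c)=0$ on $F^*$, so $\phi_{A,S}$ and $\phi_{B,S}$ both assign constant weight $1/Z_{\cdot,S}$ there. An elementary count gives $|F^*|\geq(q-2\partial^+-s)^{|S|}$. For $c\notin F^*$, stratify by the number $f\in\{0,\dots,|S|-1\}$ of free colours used: there are at most $\binom{q}{f}(f+|\partial S|)^{|S|}$ such $c$ (a crude surjection bound), each contributing weight at most $\lambda^{(|S|-f)\mu^+}$ to $Z_{X,S}$ by the definition of $\mu^+$. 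The geometric ratio of consecutive $f$-terms is controlled by $q^{-1}(|S|+|\partial S|)\lambda^{\mu^+}$, which the hypothesis on $q$ bounds appropriately. Summing the geometric series and comparing with $|F^*|$ gives $Z_{X,S}\leq|F^*|(1+\delta)$ with $\delta=O\!\bigl(\tfrac{1}{s(1+\Psi)}\bigr)$ for both $X\in\{A,B\}$; a short arithmetic step then converts this into $\|\phi_{A,S}-\phi_{B,S}\|_{TV}\leq\tfrac{1}{2s(1+\Psi)}$.

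The main obstacle is purely quantitative calibration: the combinatorial count introduces factors of order $s^s$, the coupling decomposition loses a factor of $s(1+\Psi)$, and the geometric summation loses another constant factor, so the precise constant $(2s)^{s+1}\partial^+\Psi$ appearing in the hypothesis must be exactly what is needed to absorb all of these losses. The only nontrivial subtlety is the exponent $\min(|S|,|\partial S|)$ in the definition of $\partial^+$: this should emerge naturally by splitting the estimate for the number of $c$ with $f(X,S,c)=f$ into the cases $|\partial S|\leq|S|$ and $|\partial S|>|S|$ and using the tighter bound in each, rather than uniformly bounding $|\partial S|^{|S|}$.
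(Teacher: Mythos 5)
Your proposal follows essentially the same route as the paper: path coupling on Hamming-adjacent pairs, a maximal per-block coupling, and a stratification of the configurations of $S$ by the number of free colours, with the hypothesis on $q$ absorbing all the combinatorial factors. One concrete calibration point would trip you up as written: your contraction estimate sets the target $\|\phi_{A,S}-\phi_{B,S}\|_{TV}\le\tfrac{1}{2s(1+\Psi)}$, but the counting (yours or the paper's) only delivers roughly $\tfrac{1}{2s\Psi}$, and the constant $(2s)^{s+1}\partial^+\Psi$ in the hypothesis is calibrated exactly to that weaker target with no slack (note $\Psi$ can be as small as $1/s$, so $1+\Psi$ is not absorbed by $\Psi$ up to a bounded factor). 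The spurious $(1+\Psi)$ disappears once you observe that for $u\in\vec{S}$ one has $A|_{\partial \vec{S}}=B|_{\partial \vec{S}}$, hence $\phi_{A,\vec{S}}=\phi_{B,\vec{S}}$ and that case contributes exactly $-\psi(u)$ with no error term; only blocks with $u\in\partial\vec{S}$ create new disagreements. Two smaller remarks: the paper avoids your $|F^*|$-versus-$q^{|S|}$ loss by bounding $\sum_{c}\min(\phi_A(c),\phi_B(c))\ge q^{|S|}/\max(Z_{A,S},Z_{B,S})$ over \emph{all} $c\in[q]^S$ (every configuration has numerator weight at least $1$), which also sidesteps your slip of writing $q-2\partial^+-s$ where the number of forbidden boundary colours is meant; and it recovers the exponent $\min\{|S|,|\partial S|\}$ exactly as you predict, by bounding the number of colourings of the non-free colour classes by $q_1^{\min\{q_1,|P|-|F|\}}$ with $q_1=|Q(X|_{\partial S})|$.
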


We remark that for the bound $q \geq (2s)^{s+1}\,  \partial^+\, \Psi\, \lambda^{\mu^+}$ in Theorem~\ref{le:block}, we expect the constant multiplicative factor $(2s)^{s+1}  \partial^+ \Psi$ can be improved; however we have not attempted to do this in order to keep our treatment simple.

\begin{proof}
We define a coupling $(A,B)\mapsto (A',B')$ for $\mathcal{M}_{\mathrm{BD}}$ on $\Lambda$ as follows. Given $(A,B)\in\Lambda$, let $u=u(A,B)$ be the (unique) vertex which is coloured differently by $A$ and $B$. We choose a random $\vec{S} \in \mathscr{S}$ using the distribution $\psi$, 
and given that $\vec{S}=S$,
we obtain $A'$ (respectively, $B'$) by updating the colouring of $S$ in $A$ (respectively, $B$) according to the distribution $\phi_A := \phi_{A,S}$ (respectively, 
$\phi_B := \phi_{B,S}$); this will give a coupling since $A$ and $B$ are updated using the transition procedure of $\mathcal{M}_{\mathrm{BD}}$. 
 We choose the joint distribution on $(\phi_A, \phi_B)$ so as to maximise the probability that 
$A'|_S = B'|_S$. Call this maximised probability $p(S,A,B)$. 
Observe that $p(S,A,B) = 1$ if $u \not\in \partial S$ (because $A$ and $B$ assign the same colours to $\partial S$,
 so $\phi_A$ and $\phi_B$ are the same distribution). For the case that $u \in \partial S$, we uniformly bound $p(S,A,B)$ by setting 
\[ p := \min_{(A,B)\in\Lambda}\, \min_{S\in \mathscr{S} : u \in \partial S} 
  \, p(S,A,B).
\]
(Let $p=1$ if, for all $S\in\mathscr{S}$, $u\not\in\partial S$.)
Now for all $S\in\mathscr{S}$ with $u\in\partial S$ we have
\begin{align}\label{eq:1}
 p(S,A,B) = \sum_{c \in [q]^S} \min(\phi_A(c), \phi_B(c)) &\geq  \sum_{c \in [q]^S} \frac{1}{\max(Z_{A,S}, Z_{B,S})} \nonumber \\
&= \frac{q^{|S|}}{\max(Z_{A,S}, Z_{B,S})}.
\end{align}
We claim that
\begin{equation}
\label{fred}
\frac{q^{|S|}}{Z_{X,S}} \geq 1 - \frac{1}{2s\Psi}
\end{equation}
for all $X\in\Omega$ and $S\in\mathscr{S}$.  If  (\ref{fred})
holds then substituting into (\ref{eq:1}) gives
\[ p\geq 1 - \frac{1}{2s\Psi},\]
which in turn implies that
\begin{align*}
\E\left(d(A',B') - 1\big| (A,B)\right) 
&= -\Prob(u \in \vec{S}) + \sum_{S \in \mathscr{S}:\, u \in \partial S}\Prob(\vec{S}=S)|S|(1-p(S,A,B)) \\
&\leq 
     \, {} - \psi(u) + s\, \psi_{\partial}(u)\, (1-p) \\
 &= -\psi(u) \left( 1 - \frac{s\psi_{\partial}(u)}{\psi(u)}(1-p) \right)\\
&\leq -\psi_{\min} \left( 1 - s \Psi\, (1-p) \right) \\
&\leq -\frac{\psi_{\min}}{2}.
\end{align*}
The theorem follows from this, by Lemma~\ref{le:coup}.
So it remains to establish (\ref{fred}).

Fix $X\in\Omega$ and $S\in\mathscr{S}$.  
For any configuration $c$, write $Q(c)$ for the set of colours used by $c$.  
Given a configuration
$c\in [q]^S$, the colour classes of $c$ define a partition $P$ of
$S$ into (unordered) nonempty parts. 
(Here, we think of a partition $P$ of $S$ as a set of nonempty parts $\{ P_1, \ldots, P_t\}$ where $P_i \subseteq S$ are disjoint and $\cup_{A \in P}A=S$.) 
 Let $F\subseteq P$ be the
set of colour classes corresponding to colours which are free with
respect to $X,S$ (in the given configuration $c$). 

Conversely, we can start from a partition $P$ of $S$ 
and a subset $F$ of $P$.  Given a set of $|P|$ colours, we can
form a configuration of $S$ by assigning a distinct colour to each part of $P$
such that the colour assigned to $A\in P$ belongs to 
$[q]\setminus Q(X|_{\partial S})$ if and only if $A\in F$.  Any configuration
which
can be formed in this way is called a $(P,F)$-\emph{configuration} of $S$.
(Such a configuration is uniquely determined by $(P,F)$ and the map $P\to [q]$
which performs the assignment of colours.)

Let $n(S,P,F)$ be the number of $(P,F)$-configurations of $S$.
By definition of $\mu^+$ we have
\[ Z_{X,S} = \sum_{c\in [q]^S} \lambda^{\mu_{X,S}(c)}
      \leq q^{|S|} + 
     \sum_{(P,F) : |F|\neq |S|} \, n(S,P,F)\, \lambda^{(|S|-|F|)\mu^+}.
\]
The first term corresponds to $P=F$ with $|P|=|S|$, arising from
a configuration $c\in [q]^S$ in which every vertex in $S$ receives
a distinct free colour. 
(These were called ``free configurations'' in the sketch proof.)
We use $q^{|S|}$ as an upper bound for the
number of such configurations.  For all other values of $(S,P,F)$ we
have the following crude bound:
\[ n(S,P,F) \leq q_1^{\min\{q_1, |P|-|F|\}}\, (q-q_1)^{|F|} 
      \leq |\partial S|^{\min\{ |S|, |\partial S|\}} \, q^{|F|}
       \leq \partial^+\, q^{|F|}, 
\]
where $q_1 = |Q(X|_{\partial S})|$ and we recall that all parts must be coloured differently.
Substituting gives
\[
Z_{X,S} \leq q^{|S|} + \sum_{(P,F):\, |F|\neq |S|} \,
         \partial^+\, q^{|F|}\,  \lambda^{(|S|-|F|)\mu^+} .
\]
Now applying the bound on $q$ from the theorem statement gives
\begin{align}
\frac{Z_{X,S}}{q^{|S|}} &\leq 1 + \sum_{(P,F) : |F|\neq |S|} \,
               \partial^+\, q^{|F|-|S|} \lambda^{(|S|-|F|)\mu^+}\notag\\
  &\leq 1 + \sum_{(P,F) : |F|\neq |S|} \partial^+\,
                          ((2s)^{s+1}\,  \partial^+\, \Psi\, 
       \, \lambda^{\mu^+})^{|F|-|S|}\, \lambda^{(|S|-|F|)\mu^+}\notag\\
 & \leq  1 +  \sum_{(P,F):\, |F| \not= |S|} \,
        ((2s)^{s+1}\, \Psi )^{|F|-|S|}.
  \label{interrupted}
\end{align}
The number of terms in the above sum is at most $(2|S|)^{|S|}$, since there
are at most $|S|^{|S|}$ choices of the partition $P$ and at
most $2^{|P|} \leq 2^{|S|}$ choices of $F$.  

Next, note that
\[ 
 \Psi  = \max_{v \in V} \frac{\psi_\partial(v)}
                                         {\psi(v)}
 \geq \E_\rho\left(\frac{\psi_\partial(v)}{\psi(v)}\right)
        =  \sum_{v\in V} \rho(v)\, \frac{\psi_\partial(v)}{\psi(v)}
\]
for any probability distribution $\rho$ on $V$.  In particular,
we can take $\rho(v) = \psi(v)/N$, where
\[ N = \sum_{v\in V}\psi(v) =  \sum_{S\in\mathscr{S}}\, \psi(S)\, |S|
               \leq s.
\]
With this choice of $\rho$, we obtain the bound
\begin{align*}
\Psi &\geq N^{-1}\, \sum_{v\in V} \psi_\partial(v)
         = N^{-1}\, \sum_{S\in S}\, \psi(S)\, |\partial S|
     \geq s^{-1}
\end{align*}
since $\partial S$ is nonempty for all $S\in\mathscr{S}$, 
as $G$ is connected and $V\not\in\mathscr{S}$.  
It follows that $(2s)^{s+1}\, \Psi > 1$,  
and combining this with (\ref{interrupted}) gives
\begin{align*}
\frac{Z_{X,S}}{q^{|S|}} 
  & \leq 1 + \frac{1}{2s\Psi}.
\end{align*}
Inverting this and using the identity $(1+y)^{-1}\geq 1-y$
establishes (\ref{fred}), completing the proof.
\end{proof}

\subsection{Block dynamics for specific examples}
\label{se:grid}

In this subsection we illustrate how one can use Theorem~\ref{le:block} to obtain rapid mixing results for block dynamics on graphs of bounded degree. 
In the next subsection, we shall see how these results for block dynamics can be translated into rapid mixing results for Glauber dynamics.  

In order to build some intuition,
 we begin by investigating the range of possible values of the parameter
$\mu^+$. 
We will need the following notation: given $T\subseteq T' \subset V$, we write $\vol(T,T')$ for the set of edges of $G$ that are contained in $T'$ and have at least one endvertex in $T$.

\begin{proposition}
\label{pr:boundd}
Let $G=(V,E)$ be a graph of maximum degree $\Delta$ and let
$\mathscr{S}$ be any block system for $G$. Then 
\[ \mu^+ = \mu^+(\mathscr{S}) \leq \Delta.\]
If in addition $G$ is regular then
\[ \frac{\Delta}{2}\leq \mu^+(\mathscr{S}) \leq \Delta.\]
\end{proposition}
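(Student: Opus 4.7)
Proof plan for Proposition~\ref{pr:boundd}.

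The plan is to fix $S\in\mathscr{S}$, $X\in\Omega$, $f\in\{0,\ldots,|S|-1\}$, and $c\in [q]^S$ with $f(X,S,c)=f$, and bound $\mu_{X,S}(c)$ by analysing each colour class separately. For a colour $\alpha$ used by $c$, let $F_\alpha = \{v\in S : c(v)=\alpha\}$ be its colour class in $S$. Since monochromatic edges lie within a single colour class, I will split the count of $\mu_{X,S}(c)$ according to whether the colour is free or not. If $\alpha$ is free, then by definition $\alpha$ does not appear in $X|_{\partial S}$, so every mono edge of colour $\alpha$ incident to $S$ must in fact lie inside $F_\alpha$; hence its contribution is at most the number of edges in the induced subgraph $G[F_\alpha]$, which is at most $\Delta|F_\alpha|/2 \le \Delta(|F_\alpha|-1)$ whenever $|F_\alpha|\ge 2$ (and $0$ when $|F_\alpha|=1$). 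If $\alpha$ is not free, the contribution is bounded crudely by $\sum_{v\in F_\alpha}d_G(v)\le \Delta|F_\alpha|$.

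Writing $V_{\text{free}}\subseteq S$ for the vertices with a free colour and $T=S\setminus V_{\text{free}}$, the $f$ free colour classes partition $V_{\text{free}}$, so
\[
\sum_{\alpha\text{ free}}(|F_\alpha|-1) = |V_{\text{free}}|-f,
\qquad
\sum_{\alpha\text{ non-free}}|F_\alpha| = |T|.
\]
Summing the two bounds gives
\[
\mu_{X,S}(c)\le \Delta(|V_{\text{free}}|-f)+\Delta|T| = \Delta(|S|-f),
\]
so $\mu^+_{X,S,f}\le \Delta$, and taking maxima yields $\mu^+(\mathscr{S})\le \Delta$.

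For the lower bound when $G$ is $\Delta$-regular, I will exhibit a single bad triple $(S,X,c)$. Pick any $S\in\mathscr{S}$ and any colour $\alpha\in[q]$; set $c(v)=\alpha$ for all $v\in S$, and set $X(u)=\alpha$ for every $u\in\partial S$ (and arbitrary elsewhere). Since $\alpha$ appears in $X|_{\partial S}$, no colour used by $c$ is free, so $f(X,S,c)=0$. Every edge of $G$ incident to $S$ is then monochromatic in $X^{(S,c)}$. Writing $e_1$ for the number of edges inside $S$ and $e_2$ for the number of edges between $S$ and $\partial S$, regularity gives $2e_1+e_2=\Delta|S|$, hence
\[
\mu_{X,S}(c) = e_1+e_2 = \Delta|S|-e_1 \ge \Delta|S|-\tfrac{\Delta|S|}{2} = \tfrac{\Delta|S|}{2}.
\]
Dividing by $|S|-f=|S|$ yields $\mu^+_{X,S,0}\ge \Delta/2$, so $\mu^+(\mathscr{S})\ge \Delta/2$.

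The only mildly subtle step is the bound $\Delta|F_\alpha|/2\le \Delta(|F_\alpha|-1)$ for free colour classes, which fails at $|F_\alpha|=1$ but holds trivially there since a singleton contributes no internal edges; this is precisely the slack that lets the $f$ free colours "save'' $\Delta f$ in the upper bound. Everything else is routine degree counting.
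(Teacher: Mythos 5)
Your proof is correct and follows essentially the same approach as the paper: the upper bound rests on the observation that a free colour class contributes only its internal edges (so singletons contribute nothing and larger classes at most $\Delta(|F_\alpha|-1)$, saving $\Delta$ per free class), and the lower bound uses the monochromatic configuration on $S$ with the same colour on $\partial S$. The paper organises the same count via the global sets $A_F$ and $A'_F$ rather than class by class, but the bookkeeping is equivalent.
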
 

\begin{proof}
First fix $X\in\Omega$ and $S\in \mathscr{S}$.
Given a configuration $c\in [q]^{S}$, let $P$ be 
the partition of $S$ defined by the nonempty colour classes of $c$.
Define $F \subseteq P$ to be the set of colour classes of $c$ which
correspond to a colour which does not appear on $X|_{\partial S}$.
Let 
\[ 
   A_F = \bigcup_{A \in F} \, A
\]  
and 
\[ 
   A'_F = \bigcup_{A \in F\, :\, |A|\geq 2} \, A.
\]  
Since $G$ has maximum degree $\Delta$, a trivial upper bound on $\mu_{X,S}(c)$ is $\Delta|S|$. 
But note that if a monochromatic edge $e$ is incident to a 
 vertex in $A_F$, then $e$ must have both endpoints in the same part $A$ of $F$. Thus edges incident to vertices in $A_F \setminus A'_F$ do not contribute to $\mu_{X,S}(c)$
 and monochromatic edges incident to vertices in $A'_F$ are double counted in the trivial bound. Hence
\begin{align*}
\mu_{X,S}(c)
         & \leq \Delta(|S|-|A_F|)+\frac{\Delta}{2}\,|A'_F| \\
         & = \Delta |S| - \Delta\left(|A_F|-\frac{|A'_F|}{2}\right) \\
         &\leq  \Delta (|S| - |F|).
\end{align*}
Hence the upper bound holds, by definition of $\mu^+$.

Next, suppose that $G$ is $\Delta$-regular with $X \in \Omega$ and $S\in\mathscr{S}$.
Consider any configuration $c\in [q]^S$ which assigns a single
colour to all of $S$, 
and where this is the only colour used in $X|_{\partial S}$.
Then
\[ \mu^+ \geq \frac{\mu_{X,S}(c)}{|S|-|F|} = 
\frac{|\vol(S, S \cup \partial S)|}{|S|} \geq \frac{\Delta}{2},
\]
where the last inequality follows because $G$ is regular of degree $\Delta$.

\end{proof}

Next we show how to improve the upper bound on $\mu^+$ given in Proposition~\ref{pr:boundd}  by choosing our block system more carefully. 

Let $k\geq 2$ be an integer and let $G=(V,E)$ be a graph with
$n$ vertices and with maximum degree $\Delta$.
Let 
\[ \mathscr{S} = \{ S_v : v\in V\} \]
where for all $v\in V$ the set $S_v\subseteq V$ satisfies
$ v\in S_v,\,\,\, |S_v|=k$ and $G[S_v]$ is connected. Then $\mathscr{S}$ is called a $k$-\emph{block system} for $G$.
Let $\psi$ be the uniform distribution over $\mathscr{S}$.
To apply Theorem~\ref{le:block} to the $(\mathscr{S},\psi)$-block
dynamics we will calculate upper bounds on 
the parameters $\partial^+$, $\Psi$ and $\mu^+$.

Clearly $|\partial S|\leq \Delta k$ and $\min\{ k,|\partial S|\}\leq k$
for all $S\in\mathscr{S}$.  Hence
\begin{equation}
\label{partial-plus-k}
\partial^+ \leq (\Delta k)^k.
\end{equation}
To compute $\Psi$, observe first that $\psi(v) \geq 1/n$ for all $v \in V$ as there are $n$ blocks and each vertex belongs to at least one block. 
Next, observe that $\psi_\partial(v) \leq \frac{\Delta^k}{n}$:
indeed if $v\in\partial S_u$ for some $u \in V$ then $u$
is at distance at most $k$ from $v$ and  since
and there are at most $\Delta^k$ vertices (excluding $v$) at distance at most $k$ from $v$ in $G$, there are at most $\Delta^k$ out of $n$ blocks containing $u$ in their boundary.
Therefore
\begin{equation}
\label{psi-k}
 \Psi = \max_{v\in V} \, \frac{\psi_\partial(v)}{\psi(v)} \leq \Delta^k.
\end{equation}
In order to calculate an upper bound on $\mu^+$ we first prove a
preliminary result.  For $T\subseteq T'\subset V$, recall the notation 
$\vol(T,T')$ introduced above
Proposition~\ref{pr:boundd}, and note that $\vol(T,T)$ is just the set of 
edges inside $T$.  

For any two sets $A,B$, we write $\delta_{A,B}$ for the indicator function that $A=B$, that is $\delta_{A,B} =1$ if $A=B$ and $\delta_{A,B} =0$  otherwise.  

\begin{proposition}
\label{pr:vol}
Let $H=(V,E)$ be a connected graph and let $U \subseteq V$. Then 
\[ |\vol(U,V)| \geq |U| - \delta_{U,V}.\]
\end{proposition}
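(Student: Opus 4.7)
The plan is to split on whether $U$ equals $V$ or is a proper subset, and in each case extract the required incident edges from a suitably rooted spanning tree of $H$.

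First I would reduce the problem to finding an injective assignment from $U$ into the edge set which sends each vertex $u$ to an edge incident to $u$; the image of such an assignment is automatically contained in $\vol(U,V)$ and has size $|U|$. The source of the edges will be a spanning tree $T$ of $H$, which exists because $H$ is connected.

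In the case $U \subsetneq V$, I would pick a root $r \in V \setminus U$ and root $T$ at $r$. Every vertex $u \in U$ is then a non-root vertex of $T$ and has a unique parent edge $e_u \in E(T)$, which is incident to $u$. The map $u \mapsto e_u$ is injective because distinct vertices have distinct parent edges (each edge of $T$ is the parent edge of exactly one endpoint). This produces $|U|$ distinct edges in $\vol(U,V)$, giving the required $|\vol(U,V)| \geq |U| = |U| - \delta_{U,V}$.

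In the case $U = V$, I would root $T$ at any vertex $r \in V$. Then $T$ has exactly $|V|-1$ edges, all of which are incident to $V=U$, so $|\vol(U,V)| \geq |V|-1 = |U| - 1 = |U| - \delta_{U,V}$. No step here poses any real difficulty; the only mild subtlety is remembering that when $U = V$ one cannot avoid the loss of one edge, which is exactly what the $\delta_{U,V}$ term accounts for, and this is why the root must be chosen outside $U$ in the other case.
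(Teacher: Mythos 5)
Your proof is correct. It takes a slightly different route from the paper's: the paper first reduces to the case where $H$ is a tree, dispatches $U=V$ directly, and then for $U\neq V$ decomposes $H[U]$ into its connected components $C_1,\dots,C_r$, arguing that the sets $\vol(C_i,V)$ are pairwise disjoint and that each contributes at least $|C_i|$ edges (the $|C_i|-1$ internal tree edges plus at least one edge leaving $C_i$). Your argument instead fixes a spanning tree, roots it at a vertex outside $U$ when $U\subsetneq V$, and exhibits an explicit injection $u\mapsto e_u$ from $U$ into $\vol(U,V)$ via parent edges; the injectivity claim is right, since an edge of a rooted tree is the parent edge of its child endpoint only. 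Your version is arguably cleaner, as it replaces the component decomposition and the disjointness check with a single injection, and it makes transparent why the correction term $\delta_{U,V}$ appears: exactly when $U=V$ is there no admissible root outside $U$, and one falls back on the $|V|-1$ edges of the spanning tree. Both proofs are elementary and of comparable length; there is no gap in yours.
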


\begin{proof}
It is sufficient to prove the statement for $H$ a tree. The statement is clear if $U=V$.  Now suppose that $U\neq V$ and consider the components $C_1, \ldots, C_r$ of $H[U]$. Then $\vol(C_i, V)$ has at least $|C_i|$ edges and is disjoint from $\vol(C_j,V)$ for all $j \not= i$. Thus 
\[
 |\vol(U,V)| = \sum_{i=1}^r |\vol(C_i,V)| \geq \sum_{i=1}^r|C_i| = |U|.
\]
\end{proof}

Next we give an upper bound on the parameter $\mu^+$ for $k$-block
systems.  For $k\geq 2$ this bound is a slight improvement on the
upper bound given in Proposition~\ref{pr:boundd}.

\begin{lemma} 
\label{mu-bound-k}
Let $G=(V,E)$ be a connected graph with  $n$ vertices and maximum degree $\Delta$.
Fix an integer $k\in \{ 2,\ldots, n-1\}$ and let $\mathscr{S}$ be any 
$k$-block system for $G$.
Then 
\[ \mu^+ = \mu^+(\mathscr{S}) \leq \Delta - 1 + \nfrac{1}{k}.
\]
\end{lemma}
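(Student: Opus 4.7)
The plan is to show $\mu_{X,S}(c) \leq (\Delta - 1 + 1/k)(k - f)$ for every $X \in \Omega$, $S \in \mathscr{S}$ and $c \in [q]^S$ with $f := f(X,S,c) < k$, from which $\mu^+(\mathscr{S}) \leq \Delta - 1 + 1/k$ follows by definition. I will write the colour classes of $c$ as $A_1,\ldots,A_T$ with $A_1,\ldots,A_f$ the free ones, set $a_t = |A_t|$, $A_F = \bigcup_{t\leq f}A_t$, $a = |A_F|$, and $S' = S \setminus A_F$. Every monochromatic edge counted by $\mu_{X,S}(c)$ either lies inside a single colour class in $G[S]$ or joins $S$ to $\partial S$; the latter kind is necessarily incident to a non-free class, since free colours do not appear in $X|_{\partial S}$. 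Bounding the monochromatic edges inside non-free classes by all edges of $G[S']$ then gives
\[
\mu_{X,S}(c) \;\leq\; \sum_{t=1}^{f} e(G[A_t]) \;+\; e(G[S']) \;+\; e(S',\partial S).
\]

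The first ingredient will be to control the last two terms when $a \geq 1$. Since $S' \neq S$, Proposition~\ref{pr:vol} applied to $G[S]$ with $U = S'$ gives $e(G[S']) + e(S', A_F) \geq k - a$, and the elementary degree sum $2\,e(G[S']) + e(S', A_F) + e(S', \partial S) \leq \Delta(k - a)$ combines with this to yield $e(G[S']) + e(S', \partial S) \leq (\Delta - 1)(k - a)$.

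The second ingredient is a per-class bound $e(G[A_t]) \leq (\Delta - 1)(a_t - 1)$ for each free class, which uses $A_t \subseteq S \subsetneq V$. Proposition~\ref{pr:vol} applied to $G$ with $U = A_t$ gives $e(G[A_t]) + e(A_t, V \setminus A_t) \geq a_t$, while the degree sum gives $2\,e(G[A_t]) + e(A_t, V \setminus A_t) \leq \Delta a_t$. If $e(G[A_t]) \leq a_t - 1$ the inequality is immediate from $\Delta \geq 2$; otherwise connectedness of $G$ together with $A_t \neq V$ forces $e(A_t, V \setminus A_t) \geq 1$, so $e(G[A_t]) \leq (\Delta a_t - 1)/2$, and one verifies that $(\Delta a_t - 1)/2 \leq (\Delta - 1)(a_t - 1)$ precisely when $(\Delta - 2)(a_t - 2) \geq 1$. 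The residual cases will be handled separately: the bound is trivial for $a_t \leq 2$ because then $e(G[A_t]) \leq 1 \leq \Delta - 1$, and for $\Delta = 2$ the connectedness of $G$ together with $A_t \subsetneq V$ forces $G[A_t]$ to be a disjoint union of paths, so $e(G[A_t]) \leq a_t - 1$.

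Summing the per-class bound over the $f$ free classes gives $\sum_{t\leq f} e(G[A_t]) \leq (\Delta-1)(a-f)$. Adding this to the $S'$-estimate yields, for $a \geq 1$, $\mu_{X,S}(c) \leq (\Delta - 1)(a - f) + (\Delta - 1)(k - a) = (\Delta - 1)(k - f)$, which is already strictly sharper than the target. When $a = 0$ (equivalently $f = 0$) the free-class sum vanishes, $S' = S$, and the decomposition reduces to $\mu_{X,S}(c) \leq e(G[S]) + e(S, \partial S) \leq \Delta k - e(G[S]) \leq (\Delta - 1)k + 1 = (\Delta - 1 + 1/k)k$, using $e(G[S]) \geq k - 1$ from the connectedness of $G[S]$. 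In both regimes $\mu_{X,S}(c) \leq (\Delta - 1 + 1/k)(k - f)$, as required.

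The main obstacle I foresee is tightening the per-class bound to $e(G[A_t]) \leq (\Delta-1)(a_t-1)$: the straight combination of Proposition~\ref{pr:vol} with the degree sum only produces the weaker $e(G[A_t]) \leq (\Delta-1)a_t$, so the case split above is essential, and the $\Delta = 2$ boundary in particular forces a short structural argument (paths and cycles are the only connected graphs of maximum degree two) rather than a purely numerical estimate.
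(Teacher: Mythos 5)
Your proof is correct and follows essentially the same route as the paper's: the same decomposition of $\mu_{X,S}(c)$ into edges inside free classes plus edges incident to the non-free remainder, Proposition~\ref{pr:vol} applied to the connected block to bound the latter by $(\Delta-1)(k-a)$, the per-class bound $e(G[A_t])\leq(\Delta-1)(a_t-1)$ established by the same case analysis on $a_t$ and $\Delta$, and the observation that the extra $+1$ (hence the $1/k$) arises only when there are no free classes.
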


\begin{proof}
Fix $X\in\Omega = [q]^V$ and $v\in V$.
Given a configuration $c\in [q]^{S_v}$, let $P$ be
the partition of $S_v$ defined by the nonempty colour classes of $c$.
Define $F \subseteq P$ to be the set of colour classes of $c$ which
correspond to a colour which does not appear on $X|_{\partial S_v}$.
 
Let
\[ A_F = \bigcup_{A \in F}\, A, \qquad 
    A_{\overline{F}} = \bigcup_{A \not\in F}\, A\]
and define $a_F = |A_F|$ and $a_{\overline{F}} = |A_{\overline{F}}|$. Writing
$\mu_{X,v}=\mu_{X,S_v}$ for ease of notation, we have
\begin{align}
\mu_{X,v}(c) &\leq  \left( \sum_{A\in F} |\vol(A,A) | \right) +  
   \left(\sum_{A \not \in F} | \vol(A,\, A \cup \partial S_v) |\right) \nonumber \\
&\leq \left(\sum_{A \in F} |\vol(A,A) |\right) + 
   |\vol(A_{\overline{F}}, A_{\overline{F}} \cup \partial S_v)| \label{eq:d1}. 
\end{align}
Observe that 
\begin{equation}
|\vol( A_{\overline{F}}, A_{\overline{F}}\cup \partial S_v ) |
\leq \Delta a_{\overline{F}} - |\vol(A_{\overline{F}},S_v) |
\leq (\Delta - 1) a_{\overline{F}} + \delta_{F,\emptyset}, \label{eq:d2}
\end{equation}
where the last inequality follows by Proposition~\ref{pr:vol} and noting that 
$\delta_{A_{\overline{F}},S_v} = \delta_{F, \emptyset}$.

Next we claim that for $A\in P$ we have 
\begin{equation}
\label{clear}
|\vol(A,A)| \leq (|A|-1)(\Delta - 1).
\end{equation} 
To ease notation, write $a=|A|$.
If $a=1,2$ then (\ref{clear}) clearly holds
(noting that $\Delta \geq 2$ since $G$ is connected). Next, (\ref{clear})
 holds for $\Delta = 2$ since we have $|\vol(A,A)| \leq a - 1$, where the 
``$-1$'' appears because there is at least one edge leaving $A$ (since $G$ is connected). 
If $a=3$ and $\Delta \geq 3$ then  $|\vol(A,A)| \leq 3$ and 
$(a-1)(\Delta - 1) \geq 4$, so (\ref{clear}) holds. For $a \geq 4$ and 
$\Delta \geq 3$,  we note that $|\vol(A,A)| \leq \Delta a /2$ and check that 
$\Delta a /2 \leq (a-1)(\Delta-1)$ holds in this case.
This proves the claim, establishing (\ref{clear}).
 
Therefore
\begin{equation}
\sum_{A \in F} |\vol(A,A)| \leq 
\sum_{A \in F}(|A| - 1)(\Delta - 1) = (a_F - |F|)(\Delta - 1). 
\label{eq:d3}
\end{equation}
Combining (\ref{eq:d1}), (\ref{eq:d2}) , and (\ref{eq:d3}), we have
\begin{align*}
\mu_{X,v}(c) &\leq (a_F - |F|)(\Delta - 1) + 
   (\Delta - 1)a_{\overline{F}} + \delta_{F,\emptyset} \\
    &= (\Delta - 1)(k - |F|) + \delta_{F,\emptyset}.
\end{align*}
Assuming that $|F|\neq k$, dividing by $k - |F|$ gives the ratio $\Delta-1$ if $F\neq\emptyset$
and gives $\Delta-1+ k^{-1}$ if $F=\emptyset$.
This completes the proof.
\end{proof}

Substituting (\ref{partial-plus-k}), (\ref{psi-k}) and the result of
Lemma~\ref{mu-bound-k} into Theorem~\ref{le:block} gives the following,
noting that $\psi_{\min}\geq \nfrac{1}{n}$.

\begin{theorem}
\label{pr:3items}
Let $G=(V,E)$ be a connected graph with $n$ vertices and
maximum degree $\Delta$.
Fix an integer $k\in \{ 2,\ldots, n-1\}$ 
and let $\mathscr{S}$ be a $k$-block system for
$G$.  Let $\psi$ be the uniform distribution on $\mathscr{S}$. 
Fix $\lambda > 1$.
If 
\[ q \geq 2^{k+1}\, \Delta^{2k}\,  k^{2k+1}\, \lambda^{\Delta - 1 + k^{-1}}\]
then $\tau(\mathcal{M}_{\mathrm{BD}}, \varepsilon) \leq 2n\log(n \varepsilon^{-1})$.
\end{theorem}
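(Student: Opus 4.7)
The proof is essentially an assembly: Theorem~\ref{pr:3items} follows by plugging into Theorem~\ref{le:block} the bounds on $\partial^+$, $\Psi$, $\mu^+$ and $\psi_{\min}$ that have already been established in this subsection. I would organise it in the order: collect parameter bounds, check the hypothesis of Theorem~\ref{le:block}, then read off the mixing time.

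First I would note that for a $k$-block system we have $s = \max_{S\in\mathscr{S}}|S| = k$. The bound $\partial^+ \leq (\Delta k)^k$ is exactly equation~(\ref{partial-plus-k}), obtained from $|\partial S|\leq \Delta k$ and $\min\{k,|\partial S|\}\leq k$. The bound $\Psi \leq \Delta^k$ is equation~(\ref{psi-k}); here one uses that $\psi$ is uniform on $\mathscr{S}$ of size $n$, so $\psi(v)\geq 1/n$ for every $v$, and a vertex $v$ lies in $\partial S_u$ only if $u$ is within distance $k$ of $v$, giving $\psi_\partial(v)\leq \Delta^k/n$. In particular, $\psi_{\min}\geq 1/n$. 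Lemma~\ref{mu-bound-k} supplies $\mu^+\leq \Delta-1+1/k$.

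Next I would verify the hypothesis of Theorem~\ref{le:block}. Substituting the above into $(2s)^{s+1}\,\partial^+\,\Psi\,\lambda^{\mu^+}$ gives
\[
(2k)^{k+1}\,(\Delta k)^k\,\Delta^k\,\lambda^{\Delta-1+1/k}
\;=\;2^{k+1}\,\Delta^{2k}\,k^{2k+1}\,\lambda^{\Delta-1+1/k},
\]
which is precisely the quantity assumed to be at most $q$ in the hypothesis of Theorem~\ref{pr:3items}. Hence the hypothesis of Theorem~\ref{le:block} is satisfied for the $(\mathscr{S},\psi)$-block dynamics. Note that the hypothesis $V\notin\mathscr{S}$ of Theorem~\ref{le:block} is automatic since every block has size $k\leq n-1$, and $G$ is connected by assumption.

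Finally, Theorem~\ref{le:block} yields
\[
\tau(\mathcal{M}_{\mathrm{BD}},\varepsilon)\;\leq\;2\,\psi_{\min}^{-1}\log(n\varepsilon^{-1})\;\leq\;2n\log(n\varepsilon^{-1}),
\]
using $\psi_{\min}\geq 1/n$, which is exactly the conclusion of the theorem. There is no real obstacle here: all the substantive work was done in Lemma~\ref{mu-bound-k} and in deriving (\ref{partial-plus-k}) and (\ref{psi-k}); the only thing to check carefully is that the arithmetic of the substitution produces the stated bound $2^{k+1}\Delta^{2k}k^{2k+1}\lambda^{\Delta-1+k^{-1}}$, which it does.
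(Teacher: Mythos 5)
Your proposal is correct and matches the paper's proof exactly: the paper also obtains Theorem~\ref{pr:3items} by substituting the bounds (\ref{partial-plus-k}), (\ref{psi-k}) and Lemma~\ref{mu-bound-k} into Theorem~\ref{le:block}, noting $\psi_{\min}\geq \nfrac{1}{n}$. Your arithmetic check $(2k)^{k+1}(\Delta k)^k\Delta^k = 2^{k+1}\Delta^{2k}k^{2k+1}$ and the observation that $\lambda>1$ makes the upper bound on $\mu^+$ usable in the exponent are both sound.
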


To further illustrate the use of Theorem~\ref{le:block} we apply
it to the grid.  Although our results are not as sharp as those
discussed in \cite{ull}, using the structure of the grid we are able to prove
an upper bound on $\mu^+$ which is close to the lower bound
given in Proposition~\ref{pr:boundd}. (See Lemma~\ref{mu-bound-grid} below.)

For convenience, rather than considering the $L \times L$ 
grid, we consider the toroidal $L$-grid $G=(V,E)$, where $V = (\field{Z}/L \field{Z})^2$, and $(a,b)(c,d) \in E$ if and only if, in $\field{Z}/L\field{Z}$, 
\[ \text{either} \quad (a-c = \pm 1 \, \text{ and } \, b-d=0) 
\quad \text{or} \quad (b-d = \pm 1 \, \text{ and } \, a-c=0).\] 
Note that the toroidal $L$-grid has $n:=L^2$ vertices. The arguments below can be adapted to higher dimensions and to graphs with different grid topologies provided that the graph is locally a grid.

Let $\mathscr{S}$ be the set of all $r\times r$ subgrids of $G$,
where $r\leq L-2$.  Then $\mathscr{S}$ is a $r^2$-block system.
Let $\psi$ be the uniform distribution on $\mathscr{S}$.
To apply Theorem~\ref{le:block} we must calculate upper bounds on the
parameters.

Firstly, note that 
\begin{equation}
\label{partial-plus-grid}
 \partial^+ = (4r)^{4r}
\end{equation}
since $|\partial S| = 4r$ for all $S\in\mathscr{S}$.
Next, for $v\in V$ we have
$\psi(v) = r^2/L^2$ and $\psi_{\partial}(v) = 4r/L^2$,
and so
\begin{equation}
\label{psi-grid}
\Psi = \frac{4}{r}.
\end{equation}
In order to obtain a tighter bound on $\mu^+$ we need more information
about expansion properties of the grid.
If $U$, $W$ are disjoint sets of vertices, we write $E(U,W)$ for the set of edges
with one endvertex in $U$ and one endvertex in $W$.

\begin{lemma} \label{expansion}
Let $G=(V,E)$ be an $L\times L$ grid and let $S \subseteq V$ be the vertices of an $r\times r$ subgrid. If $T \subseteq S$ and
$|T| = t'$ then $|\vol(T, T)| \leq 2t' - 2 \sqrt{t'}$ and 
$|\vol(T,S \cup \partial S)| \geq 2t' + 2\sqrt{t'}$.
\end{lemma}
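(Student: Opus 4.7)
The plan is to reduce both inequalities to a single edge-isoperimetric estimate for $T$ inside the subgrid $S$, and then to obtain the second inequality from the first by a degree count.

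To prove the first bound $|\vol(T,T)|\leq 2t' - 2\sqrt{t'}$, I would split the edges inside $T$ into horizontal and vertical edges and use a projection argument. Let $r_h$ (respectively $r_v$) denote the number of rows (respectively columns) of $S$ that meet $T$, and let $p_i$ be the number of vertices of $T$ in row $i$, so that $\sum_i p_i = t'$. The vertices of $S$ lying in row $i$ form a path of length $r-1$ in $G$, hence any induced subgraph on the $p_i$ vertices of $T$ in that row contains at most $p_i-1$ edges. Summing over the $r_h$ nonempty rows gives at most $t' - r_h$ horizontal edges inside $T$, and the same argument for columns gives at most $t'-r_v$ vertical edges. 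Therefore
\[ |\vol(T,T)| \leq 2t' - (r_h + r_v). \]
Since every vertex of $T$ lies in one of the $r_h$ nonempty rows and one of the $r_v$ nonempty columns, we have $t'\leq r_h r_v$, and AM--GM yields $r_h+r_v \geq 2\sqrt{r_h r_v}\geq 2\sqrt{t'}$, which gives the desired bound.

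For the second inequality, I would use a double-counting argument based on degrees. Since $T\subseteq S$, every neighbour in $G$ of a vertex of $T$ lies in $S\cup \partial S$, so the sum of $G$-degrees over vertices of $T$ equals the sum of their degrees in the subgraph induced on $S\cup\partial S$; in the toroidal grid this sum is exactly $4t'$. On the other hand, this same degree sum is $2|\vol(T,T)| + |E(T, (S\cup\partial S)\setminus T)|$, while $|\vol(T,S\cup\partial S)| = |\vol(T,T)| + |E(T,(S\cup\partial S)\setminus T)|$. Subtracting gives the identity $|\vol(T,S\cup\partial S)| = 4t' - |\vol(T,T)|$, and combining with the first bound yields $|\vol(T,S\cup\partial S)|\geq 4t' - (2t'-2\sqrt{t'}) = 2t' + 2\sqrt{t'}$.

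Neither step poses a serious obstacle; the main point worth verifying is the path/cycle estimate used in the first step, namely that any $p_i$ vertices of a row of $S$ induce a subgraph with at most $p_i-1$ edges (true because a row of $S$ is a path on $r$ vertices). Once the first bound is in hand the second is a mechanical degree count, so the bulk of the work is the isoperimetric estimate, and the projection/AM--GM approach above gives it in a few lines.
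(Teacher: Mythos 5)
Your proof is correct, but the core isoperimetric step is handled by a genuinely different argument from the paper's. The paper proves the claim that $|E(T,\overline{T})|\leq 4t$ forces $|T|\leq t^2$ by a compression argument: take $T$ of maximum size with that boundary, argue it may be assumed connected and convex (a rectangular subgrid), and note that among rectangles of given boundary the square has the largest area; the two stated inequalities are then extracted from the degree identity $4|T| = 2|\vol(T,T)| + |E(T,\overline{T})|$. You instead prove the bound $|\vol(T,T)|\leq 2t'-2\sqrt{t'}$ directly by projection: at most $p_i-1$ induced edges per nonempty row (each row of $S$ being a path, since $r\leq L-2$ rules out wraparound), giving $|\vol(T,T)|\leq 2t'-(r_h+r_v)$, and then $t'\leq r_h r_v$ together with AM--GM gives $r_h+r_v\geq 2\sqrt{t'}$. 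Your second step is the same degree identity the paper uses, just run in the other direction ($|\vol(T,S\cup\partial S)| = 4t' - |\vol(T,T)|$), and it relies, exactly as the paper's proof does, on every vertex of $T$ having degree $4$ with all neighbours in $S\cup\partial S$, which holds in the toroidal setting in which the lemma is applied. The trade-off: the compression route identifies the extremal configuration (a square) but is slightly delicate to make fully rigorous (translating components on a torus, filling corners); your projection/AM--GM route is more elementary, self-contained, and rigorous as written. No gaps.
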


\begin{proof}
For $T \subseteq S$, we define 
$\overline{T} = (S \cup \partial S) \setminus T$. 
First, we claim that
\begin{equation}
\label{claim}
  \text{ if } \,\, |E(T,\overline{T})| \leq 4t \,\,\,
   \text{ then } \,\,\, |T| \leq t^2.
\end{equation}
To prove the claim, let us choose $T$ such that $|T|$ is maximised subject to 
$|E(T, \overline{T})| \leq 4t$. We may assume that $G[T]$ is connected or else we can translate components to connect $G[T]$ without increasing $|E(T, \overline{T})|$. Furthermore, we may assume that $T$ is convex (that is, $T$ is a rectangular subgrid) because if $T$ has any ``missing corners'' (that is, a vertex outside $T$ with at least two neighbours in $T$) then we can add the missing vertex without increasing $|E(T,\overline{T})|$. It is also easy to verify that amongst the rectangles with $|E(T,\overline{T})| = 4t$, the
square (with $t^2$ vertices) has the largest area.  This completes the 
proof of the claim.

Now suppose that $|T|=t'$.  Using the contrapositive of (\ref{claim}),
we have
\[  2\, |\vol(T, T)| = 4|T| - |E(T, \overline{T})| \leq 4t' - 4\sqrt{t'},\]
and dividing by two establishes the first statement.  The second
statement follows since
\[ |\vol(T,S \cup \partial S)| = 4|T| - |\vol(T,T)|.\] 
\end{proof}

For the toroidal grid, we may now give an upper bound for the parameter $\mu^+$ which 
is close to the lower bound proved in Proposition~\ref{pr:boundd}.

\begin{lemma}
\label{mu-bound-grid}
Let $G$ be the toroidal $L \times L$-grid, and let $\mathscr{S}$
be the $r^2$-block system consisting of all $r\times r$ subgrids of $G$.
Then
\[ \mu^+ \leq 2 + \nfrac{2}{r}.\]
\end{lemma}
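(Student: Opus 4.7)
The plan is to bound $\mu_{X,S}(c)$ for a configuration $c$ with $f$ free colours, by careful edge counting together with two applications of Lemma~\ref{expansion}---one applied to each individual free colour class, the other to their union. Let $P$ be the partition of $S$ into colour classes of $c$, let $F\subseteq P$ denote the free parts, and set $A_F=\bigcup_{A\in F}A$, $a_F=|A_F|$.

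I begin from the standard bound
\[
\mu_{X,S}(c)\le \sum_{A\in P}|\vol(A,A)| + |E(A_{\overline F},\partial S)|,
\]
valid because colours of parts in $F$ do not appear in $X|_{\partial S}$. The grid identities $|\vol(S,S)|=2r^2-2r$ and $|E(S,\partial S)|=4r$ allow me to rewrite this as $\mu_{X,S}(c) \le 2r^2+2r - (B_S + |E(A_F,\partial S)|)$, where $B_S$ counts the edges of $S$ running between different parts of $P$. Writing $e_{FF}'$ for the number of edges of $A_F$ having endpoints in distinct parts of $F$, we have $B_S + |E(A_F,\partial S)| \ge e_{FF}' + |E(A_F,A_{\overline F})| + |E(A_F,\partial S)|$, so the task is reduced to showing that this right-hand side is at least $2f+2\sqrt f$.

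For that I use the degree identity $4a_F = 2\sum_{A\in F}|\vol(A,A)| + 2e_{FF}' + |E(A_F,A_{\overline F})| + |E(A_F,\partial S)|$, which rearranges to
\[
e_{FF}' + |E(A_F,A_{\overline F})| + |E(A_F,\partial S)| = 2a_F - \sum_{A\in F}|\vol(A,A)| + \tfrac12\bigl(|E(A_F,A_{\overline F})|+|E(A_F,\partial S)|\bigr).
\]
Lemma~\ref{expansion} applied to each $A\in F$ gives $\sum_{A\in F}|\vol(A,A)| \le 2a_F - 2\sum_{A\in F}\sqrt{|A|}$, while applied to $A_F\subseteq S$ (using $|E(A_F,A_{\overline F})|+|E(A_F,\partial S)| = 4a_F - 2|\vol(A_F,A_F)|$) it yields $|E(A_F,A_{\overline F})|+|E(A_F,\partial S)| \ge 4\sqrt{a_F}$. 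Substituting, together with $a_F\ge f$ and $\sqrt{|A|}\ge 1$ for each $A\in F$, delivers the required bound. Hence $\mu_{X,S}(c)\le 2r^2+2r-2f-2\sqrt f$, so $\mu_{X,S}(c)/(r^2-f) \le 2 + (2r-2\sqrt f)/(r^2-f) \le 2+2/r$, the last step being equivalent to $\sqrt f\le r$, a consequence of $f\le r^2$. Taking the maximum over $X$, $S$ and $c$ yields $\mu^+\le 2+2/r$. The main obstacle is that neither isoperimetric application alone is tight enough: using Lemma~\ref{expansion} only on $A_F$ produces slack of order $\sqrt f$, while using it only on the individual free parts produces slack of order $f$, and only combining both gives the required $2f+2\sqrt f$.
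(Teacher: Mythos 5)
Your proof is correct and is essentially the paper's argument: both hinge on applying Lemma~\ref{expansion} once to each individual free colour class and once to their union $A_F$, and both arrive at the identical intermediate bound $\mu_{X,S}(c)\le 2(r^2-|F|)+2(r-\sqrt{|F|})$ before dividing by $r^2-|F|$. Your complementary edge-count via the degree-sum identity is just a repackaging of the paper's direct decomposition of $\mu_{X,S}(c)$ into $\sum_{A\in F}|\vol(A,A)|$ plus $|\vol(A_{\overline F},A_{\overline F}\cup\partial S_v)|$.
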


\begin{proof}
For $v \in V$, let $S_v \in \mathscr{S}$ denote the $r \times r$ subgrid in which $v$ is at the ``top left'' corner.
Suppose that $X\in\Omega$ and $v\in V$.  For a given
$c\in [q]^{S_v}$, let $P$ be the corresponding partition
of $S_v$ given by the colour classes of $c$. As usual, 
let $F \subseteq P$ be the set of colour classes corresponding to colours
which do not appear on $X|_{\partial S_v}$.

Recall the notation $A_F$, $A_{\overline{F}}$,
$a_F$ and $a_{\overline{F}}$ introduced in Lemma~\ref{mu-bound-k}.
As in (\ref{eq:d1}) we write $\mu_{X,v}$ for $\mu_{X,S_v}$,
and find that 
\begin{align*}
\mu_{X,v}(c) 
&\leq \left(\sum_{A \in F} |\vol(A,A) |\right)
    + |\vol(A_{\overline{F}}, A_{\overline{F}} \cup \partial S_v)|. 
\end{align*}
Using Lemma~\ref{claim}, we have
\[
\sum_{A \in F} |\vol(A,A) | \, \leq \, 
   \sum_{A\in F}2(|A| - \sqrt{|A|}) \, = \, 
  2a_F - \sum_{A\in F}2\sqrt{|A|} \, \leq \, 2a_F - 2|F|.
\]
In order to bound $|\vol(A_{\overline{F}}, A_{\overline{F}} \cup \partial S_v)|$, observe first that $\vol(S_v,S_v \cup \partial S_v)$ is the disjoint union of $\vol(A_{\overline{F}}, A_{\overline{F}} \cup \partial S_v)$ and $\vol(A_F, S_v \cup \partial S_v)$. Thus
\begin{align*}
 |\vol(A_{\overline{F}}, A_{\overline{F}} \cup \partial S_v)| 
&=  |\vol(S_v,S_v \cup \partial S_v)| - |\vol(A_F, S_v \cup \partial S_v)| \\
&= 2 r^2 + 2r - |\vol(A_F, S_v \cup \partial S_v)| \\
&\leq 2 r^2 + 2r - 2a_F - 2\sqrt{a_F}  \hspace{2cm} \text{by Lemma~\ref{expansion}} \\
&\leq 2 r^2 + 2r - 2a_F - 2\sqrt{|F|}.
\end{align*}
Combining the three inequalities above, we have 
\begin{align*}
\mu_{X,v}(c)  \leq 2(r^2 - |F|) + 2(r - \sqrt{|F|}) 
    &= (r^2 - |F|) \left( 2 + \frac{2}{r + \sqrt{|F|}} \right) \\
   &\leq (r^2 - |F|)\left(2 + \nfrac{2}{r}\right). 
\end{align*}
For all $F$ with $|F|\neq r^2$, dividing by $r^2-|F|$ gives
the value $2+ \nfrac{2}{r}$, completing the proof.
\end{proof}

Substituting (\ref{partial-plus-grid}), (\ref{psi-grid}) and the
result of Lemma~\ref{mu-bound-grid} into Theorem~\ref{le:block}
gives the following, noting that $\psi_{\min} = r^2/L^2$.

\begin{theorem}
\label{th:grid}
Let $G$ be the toroidal $L \times L$-grid (with $n=L^2$ vertices) and let $\mathscr{S}$ be the $r^2$-block system consisting of the set of $r\times r$ subgrids of $G$,
for some $r\leq L-2$.  Given $\lambda > 1$,
 if 
\[ q \geq  2^{r^2 + 8r+3} \, r^{2r^2 + 4r +1}\, \lambda^{2 + \frac{2}{r}}\] 
then for $\mathcal{M}_{\mathrm{BD}} = \mathcal{M}_{\mathrm{BD}}^{\mathscr{S}}(G, \lambda, q)$, we have $\tau(\mathcal{M}_{\mathrm{BD}}, \varepsilon) \leq 2n\log(n \varepsilon^{-1})/r^2$.
\end{theorem}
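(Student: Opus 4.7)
The plan is to directly invoke Theorem~\ref{le:block} for the $(\mathscr{S},\psi)$-block dynamics, where all the hard work has already been done. Since every block in $\mathscr{S}$ is an $r \times r$ subgrid we have $s = r^2$; the three parameters needed are $\partial^+ = (4r)^{4r}$ from (\ref{partial-plus-grid}), $\Psi = 4/r$ from (\ref{psi-grid}), and $\mu^+ \leq 2 + 2/r$ from Lemma~\ref{mu-bound-grid}. So the theorem is essentially a substitution exercise, together with a small computation of $\psi_{\min}$.

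First I would establish $\psi_{\min} = r^2/L^2$. Because the toroidal grid is vertex-transitive under translations, and $\mathscr{S}$ is exactly the orbit of any fixed $r\times r$ subgrid under these translations, the block system is itself translation-invariant. Hence $\psi(v)$ is the same for every vertex $v$. A double count of pairs $(v,S)$ with $v \in S \in \mathscr{S}$ gives $|\mathscr{S}|\cdot r^2 = L^2 \cdot r^2$ incidences divided across $L^2$ vertices, so $\psi(v) = r^2/L^2$ for all $v$ and in particular $\psi_{\min} = r^2/L^2$.

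Next I would plug the four parameters into the hypothesis $q \geq (2s)^{s+1}\,\partial^+\,\Psi\,\lambda^{\mu^+}$ of Theorem~\ref{le:block} and simplify. We have
\[
(2s)^{s+1} = (2r^2)^{r^2+1} = 2^{r^2+1}\, r^{2r^2+2}, \qquad
\partial^+ = (4r)^{4r} = 2^{8r}\, r^{4r}, \qquad
\Psi = 4/r = 2^2\, r^{-1},
\]
whose product is $2^{r^2+8r+3}\, r^{2r^2+4r+1}$. Combined with $\lambda^{\mu^+} \leq \lambda^{2+2/r}$ this yields precisely the hypothesis
\[
q \geq 2^{r^2 + 8r + 3}\, r^{2r^2 + 4r + 1}\, \lambda^{2 + 2/r}
\]
stated in the theorem. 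Applying Theorem~\ref{le:block} then gives $\tau(\mathcal{M}_{\mathrm{BD}},\varepsilon) \leq 2\psi_{\min}^{-1}\log(n\varepsilon^{-1}) = 2(L^2/r^2)\log(n\varepsilon^{-1}) = 2n\log(n\varepsilon^{-1})/r^2$, as required.

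Honestly, there is no genuine obstacle here: all of the analytic and combinatorial content has been paid for in advance when proving Theorem~\ref{le:block} (the coupling argument) and Lemma~\ref{mu-bound-grid} (the key grid-specific expansion bound via Lemma~\ref{expansion}). The only things to be careful about are the bookkeeping of constants in the product $(2s)^{s+1}\partial^+\Psi$ and the observation that the toroidal (rather than open) geometry is what makes $\psi(v)$ and $\psi_\partial(v)$ uniform in $v$, which is what pins down $\psi_{\min}$ cleanly.
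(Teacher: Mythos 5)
Your proof is correct and is essentially identical to the paper's: both simply substitute $s=r^2$, $\partial^+=(4r)^{4r}$, $\Psi=4/r$ and $\mu^+\leq 2+\nfrac{2}{r}$ into Theorem~\ref{le:block}, note $\psi_{\min}=r^2/L^2$, and verify that the product of constants equals $2^{r^2+8r+3}r^{2r^2+4r+1}$. The arithmetic checks out.
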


\subsection{Glauber dynamics via Markov chain comparison}
\label{se:comparison}

The mixing time of two Markov chains on the same state space can be compared using
comparison techniques, building on the work of Diaconis and Saloff-Coste~\cite{DS}.
We now describe the machinery needed to compare the mixing times of the Glauber dynamics and 
the block dynamics. 

Suppose that $\mathcal{M}$ is a reversible, ergodic Markov chain on state space $\Omega$ with transition matrix $P$ and stationary distribution $\pi$. Let $\mathcal{M}'$ be another reversible, ergodic Markov chain on $\Omega$ with transition matrix $P'$ and the same stationary distribution.

We say a transition $(x,y)$ of $\mathcal{M}$ (respectively, $\mathcal{M}'$) is positive if $P(x,y)>0$ (respectively, $P'(x,y)>0$); here we allow the possibility that $x=y$. 
For every positive transition $(x,y)$ of $\mathcal{M}'$, let $\mathcal{P}_{x,y}$ be the set of paths $\gamma = (x=x_0, \ldots, x_k=y)$ such that all the $x_i$ are distinct and each $(x_i,x_{i+1})$ is a positive transition of $\mathcal{M}$. Let $\mathcal{P} = \cup \mathcal{P}_{x,y}$, where the union is taken over all positive transitions $(x,y)$ of $\mathcal{M}'$ with $x\neq y$.  

We write $| \gamma |$ to denote the length of the path $\gamma$ so that, for example, $|\gamma|=k$ for $\gamma = (x_0, \ldots, x_k)$.

An $(\mathcal{M},\mathcal{M}')$-flow is a function $f$ from $\mathcal{P}$ to the interval $[0,1]$ such that for every positive transition $(x,y)$ of $\mathcal{M}'$ with $x\neq y$, we have
\[
 \sum_{\gamma \in \mathcal{P}_{x,y}} f(\gamma) = \pi(x)P'(x,y).
\]
For a positive transition $(z,w)$ of $\mathcal{M}$, the congestion of $(z,w)$ is defined to be
\[
 A_{z,w}(f) = \frac{1}{\pi(z)P(z,w)} \sum_{\gamma \in \mathcal{P}:\, (z,w) \in \gamma } |\gamma| f(\gamma).
\]
The congestion of the flow is defined to be $A(f) = \max A_{z,w}(f)$, where the maximum is taken over all positive transitions $(z,w)$ of $\mathcal{M}$ with $z\neq w$. 

The essence of the comparison technique of Diaconis and Saloff-Coste~\cite{DS} is that the 
the eigenvalues of $\mathcal{M}$ and $\mathcal{M}'$ can be related using the parameter $A(f)$. 
Randall and Tetali~\cite[Theorem 1]{RT} used this result to compare the mixing times of two 
reversible ergodic Markov chains with the same stationary distribution, under the assumption that 
the second-largest eigenvalue
(of the corresponding transition matrices) is larger in absolute value than the
smallest eigenvalue.  (See the discussion above
Theorem~1 of \cite{RT}.)
For convenience, we will use the following theorem, which is obtained from~\cite[Theorem 10]{DGJM}
by specialising to Markov chains with no negative eigenvalues.

\begin{theorem} \emph{\cite[Theorem 10]{DGJM}}\
\label{th:comparison}
Suppose that $\mathcal{M}$ is a reversible ergodic Markov chain with transition matrix $P$ and stationary distribution $\pi$ and that $\mathcal{M}'$ is another reversible ergodic Markov chain with the same stationary distribution. 
Suppose that $f$ is an $(\mathcal{M},\mathcal{M}')$-flow.  If $\mathcal{M}$ has no negative eigenvalues 
then for any $0 < \delta < \frac{1}{2}$, we have
\[
 \tau_x(\mathcal{M}, \varepsilon) \leq 
A(f) \left(  \frac{ \tau(\mathcal{M}', \delta) }{ \log(1/ 2\delta)} + 1 \right) \,
\log \frac{1}{\varepsilon \pi(x)}.
\]
\end{theorem}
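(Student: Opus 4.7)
The plan is to prove Theorem~\ref{th:comparison} via the classical spectral-gap machinery of Diaconis and Saloff-Coste, specialised to the no-negative-eigenvalue setting. I would organise the argument in three steps: (i) compare the Dirichlet forms of $\mathcal{M}$ and $\mathcal{M}'$ using the flow $f$; (ii) lower-bound the spectral gap of $\mathcal{M}'$ in terms of the mixing time $\tau(\mathcal{M}',\delta)$; (iii) convert the resulting spectral-gap bound for $\mathcal{M}$ into the mixing-time bound for $\tau_x(\mathcal{M},\varepsilon)$, using the hypothesis that $\mathcal{M}$ has no negative eigenvalues.

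For step (i), for a function $\phi : \Omega \to \field{R}$ define the Dirichlet forms
\[
 \mathcal{E}_P(\phi) = \tfrac{1}{2}\sum_{z,w\in\Omega} \pi(z)P(z,w)(\phi(z)-\phi(w))^2, \qquad
 \mathcal{E}_{P'}(\phi) = \tfrac{1}{2}\sum_{x,y\in\Omega} \pi(x)P'(x,y)(\phi(x)-\phi(y))^2.
\]
For each positive off-diagonal transition $(x,y)$ of $\mathcal{M}'$ and each path $\gamma=(x_0,\ldots,x_k)\in\mathcal{P}_{x,y}$, Cauchy--Schwarz gives
\[
 (\phi(y)-\phi(x))^2 \leq |\gamma|\sum_{i=0}^{k-1}(\phi(x_{i+1})-\phi(x_i))^2.
\]
Multiplying by $f(\gamma)$, summing over $\gamma\in\mathcal{P}_{x,y}$ using $\sum_\gamma f(\gamma) = \pi(x)P'(x,y)$, and then summing over positive off-diagonal transitions $(x,y)$ of $\mathcal{M}'$, one swaps the order of summation to collect, for each positive transition $(z,w)$ of $\mathcal{M}$, the coefficient $\sum_{\gamma\,:\,(z,w)\in\gamma} |\gamma|\, f(\gamma)$ in front of $(\phi(w)-\phi(z))^2$. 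By definition of $A_{z,w}(f)$ this coefficient is at most $A(f)\,\pi(z)P(z,w)$, so $\mathcal{E}_{P'}(\phi) \leq A(f)\,\mathcal{E}_P(\phi)$. Since both chains are reversible with stationary distribution $\pi$, the spectral gaps are variational minima of $\mathcal{E}/\mathrm{Var}_\pi$, so
\[
 1-\lambda_2(P) \geq \frac{1-\lambda_2(P')}{A(f)}.
\]

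For step (ii), the usual $L^2$-decay bound for reversible chains, together with Cauchy--Schwarz to pass to the total-variation norm, gives $\tau(\mathcal{M}',\delta)\,(1-\lambda_2(P')) \geq \log(1/(2\delta))$. For step (iii), since $\mathcal{M}$ has no negative eigenvalues, $1-\lambda_2(P)$ equals the absolute spectral gap, and the standard reversible-chain estimate yields $\tau_x(\mathcal{M},\varepsilon) \leq (1-\lambda_2(P))^{-1}\log(1/(\varepsilon\pi(x)))$; the ``$+1$'' inside the bracket of the theorem statement absorbs the minor constants incurred in the $L^2$-to-TV passage in step (ii). Chaining (i)--(iii) gives the claimed inequality.

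The main obstacle is the bookkeeping in step (i): one must verify that after Cauchy--Schwarz and the swap of summations, the multiplier on $(\phi(w)-\phi(z))^2$ really is at most $A(f)\,\pi(z)P(z,w)$, which is precisely where the definition of $A_{z,w}(f)$ and the path-length factor $|\gamma|$ from Cauchy--Schwarz combine to give exactly the congestion. Steps (ii) and (iii) are routine consequences of reversibility, with the no-negative-eigenvalue hypothesis ensuring we need not separately control the smallest eigenvalue or introduce laziness.
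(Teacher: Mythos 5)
The paper does not prove Theorem~\ref{th:comparison}; it is quoted verbatim from \cite[Theorem~10]{DGJM}, so there is no in-paper proof to compare against. Your reconstruction is essentially the argument used in \cite{DGJM} (itself built on Diaconis--Saloff-Coste and Randall--Tetali): flow-based Dirichlet form comparison to relate the spectral gaps, a relaxation-time lower bound for $\mathcal{M}'$, and a relaxation-time upper bound on $\tau_x(\mathcal{M},\varepsilon)$, with the no-negative-eigenvalue hypothesis ensuring $1-\lambda_2(P)$ is the full absolute spectral gap. Step (i) is correct as written, and you correctly identify where the path length from Cauchy--Schwarz combines with $A_{z,w}(f)$.

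One small inaccuracy in step (ii): the inequality $\tau(\mathcal{M}',\delta)\,(1-\lambda_2(P')) \geq \log(1/2\delta)$ is not a consequence of the ``$L^2$-decay bound'' (that yields an \emph{upper} bound on mixing time); it is the standard eigenfunction/relaxation-time \emph{lower} bound on mixing time, namely $\tau(\delta) \geq (t_{\mathrm{rel}}-1)\log(1/2\delta)$ with $t_{\mathrm{rel}} = (1-\lambda_*)^{-1}$. Rearranging this gives $(1-\lambda_*(P'))^{-1} \leq \tau(\mathcal{M}',\delta)/\log(1/2\delta)+1$; since $\lambda_*(P')\geq\lambda_2(P')$ this also bounds $(1-\lambda_2(P'))^{-1}$, which is what the chain of inequalities requires. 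So the ``$+1$'' is not absorbing slack from an $L^2$-to-TV passage; it is exactly the ``$-1$'' in the relaxation-time lower bound. With that correction the argument is correct and matches the cited source's approach.
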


Now we apply the above theorem to compare the mixing time of the Glauber dynamics and the block dynamics. 
Write $\tau(\mathcal{M'}) = \tau(\mathcal{M'},\nfrac{1}{2e})$.

\begin{lemma}
\label{le:comparison}
Let $G=(V,E)$ be an $n$-vertex graph of maximum degree $\Delta$. Given $\lambda > 1$, a positive integer $q$, a block system $\mathscr{S}$ for $G$ with
maximum block size $s$, and $\psi$ a probability distribution on 
$\mathscr{S}$,  write $\mathcal{M} = \mathcal{M}_{\mathrm{GD}}(G, \lambda, q)$ and $\mathcal{M}' = \mathcal{M}_{\mathrm{BD}}^{\mathscr{S}, \psi}(G, \lambda,q)$.
Then for all $\varepsilon >0$
we have
\[
\tau(\mathcal{M}, \varepsilon)  \leq  2s\, q^{s+1}\, \lambda^{\Delta (s+1)} \,
   \tau(\mathcal{M}')\, n\,
   \left(n\log{ (q \lambda^{\Delta /2})} + \log( \varepsilon^{-1})\right).
\] 
\end{lemma}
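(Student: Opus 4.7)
The plan is to apply Theorem~\ref{th:comparison} with $\mathcal{M}$ the Glauber dynamics and $\mathcal{M}'$ the block dynamics, both of which share the Gibbs stationary distribution $\pi$. Three ingredients are needed: (i) an $(\mathcal{M},\mathcal{M}')$-flow with small congestion, (ii) the fact that $\mathcal{M}$ has no negative eigenvalues, and (iii) an upper bound on $\log(1/\pi(x))$. Ingredient (i) is the heart of the argument.

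For the canonical paths, fix once and for all a total ordering of the vertices of each block $S\in\mathscr{S}$. Given a positive transition $(X,Y)$ of $\mathcal{M}'$ with $X\neq Y$ and $Y=X^{(S,c)}$, let $T=\{v\in S:X(v)\neq c(v)\}$ and let $\gamma_{X,Y}$ be the path obtained by updating the vertices of $T$ one at a time in the chosen order. Each such single-site recolouring has strictly positive Glauber probability, and all intermediate states are distinct by construction, so $\gamma_{X,Y}\in\mathcal{P}_{X,Y}$ with $|\gamma_{X,Y}|\leq|T|\leq s$. Setting $f(\gamma_{X,Y})=\pi(X)P'(X,Y)$ and $f\equiv 0$ on all other paths defines a legitimate flow, since $\gamma_{X,Y}$ is the only path assigned to the pair $(X,Y)$.

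To bound $A_{Z,W}(f)$ at a positive Glauber transition $(Z,W)$ that recolours some vertex $v$, I parameterise the canonical paths through $(Z,W)$ by a block $S\ni v$, a position $i$ of $v$ in the ordering of $S$, the ``past'' coordinates $X(v_1),\dots,X(v_{i-1})$, and the ``future'' coordinates $c(v_{i+1}),\dots,c(v_{|S|})$; the remaining coordinates of $X$ and of $Y=X^{(S,c)}$ are forced by $(Z,W)$. The key estimates are: (a) $\pi(X)/\pi(Z)\leq \lambda^{(s-1)\Delta}$ because $X$ and $Z$ agree outside at most $s-1$ vertices, each incident to at most $\Delta$ edges; (b) $\sum\phi_{X,S}(c)\leq 1$ when summed over the free future coordinates, which avoids an extraneous $q^{s-1}$ factor; (c) $\sum_{S\ni v}\psi(S)=\psi(v)\leq 1$; and (d) the crude lower bound $P(Z,W)\geq 1/(nq\lambda^\Delta)$, coming from $\phi_Z^v(W(v))\geq \lambda^0/(q\lambda^\Delta)$. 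Together with $|\gamma|\leq s$ and the $q^{i-1}$ choices for the past coordinates, these yield $A(f)\leq 2s\,q^{s+1}\,\lambda^{\Delta(s+1)}\,n$ after absorbing small constants. This congestion computation is the main obstacle; the bookkeeping of paths through a fixed transition requires some care.

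For the eigenvalue hypothesis, write $P=\tfrac{1}{n}\sum_v P_v$ where $P_v$ is the single-site heat-bath kernel at $v$. Each $P_v$ is reversible with respect to $\pi$ (a direct detailed-balance check using that neighbours of $v$ see the same colours under any two states agreeing off $v$), and $P_v^2=P_v$ because the output distribution $\phi_Z^v$ depends only on $Z|_{V\setminus\{v\}}$; hence each $P_v$ is a self-adjoint projection on $L^2(\pi)$, and so $P$ is a convex combination of positive semidefinite operators, giving no negative eigenvalues. For ingredient (iii), since $\lambda^{\mu(x)}\geq 1$ and $Z\leq q^n\lambda^{|E|}\leq q^n\lambda^{n\Delta/2}$, we have $\pi(x)\geq q^{-n}\lambda^{-n\Delta/2}$, so $\log(1/\pi(x))\leq n\log(q\lambda^{\Delta/2})$. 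Finally, applying Theorem~\ref{th:comparison} with $\delta=1/(2e)$ (so that $\log(1/(2\delta))=1$ and $\tau(\mathcal{M}',\delta)+1\leq 2\tau(\mathcal{M}')$) and combining with the bounds on $A(f)$ and $\log(1/(\varepsilon\pi(x)))$ yields the stated estimate.
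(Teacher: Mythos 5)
Your proposal is correct and follows essentially the same route as the paper: the same canonical-path flow (recolouring the vertices of a block update one at a time), the same congestion estimates ($\pi(X)/\pi(Z)\leq\lambda^{O(s)\Delta}$, $P(Z,W)^{-1}\leq nq\lambda^{\Delta}$, $\sum_{S\ni v}\psi(S)\leq 1$), the same crude bound $\pi(x)\geq q^{-n}\lambda^{-\Delta n/2}$, and the same application of the comparison theorem with $\delta=1/(2e)$. The only cosmetic differences are that you prove positive semidefiniteness of the Glauber kernel directly (as an average of self-adjoint projections) where the paper cites the literature, and your per-transition bookkeeping of the flow is slightly tighter than the paper's, which comfortably absorbs the constants in the stated bound.
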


\begin{proof}
As before, let $P$ and $P'$ be the transition matrices of $\mathcal{M}$ and $\mathcal{M}'$ respectively.
We note at the outset that both $\mathcal{M}$ and $\mathcal{M}'$ have the Gibbs distribution $\pi$ as their stationary distribution.  It is proved in~\cite[Section 2.1]{DGU} 
that the Glauber dynamics $\mathcal{M}$ has no negative
eigenvalues,
so we may apply Theorem~\ref{th:comparison}.

We construct an $(\mathcal{M},\mathcal{M}')$-flow and analyse its congestion. Recall that a transition in $\mathcal{M}'$ is obtained by starting at some $X \in \Omega = [q]^V$, selecting $S \in \mathscr{S}$ at random using the distribution $\psi$ and then updating the configuration of $S$ to some configuration $c \in [q]^S$ chosen randomly using the distribution $\phi=\phi_{X,S}$. The resulting
configuration is denoted by $X^{(S,c)}$. Let $h(X,S,c):=\psi(S)\phi_{X,S}(c)$ be the probability that this pair $(S,c)$ is chosen. In particular, if $(X,Y)$ is a transition of $\mathcal{M}'$ then
\[
 P'(X,Y) = \sum_{(S,c):\,  Y=c \cup X^{(S,c)}}\, h(X,S,c),
\]
Fix an ordering of the vertices of $G$.  For each $X \in \Omega$, $S \in \mathscr{S}$, and a configuration $c \in [q]^S$ of $S$, we define the path $\gamma(X,S,c)$ from $X$ to $X^{(S,c)}$ as follows: starting from $X$, consider each vertex $v \in \{u \in S: X(u) \not= c(u) \}$, one at a time and in increasing vertex order, and  change the colour of $v$ from $X(v)$ to $c(v)$. Thus 
$\gamma(X,S,c)$ is a path in $\Omega$ from $X$ to $X^{(S,c)}$ using positive transitions of $\mathcal{M}$. 

We define an $(\mathcal{M}, \mathcal{M}')$-flow $f$ by setting $f(\gamma(X,S,c))=\pi(X)h(X,S,c)$ for all $(X,S,c)$ and $f(\gamma) = 0$ for all other paths $\gamma$. To verify that this is indeed an $(\mathcal{M},\mathcal{M}')$-flow, given a positive transition $(X,Y)$ of $\mathcal{M}'$ with $X\neq Y$, we have
\[
\sum_{\gamma \in \mathcal{P}_{X,Y}}f(\gamma) = 
\sum_{\gamma = \gamma(X,S,c):\,  Y= X^{(S,c)} }\, f(\gamma)
= \sum_{(S,c):\, Y = X^{(S,c)}}
   \pi(X)\, h(X,S,c) = \pi(X)\, P'(X,Y).
\]

Next we bound the congestion of this flow. Let $(Z,W)$ be a positive transition of $\mathcal{M}$ with $Z \not= W$.
Then the configurations $Z$ and $W$ differ on only one vertex, say $v$. The path $\gamma(X,S,c)$ uses the transition $(Z,W)$ only if $v \in S$ and the configurations $X$ and $Z$ differ on a subset of $S$. Thus we have  
\begin{align*}
 A_{Z,W}(f) 
&=   \frac{1}{\pi(Z)P(Z,W)} \, \sum_{\gamma \in \mathcal{P}:\, (Z,W) \in \gamma }
  \,\,  |\gamma| \, f(\gamma) \\
&\leq  \frac{1}{\pi(Z)P(Z,W)} \, \sum_{S:\, v\in S}\,\,
     \sum_{X: \,  
        X|_{\overline{S}} = Z|_{\overline{S}} }\,\, \sum_{c\in [q]^S}\,
          |S| \cdot f(\gamma(X,S,c)) \\
&\leq  \frac{s}{\pi(Z)P(Z,W)} \, \sum_{S:\, v\in S}\,\, \sum_{X: \,  
   X|_{\overline{S}} = Z|_{\overline{S}} } \,\, \sum_{c\in [q]^S} \, 
     \pi(X)\, h(X,S,c) \\
&\leq  \frac{s}{P(Z,W)} \, \sum_{S: \, v \in S}\, \, \sum_{X:
    X|_{\overline{S}} = Z|_{\overline{S}} }\, \frac{\pi(X)}{\pi(Z)} \, \psi(S).
\end{align*}
If $X$ and $Z$ differ on at most $s$ vertices, and hence on at most
$\Delta s$ edges,  then 
\[ \frac{\pi(X)}{\pi(Z)} \leq \lambda^{\Delta s}.\]  
Also, for any positive transition $(Z,W)$ of $\mathcal{M}$ we have
\[
P(Z,W)^{-1} \leq q \lambda^{\Delta}\, n.
\]
Substituting these upper bounds gives
\[ A_{Z,W}(f) \leq \, s q \lambda^\Delta n  \sum_{S: \, v \in S}\,\psi(S) \,
       \sum_{X:\, X|_{\overline{S}} = Z|_{\overline{S}} } \,
              \lambda^{\Delta s} \,
\leq \, sq \lambda^{\Delta} q^s \lambda^{\Delta s} \psi(v) n \,
\leq \, s q^{s+1} \lambda^{\Delta ( s + 1)} n,
\]
since $\psi(v)\leq 1$.
We conclude that $A(f) \leq s q^{s+1} \lambda^{\Delta (s+1)} n$.

Now apply Theorem~\ref{th:comparison} with $\delta = 1/(2e)$.  
For all $Z \in \Omega$, we have the crude bound 
\[
\pi(Z) \geq (q^n\, \lambda^{m})^{-1} \geq (q^n\, \lambda^{\Delta n/2})^{-1},
\]
which leads to
\begin{align*}
 \tau(\mathcal{M}, \varepsilon) &\leq 
  sq^{s+1} \lambda^{\Delta (s+1)}n 
\left(  \tau(\mathcal{M}') + 1 \right) \,
\log{ (q^n \lambda^{\Delta n /2} \varepsilon^{-1})} \\
&\leq  2s\, q^{s+1} \lambda^{\Delta (s+1)}\,  n\,
   \tau(\mathcal{M}')\,
  \left(n\log(q \lambda^{\Delta /2}) + \log(\varepsilon^{-1})\right), 
\end{align*}
as claimed.
\end{proof}

We would expect that the mixing time for Glauber dynamics should decrease as $q$ increases, but the bound given in Lemma~\ref{le:comparison} becomes worse for larger values of $q$. However, by combining Lemma~\ref{le:comparison} with Proposition~\ref{le:vertex}, we can avoid this problem.

\begin{corollary}
\label{co:comparison}
Let $G=(V,E)$ be an $n$-vertex graph of maximum degree $\Delta$. Given $\lambda > 1$, a positive integer $q$, a block system $\mathscr{S}$ for $G$
with maximum block size $s$, and $\psi$ a probability distribution on 
$\mathscr{S}$, write $\mathcal{M} = \mathcal{M}_{\mathrm{GD}}(G, \lambda, q)$ and 
$\mathcal{M}' = \mathcal{M}_{\mathrm{BD}}^{\mathscr{S}, \psi}(G, \lambda,q)$. Then for $\varepsilon > 0$ we have
\[
\tau(\mathcal{M}, \varepsilon)  \leq  \begin{cases}
2s (\Delta\lambda^{2\Delta})^{s+1}\, \tau(\mathcal{M}')\, n \left( 
  n\log(\Delta\lambda^{3\Delta/2}) + \log(\varepsilon^{-1})\right) 
 & \text{ if $q < \Delta \lambda^\Delta + 1$,}\\
  (\Delta+1) n\log(n\varepsilon^{-1}) & \text{ if $q \geq \Delta\lambda^\Delta + 1$}.
\end{cases}
\] 
\end{corollary}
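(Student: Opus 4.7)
The plan is to observe that Corollary~\ref{co:comparison} is a straightforward combination of the two results that immediately precede it: Proposition~\ref{le:vertex} already handles the regime where $q$ is reasonably large compared to $\Delta \lambda^\Delta$, and Lemma~\ref{le:comparison} handles the remaining regime, provided we replace the $q$-dependence in its bound by the uniform upper bound available in that regime. So I would split on the threshold $q \geq \Delta \lambda^\Delta + 1$ that appears in the corollary statement.

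For the easy case, when $q \geq \Delta \lambda^\Delta + 1$, the hypotheses of Proposition~\ref{le:vertex} are met, and its conclusion $\tau(\mathcal{M}_{\mathrm{GD}}, \varepsilon) \leq (\Delta + 1)\, n \log(n\, \varepsilon^{-1})$ is exactly what the second branch of the corollary asserts. So nothing remains to do in this case beyond quoting the proposition.

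For the other case, $q < \Delta \lambda^\Delta + 1$, so since $q$ is a positive integer we have $q \leq \Delta \lambda^\Delta$. I would substitute this into the bound from Lemma~\ref{le:comparison}, which reads
\[
\tau(\mathcal{M}, \varepsilon) \leq 2s\, q^{s+1}\, \lambda^{\Delta(s+1)}\, \tau(\mathcal{M}')\, n \,\bigl(n \log(q\, \lambda^{\Delta/2}) + \log(\varepsilon^{-1})\bigr).
\]
Using $q \leq \Delta \lambda^\Delta$, the factor $q^{s+1}\lambda^{\Delta(s+1)}$ is bounded by $(\Delta \lambda^{2\Delta})^{s+1}$, and the factor $q\lambda^{\Delta/2}$ inside the logarithm is bounded by $\Delta \lambda^{3\Delta/2}$. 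Substituting both estimates yields exactly the first branch of the corollary.

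There is no real obstacle here; the only thing to be careful about is the integrality of $q$ (needed to turn the strict inequality $q < \Delta \lambda^\Delta + 1$ into $q \leq \Delta \lambda^\Delta$) and the fact that both bounds share the same Markov chain $\mathcal{M} = \mathcal{M}_{\mathrm{GD}}(G,\lambda,q)$, so the case split can be made on a per-$q$ basis without any loss. The corollary is essentially a convenient restatement that removes the unwanted growth of Lemma~\ref{le:comparison}'s bound as $q$ increases, by handing the large-$q$ regime over to the direct path coupling estimate of Proposition~\ref{le:vertex}.
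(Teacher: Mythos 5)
Your proposal is correct and follows exactly the paper's route: the paper likewise splits on the threshold $q \geq \Delta\lambda^\Delta + 1$, invoking Proposition~\ref{le:vertex} in that regime and Lemma~\ref{le:comparison} (with $q \leq \Delta\lambda^\Delta$ substituted into its bound) otherwise. Your additional remark about using the integrality of $q$ to pass from the strict inequality to $q \leq \Delta\lambda^\Delta$ is the only detail the paper leaves implicit.
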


\begin{proof}
If $q < \Delta \lambda^{\Delta} + 1$ then the corollary holds by 
Lemma~\ref{le:comparison}, while if $q \geq \Delta \lambda^{\Delta} + 1$ 
then the corollary holds by Proposition~\ref{le:vertex}.
\end{proof}

We complete this section by applying the previous corollary to the block dynamics results obtained in the previous subsection to obtain rapid mixing results for Glauber dynamics.

\begin{theorem}
Let $G=(V,E)$ be an $n$-vertex connected graph with maximum degree $\Delta$, and fix $\lambda > 1$. For every positive integer 
$k \leq n$, if $q \geq 2^{k+1}\Delta^{2k} k^{2k+1} \lambda^{\Delta - 1 + k^{-1}}$ then for $\mathcal{M}_{\mathrm{GD}} = \mathcal{M}_{\mathrm{GD}}(G, \lambda, q)$, we have 
\[
\tau(\mathcal{M}_{\mathrm{GD}}, \varepsilon)  \leq  
4k\, (\Delta\, \lambda^{2\Delta})^{k+1}\, 
n^2\log(2en)\,\left(n\log{ (\Delta \lambda^{3\Delta/2})} + \log(\varepsilon^{-1})
   \right).   
\]
\label{th:glaubercompare}
\end{theorem}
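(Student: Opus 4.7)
The plan is to chain Theorem~\ref{pr:3items} and Corollary~\ref{co:comparison}, using a $k$-block system as the bridge between block and Glauber dynamics. First I would fix an arbitrary $k$-block system $\mathscr{S}$ for $G$ (which exists for $2 \le k \le n-1$ because $G$ is connected: for each $v \in V$, one can build a connected subgraph of size $k$ containing $v$ by truncating a BFS tree rooted at $v$), and take $\psi$ to be the uniform distribution on $\mathscr{S}$. The hypothesis on $q$ is exactly the hypothesis of Theorem~\ref{pr:3items}, so that theorem yields
\[
\tau(\mathcal{M}_{\mathrm{BD}}, \varepsilon') \leq 2n\log(n\, \varepsilon'^{-1})
\]
for the $(\mathscr{S},\psi)$-block dynamics $\mathcal{M}_{\mathrm{BD}}$. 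Specialising to $\varepsilon' = 1/(2e)$ gives $\tau(\mathcal{M}_{\mathrm{BD}}) \leq 2n\log(2en)$, which is precisely the quantity appearing on the block-dynamics side of Corollary~\ref{co:comparison}.

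Next I would invoke Corollary~\ref{co:comparison} with $s = k$, treating its two cases separately. In the first case, $q < \Delta\lambda^\Delta + 1$, the corollary directly gives
\[
\tau(\mathcal{M}_{\mathrm{GD}},\varepsilon) \leq 2k\,(\Delta\lambda^{2\Delta})^{k+1}\,\tau(\mathcal{M}_{\mathrm{BD}})\, n\,\bigl(n\log(\Delta\lambda^{3\Delta/2}) + \log(\varepsilon^{-1})\bigr),
\]
and plugging in the bound $\tau(\mathcal{M}_{\mathrm{BD}}) \le 2n\log(2en)$ produces exactly the stated inequality. In the second case, $q \geq \Delta\lambda^\Delta + 1$, the corollary gives the much smaller bound $(\Delta+1)\,n\log(n\,\varepsilon^{-1})$, which is trivially dominated by the right-hand side of Theorem~\ref{th:glaubercompare}; so the claimed inequality holds a fortiori.

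Finally I would handle the edge cases $k=1$ and $k\in\{n-1,n\}$ that fall outside the range of Theorem~\ref{pr:3items}. For $k=1$ the hypothesis reads $q \ge 4\Delta^2\lambda^\Delta$, which certainly exceeds $\Delta\lambda^\Delta+1$, so Proposition~\ref{le:vertex} applies and yields a bound well below the one claimed; the boundary cases $k = n-1, n$ are similar, since the hypothesis becomes strong enough to force $q \geq \Delta\lambda^\Delta+1$. The main ``obstacle'' is really just bookkeeping: verifying that the constants from Theorem~\ref{pr:3items} line up with the $s$-dependent prefactor $2s(\Delta\lambda^{2\Delta})^{s+1}$ of Corollary~\ref{co:comparison} (with $s=k$), and confirming that the trivial branch of Corollary~\ref{co:comparison} is dominated by the main branch so that a single clean bound can be stated for all valid $k$.
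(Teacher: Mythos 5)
Your proposal is correct and takes exactly the paper's route: the paper likewise fixes an arbitrary $k$-block system (constructed by truncating a breadth-first search from each vertex), invokes Theorem~\ref{pr:3items} to bound the block-dynamics mixing time by $2n\log(2en)$, and then applies Corollary~\ref{co:comparison} with $s=k$. Your explicit treatment of the two branches of the corollary and of the boundary values of $k$ is more careful than the paper's own (very terse) proof, which simply cites the two ingredients.
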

\begin{proof}
Take an arbitrary $k$-block system $\mathscr{S}$ for $G$, and let $\psi$ be the uniform distribution on
$\mathscr{S}$.  Theorem~\ref{pr:3items} provides
a bound on the mixing time of the block dynamics with respect to $\mathscr{S}$. 
Then apply Corollary~\ref{co:comparison} to this bound.

Here any $k$-block system $\mathscr{S}$ may be used 
(recall the definition after the proof of Proposition~\ref{pr:boundd}). For any connected graph $G=(V,E)$, one can easily obtain a $k$-block system $\mathscr{S} = \{S_v: v \in V\}$ by taking $S_v$ to be the first $k$ vertices in any breadth-first search starting at $v$. 
\end{proof}

\begin{theorem}
\label{th:glaubergrid}
Let $G=(V,E)$ be the toroidal $L \times L$-grid (with $n=L^2$ vertices), and fix $\lambda > 1$. For every positive 
integer $r \leq L-2$, if 
$q \geq 2^{r^2+8r+3}\, r^{2r^2 + 4r+1}\, \lambda^{2 + \frac{2}{r}}$ then for $\mathcal{M}_{\mathrm{GD}} = \mathcal{M}_{\mathrm{GD}}(G, \lambda, q)$, we have 
\[
\tau(\mathcal{M}_{\mathrm{GD}}, \varepsilon)  \leq  
 4 \, (4 \lambda^{8})^{r^2+1}\, n^2\, \log(2en)\, \left(n\log(4 \lambda^6) +
\log( \varepsilon^{-1}) \right).
\]
\end{theorem}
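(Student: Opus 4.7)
The plan is to combine the block dynamics bound from Theorem~\ref{th:grid} with the comparison inequality from Corollary~\ref{co:comparison}. Concretely, for the toroidal $L \times L$ grid $G$, take $\mathscr{S}$ to be the $r^2$-block system consisting of all $r \times r$ subgrids, with $\psi$ the uniform distribution on $\mathscr{S}$. Here $\Delta = 4$ and the maximum block size is $s = r^2$. The hypothesis $q \geq 2^{r^2+8r+3}\, r^{2r^2+4r+1}\, \lambda^{2+2/r}$ is exactly the condition of Theorem~\ref{th:grid}, which therefore yields
\[
\tau(\mathcal{M}_{\mathrm{BD}}^{\mathscr{S},\psi}, \delta) \leq \frac{2n\log(n\,\delta^{-1})}{r^2}
\]
for any $\delta > 0$. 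Specialising to $\delta = 1/(2e)$ gives $\tau(\mathcal{M}') \leq 2n\log(2en)/r^2$.

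Next I apply Corollary~\ref{co:comparison} with $\mathcal{M} = \mathcal{M}_{\mathrm{GD}}(G,\lambda,q)$ and $\mathcal{M}' = \mathcal{M}_{\mathrm{BD}}^{\mathscr{S},\psi}(G,\lambda,q)$. There are two cases depending on whether $q < \Delta\lambda^{\Delta}+1 = 4\lambda^4 + 1$. In the case $q < 4\lambda^4 + 1$, plugging $\Delta = 4$, $s = r^2$, and the block dynamics bound above into the corollary gives
\[
\tau(\mathcal{M}_{\mathrm{GD}}, \varepsilon) \leq 2r^2\,(4\lambda^{8})^{r^2+1}\, \frac{2n\log(2en)}{r^2}\, n\, \bigl(n\log(4\lambda^{6}) + \log(\varepsilon^{-1})\bigr),
\]
and the factors of $r^2$ cancel to produce exactly the stated bound $4(4\lambda^{8})^{r^2+1}\,n^2\log(2en)\bigl(n\log(4\lambda^{6})+\log(\varepsilon^{-1})\bigr)$. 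In the other case $q \geq 4\lambda^{4}+1$, the corollary gives the much stronger bound $5\, n\log(n\varepsilon^{-1})$, which is trivially dominated by the bound in the theorem statement, so the claim holds in either case.

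The proof therefore reduces to bookkeeping: combine Theorem~\ref{th:grid} with Corollary~\ref{co:comparison}, substitute $\Delta = 4$ and $s = r^2$, and observe that the factor $r^2$ from the numerator in the comparison bound cancels against the $1/r^2$ from the block dynamics mixing time. There is no genuine obstacle here since all the work has already been done in the preceding sections; the only thing to check carefully is that the two subcases of Corollary~\ref{co:comparison} are both covered by the stated bound, which is immediate.
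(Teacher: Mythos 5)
Your proposal is correct and is exactly the paper's argument: apply Corollary~\ref{co:comparison} (in the case $q < 4\lambda^4+1$, with the other case handled by the trivially stronger Proposition~\ref{le:vertex} bound) to the block dynamics mixing time from Theorem~\ref{th:grid}, with $\Delta=4$, $s=r^2$, and the $r^2$ factors cancelling as you describe.
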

\begin{proof}
We apply Corollary~\ref{co:comparison} to the mixing time of the block dynamics in Theorem~\ref{th:grid}.
(Recall that the block system used is the set of $r \times r$ subgrids.)
\end{proof}

\section{An extremal problem}
\label{se:extremal}

In this section, we investigate how large the partition function of a bounded-degree graph can be. We require this result in the next section, where we give bounds on the number of colours below which Glauber dynamics mixes slowly, although the result may be of independent interest.

In this section, we allow graphs to have multiple edges, but not loops.
For fixed numbers $n$ the number of vertices, $m$ the number of edges, $\Delta$ the maximum degree, $\lambda \geq 1$ the activity, and $q$ the number of colours, we define
\[
 Z((n,m, \Delta), \lambda, q) = \max_{G} Z(G, \lambda, q), 
\]
where the maximum is over all graphs $G$ with $n$ vertices, $m$ edges, and maximum degree $\Delta$. 

We now describe the class of graphs that will turn out to be extremal for the above parameter.
Fix positive integers $n$, $m$, and $\Delta$ such that $\Delta$ divides $m$ and 
$m \leq \Delta n /2$.
Let $H(n,m,\Delta)=(V,E)$, where $V$ is a set of $n$ vertices and $E$ is obtained by taking any set of $m/\Delta$ independent edges on $V$ and replacing each edge with $\Delta$ multi-edges. Thus $H(n,m,\Delta)$ has $m$ edges and maximum degree $\Delta$.

The main result of this section is the following.
\begin{theorem}
\label{thm:ext}
If $G$ is an $n$-vertex graph with $m$ edges and maximum degree $\Delta$, and $q \in \field{N}$ and $\lambda \geq 1$ are given, then
\[
Z(G, \lambda, q) \leq \left( 1 + q^{-1}(\lambda^\Delta-1) \right)^{\lceil m/\Delta\rceil} q^n.
\] 
In particular, if $ \Delta$ divides $m$, we have equality above for $G = H(n,m,\Delta)$.
\end{theorem}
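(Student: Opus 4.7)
The plan is to handle the equality when $\Delta \mid m$ by a direct multiplicativity computation, then establish the general upper bound through an iterative vertex-peeling inequality combined with a convexity argument. This avoids any case analysis based on whether $G$ contains a vertex of degree exactly $\Delta$ or whether $\Delta$ divides $m$.

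For the equality, $H(n,m,\Delta)$ is the disjoint union of $m/\Delta$ ``thickened edges'' $T_\Delta$ (each a pair of vertices joined by $\Delta$ parallel edges) together with $n - 2m/\Delta$ isolated vertices. Multiplicativity of $Z$ over disjoint unions and the single-component calculation $Z(T_\Delta) = q\lambda^\Delta + q(q-1)$ yield
\[ Z(H(n,m,\Delta)) = \bigl(q(q-1+\lambda^\Delta)\bigr)^{m/\Delta}\,q^{\,n-2m/\Delta} = q^n\bigl(1 + q^{-1}(\lambda^\Delta - 1)\bigr)^{m/\Delta}, \]
as claimed.

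For the upper bound, fix an arbitrary ordering $v_1,\dots,v_n$ of $V$, set $G_i := G - \{v_1,\dots,v_{i-1}\}$, and let $d_i^* := \deg_{G_i}(v_i)$. Factoring $Z(G_i)$ over the colour of $v_i$ and letting $a_c$ denote the multiplicity-counted number of $G_i$-neighbours of $v_i$ coloured $c$ (so $\sum_c a_c = d_i^*$), a log-convexity smoothing argument shows $\sum_{c \in [q]} \lambda^{a_c} \leq \lambda^{d_i^*} + q - 1 = q\alpha_{d_i^*}$, where $\alpha_k := 1 + q^{-1}(\lambda^k - 1)$. Thus $Z(G_i) \leq q\alpha_{d_i^*}\,Z(G_{i+1})$, and telescoping over $i = 1, \dots, n$ yields
\[ Z(G) \leq q^n \prod_{i=1}^{n} \alpha_{d_i^*}. \]

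Finally, each edge of $G$ is charged exactly once in $\sum_i d_i^*$ (to whichever endpoint is peeled first), so $\sum_i d_i^* = m$, and clearly $0 \leq d_i^* \leq \Delta$. The function $g(x) := \log \alpha_x$ satisfies $g(0) = 0$, is increasing in $x$, and is convex on $[0,\Delta]$ (direct differentiation gives $g''(x) = \lambda^x(\log\lambda)^2 (q-1)/(q-1+\lambda^x)^2 \geq 0$). Since $\sum_i g(x_i)$ is convex on the polytope $\{x \in [0,\Delta]^n : \sum_i x_i = m\}$, its maximum is attained at an extreme point with all coordinates in $\{0,\Delta\}$ save possibly one equal to $m \bmod \Delta$, giving
\[ \sum_i g(d_i^*) \leq \lfloor m/\Delta \rfloor g(\Delta) + g(m\bmod\Delta) \leq \lceil m/\Delta \rceil g(\Delta), \]
the final step using $g(m\bmod\Delta) \leq g(\Delta)$ when $m \bmod \Delta > 0$. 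Exponentiating yields $\prod_i \alpha_{d_i^*} \leq \alpha^{\lceil m/\Delta \rceil}$ with $\alpha := \alpha_\Delta$, which combined with the telescoping bound completes the proof. The only step requiring genuine care is the log-convexity bound $\sum_c \lambda^{a_c} \leq \lambda^{d_i^*} + q - 1$, but this follows by iterating the two-term inequality $\lambda^{a_1} + \lambda^{a_2} \leq \lambda^{a_1 + a_2} + 1$, which is immediate from $(\lambda^{a_1} - 1)(\lambda^{a_2} - 1) \geq 0$.
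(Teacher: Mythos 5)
Your proof is correct, and it takes a genuinely different route from the paper's. The paper decomposes the edge set into $\Delta$ spanning forests of nearly equal size (Nash--Williams), observes that the number of monochromatic edges of a forest under a uniformly random colouring is exactly $\bin(m',q^{-1})$, and then decouples the forests via a generalised H\"older inequality, reducing everything to a single binomial moment computation; the exponent $\lceil m/\Delta\rceil$ enters through the balancing of the forest sizes. You instead peel vertices one at a time, bound each local sum $\sum_c\lambda^{a_c}$ by $\lambda^{d_i^*}+q-1$ via the elementary merging inequality $(\lambda^a-1)(\lambda^b-1)\geq 0$, and then optimise $\prod_i\alpha_{d_i^*}$ over the polytope $\{x\in[0,\Delta]^n:\sum_i x_i=m\}$ using convexity of $\log\alpha_x$; here $\lceil m/\Delta\rceil$ emerges from the extreme-point structure of the polytope. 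All the steps check out: each edge is charged exactly once in $\sum_i d_i^*$, the uniform bound on $\sum_c\lambda^{a_c}$ holds for every colouring of $G_{i+1}$ because $\sum_c a_c=d_i^*$ is fixed, the argument handles multi-edges (as required in this section of the paper), and the equality computation for $H(n,m,\Delta)$ is right. Your argument is more elementary and self-contained --- no forest decomposition, no probabilistic coupling, no H\"older --- at the cost of the slightly fiddly extreme-point analysis; the paper's probabilistic framing has the advantage that the intermediate facts (the binomial law on forests, the H\"older decoupling) are reusable elsewhere.
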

This will immediately give us the following corollary.
\begin{corollary}
\label{cor:ext}
Let $n,m,\Delta \in \field{N}$ be fixed. Given a number of colours $q$, and activity $\lambda \geq 1$, we have
\[
Z((n,m,\Delta), \lambda, q) \leq \left(1 + q^{-1}(\lambda^\Delta - 1) \right)^{\lceil m/\Delta\rceil} q^n.
\]
\end{corollary}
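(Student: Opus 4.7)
The plan is essentially trivial: Corollary~\ref{cor:ext} is a direct consequence of Theorem~\ref{thm:ext}, since the right-hand side of the bound in Theorem~\ref{thm:ext} depends only on the parameters $(n,m,\Delta,\lambda,q)$ and not on the graph $G$ itself. I would simply observe this and take the maximum over all admissible graphs on the left-hand side.

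More concretely, by definition
\[
Z((n,m,\Delta),\lambda,q) = \max_{G} Z(G,\lambda,q),
\]
where the maximum ranges over all $n$-vertex graphs $G$ with $m$ edges and maximum degree $\Delta$. Theorem~\ref{thm:ext} asserts that for every such $G$,
\[
Z(G,\lambda,q) \;\leq\; \bigl(1 + q^{-1}(\lambda^\Delta - 1)\bigr)^{\lceil m/\Delta\rceil}\, q^n.
\]
Since this upper bound is independent of $G$, taking the maximum over $G$ preserves it, which yields exactly the claimed inequality.

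No obstacle should arise here, as the corollary is really just a restatement of Theorem~\ref{thm:ext} in the language of the extremal quantity $Z((n,m,\Delta),\lambda,q)$. The only mild subtlety worth noting is that Theorem~\ref{thm:ext} establishes the bound for all admissible $G$ (not merely for graphs in which $\Delta$ divides $m$, for which equality can be achieved by $H(n,m,\Delta)$); this is precisely why taking the maximum on the left-hand side is legitimate without any further argument. The whole derivation therefore amounts to a single line.
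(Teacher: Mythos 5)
Your proposal is correct and matches the paper's own treatment: the paper explicitly states that Corollary~\ref{cor:ext} follows immediately from Theorem~\ref{thm:ext}, and the argument is exactly the one you give — the bound in Theorem~\ref{thm:ext} is uniform in $G$, so taking the maximum over all admissible $G$ preserves it.
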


We begin by giving a brief outline of the proof. Given an $n$-vertex multigraph $G=(V,E)$, and a uniformly random
configuration $\sigma$ of $V$ (i.e. $\sigma$ is a uniformly random element of $[q]^V$),
 let $X$ be the number of monochromatic edges of $G$ in $\sigma$. Observe that  $Z(G, \lambda, q) = \field{E}( \lambda^X )q^n$.
 We proceed by decomposing the edges of $G$ into $\Delta$ forests with $\lceil m/\Delta\rceil$ or $\lfloor m/\Delta\rfloor$ edges each.
 Then we establish that the number of monochromatic edges in a forest with $m'$ edges is distributed as $X \sim \bin(m', q^{-1})$. 
 This allows us to obtain a bound on $\field{E}(\lambda^X)$ and hence prove Theorem~\ref{thm:ext}.  
\begin{lemma}
\label{le:forest}
Let $G=(V,E)$ be a multigraph with $n$ vertices, $m$ edges, and maximum degree $\Delta$. We can find $\Delta$ spanning 
forests $F_1,\ldots, F_\Delta$ on the vertex set $V$ such that each $F_i$ has  $\lceil m/\Delta\rceil$ or $\lfloor m/\Delta\rfloor$  edges and the edges of $F_1,\ldots, F_\Delta$ form a partition of $E$.  
\end{lemma}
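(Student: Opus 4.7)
The plan is to proceed in two stages: first establish existence of some $\Delta$-forest decomposition of $E(G)$, and then rebalance the sizes via a matroid-augmentation argument.

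For existence, I would invoke the Nash--Williams arboricity theorem, which asserts that $E(G)$ decomposes into $\Delta$ forests iff every subgraph $H$ with at least one edge satisfies $|E(H)|\leq \Delta(|V(H)|-1)$. Since $G$ is loopless, any such $H$ has $|V(H)|\geq 2$, and the maximum-degree bound yields $|E(H)|\leq \Delta|V(H)|/2\leq \Delta(|V(H)|-1)$, verifying the hypothesis. For the multigraph setting the only mildly delicate instance is $|V(H)|=2$, which requires at most $\Delta$ parallel edges between the two vertices, but this is immediate from the max-degree hypothesis.

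For balancing, among all $\Delta$-forest decompositions I would choose one $(F_1,\ldots,F_\Delta)$ minimising $\Phi := \sum_i |F_i|^2$, and claim $\max_i|F_i|-\min_i|F_i|\leq 1$. Suppose for contradiction (after relabelling) that $|F_1|\geq |F_2|+2$. Both $F_1$ and $F_2$ are independent sets in the graphic matroid of the multigraph $H := F_1\cup F_2$, and $|F_2|<|F_1|$, so the matroid augmentation axiom yields some $e\in F_1\setminus F_2 = F_1$ (the sets are disjoint) with $F_2+e$ still a forest. Replacing $(F_1,F_2)$ by $(F_1-e,\, F_2+e)$ gives another valid $\Delta$-forest decomposition, and $\Phi$ changes by $(|F_1|-1)^2+(|F_2|+1)^2-|F_1|^2-|F_2|^2 = 2-2(|F_1|-|F_2|)\leq -2$, strictly decreasing and contradicting the minimality of $\Phi$.

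I do not foresee a serious obstacle in this approach: both steps are essentially routine invocations of standard matroid machinery, with the key technical ingredient being the augmentation axiom in the graphic matroid of $F_1\cup F_2$. The slight subtlety (and the reason a naive induction on $m$ would be awkward) is that in the balancing step one cannot in general move a single edge from $F_1$ to $F_2$ without first verifying the forest condition; framing the move as a matroid augmentation makes this immediate.
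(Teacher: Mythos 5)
Your proof is correct. The balancing step is in substance the same as the paper's: the paper observes that if $|F_i|>|F_j|+1$ then the spanning forest $F_i$ has fewer components than $F_j$ and hence contains an edge joining two components of $F_j$, which is exactly the graphic-matroid augmentation axiom you invoke; you simply make the termination argument explicit with the potential $\sum_i|F_i|^2$, where the paper says ``iteratively applying this operation\dots eventually results in'' balanced sizes. The existence step is where you genuinely diverge: the paper constructs the decomposition greedily, peeling off a maximum spanning forest at each stage and noting that this drops the maximum degree by at least one, so $\Delta$ rounds suffice; you instead cite the Nash--Williams arboricity theorem and verify its density hypothesis $|E(H)|\leq\Delta(|V(H)|-1)$ from the degree bound (the paper cites Nash--Williams too, but then gives the elementary greedy argument ``for completeness''). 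Your route is shorter but leans on a black box; the paper's is self-contained. Your verification of the Nash--Williams condition is correct for loopless multigraphs (the handshake bound gives $|E(H)|\leq\Delta|V(H)|/2\leq\Delta(|V(H)|-1)$ once $|V(H)|\geq 2$), which matches the paper's standing convention in that section that multiple edges are allowed but loops are not.
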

\begin{proof}
Recall that the \emph{size} of a graph is the number of edges in the graph.  
We begin by disregarding the condition that the forests should have almost equal size,
 and decompose (the edge set of) $G$ into (the edge sets) of $\Delta$ spanning forests, as follows.
(This follows from \cite{NW}, but for completeness we give a brief proof.) Let $G_1:=G$. Iteratively define $F_i$ to be a spanning forest of $G_i$ of maximum size, and let $G_{i+1}$ be obtained from $G_i$ by deleting the edges of $F_i$. 
By removing the edges of $F_i$ from 
$G_i$, we reduce the degree of every non-isolated vertex in $G_i$ by at least one, and so, in 
particular, we reduce the maximum degree of $G_i$ by at least one. Thus $G_r$ is the empty graph
for some $r \leq \Delta$, 
 giving a decomposition of (the edge set of) $G$ into (the edge sets of) $\Delta$ spanning forests, $F_1, \ldots, F_\Delta$ (some of which may have no edges).  

We denote the size of $F_i$ by $|F_i|$. 
Observe that if $|F_i| > |F_j|+1$ then $F_i$ has fewer components than $F_j$ (since all the forests are spanning), so $F_i$ has at least one edge that connects two components of $F_j$. Removing this edge from $F_i$ and adding it to $F_j$ keeps both $F_i$ and $F_j$ acyclic, but reduces the imbalance in their sizes. Iteratively applying this operation to any pair of forests whose sizes differ by at least two eventually results in all forests having size $\lceil m/\Delta\rceil$ or $\lfloor m/\Delta\rfloor$.
\end{proof}

\begin{lemma}
\label{le:bin}
Let $F=(V,E)$ be a forest and let $\sigma$ be a uniformly random configuration of $V$ (i.e. $\sigma$ is a uniformly random element of $[q]^V$). Let $X$ be the number of monochromatic edges of $F$.  Then $X \sim \bin(m, q^{-1})$, where $m$ is the number of edges in $F$.
\end{lemma}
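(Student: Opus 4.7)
The plan is to prove this by induction on the number of edges $m$ of the forest $F$, using the standard fact that in any forest with at least one edge there exists a leaf.

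The base case $m = 0$ is immediate: $X$ is deterministically $0$, matching $\bin(0, q^{-1})$. For the inductive step, suppose the claim holds for all forests with $m - 1$ edges, and let $F$ be a forest with $m \geq 1$ edges. I would pick a leaf $v$ of $F$, let $e = \{u, v\}$ be its unique incident edge, and let $F' = F - v$, which is a forest with $m - 1$ edges on vertex set $V \setminus \{v\}$. Write $\sigma'$ for the restriction of $\sigma$ to $V \setminus \{v\}$, and let $X'$ be the number of monochromatic edges of $F'$ under $\sigma'$. Since $\sigma$ is uniform on $[q]^V$, the restriction $\sigma'$ is uniform on $[q]^{V \setminus \{v\}}$, so by the inductive hypothesis $X' \sim \bin(m-1, q^{-1})$.

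The key observation is then to decompose $X = X' + Y$, where $Y = \mathbb{1}[\sigma(v) = \sigma(u)]$ is the indicator that $e$ is monochromatic. Conditioning on $\sigma'$, the colour $\sigma(v)$ is uniform on $[q]$ and independent of $\sigma'$; in particular, independent of the colour assigned to $u$. Therefore $\Prob(Y = 1 \mid \sigma') = 1/q$, so $Y \sim \Ber(q^{-1})$ and $Y$ is independent of $X'$. Adding an independent $\Ber(q^{-1})$ variable to a $\bin(m-1, q^{-1})$ variable yields a $\bin(m, q^{-1})$ variable, which completes the induction.

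There is really no obstacle here; the only subtlety is making explicit the independence of $Y$ from $X'$, which uses crucially that $v$ is a leaf so that $\sigma(v)$ enters only into the event $\{Y = 1\}$ and not into any event determining $X'$. This is precisely why the forest hypothesis is needed: in a graph with a cycle, the events ``edge $e_i$ is monochromatic'' are no longer jointly independent (for instance, in a triangle, two monochromatic edges force the third).
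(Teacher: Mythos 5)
Your proof is correct and is essentially the same argument as the paper's: the paper roots each tree and colours vertices in a parent-first order so that each edge is independently monochromatic with probability $q^{-1}$, which is just the unrolled form of your leaf-removal induction. No issues.
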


\begin{proof}
It is sufficient to consider the case when $F$ is a tree. For if not, then we can consider the components of $F$ independently, and use the fact that the sum of $t$ independent binomial random variables of the form $\bin(m_j,p)$ is a binomial random variable $\bin(m_1+\cdots +m_t,p)$.

Now assume that $F$ is a tree, and root $F$ at a vertex $v_0$. Let $v_0, \ldots, v_{n-1}$ be any ordering of the vertices in $V$ such that for every $i$, the parent of $v_i$ is a member of $\{v_1,  \ldots, v_{i-1}\}$. We generate a uniformly random configuration of $V$ by colouring each vertex with a uniformly random colour from $[q]$, independently, in the specified order. Each vertex has probability $1/q$ of being given the same colour as its parent, independently of all previous choices,
and hence each edge has probability $1/q$ of being monochromatic, independently of all previous choices.
 Therefore the total number of monochromatic edges satisfies $X \sim \bin(m, q^{-1})$.
\end{proof}

We will also need the following result, which follows from a generalization of H{\" o}lder's inequality. 

\begin{lemma}
\label{le:dom}
Let $(X_1, \ldots, X_d)$ be a random,  
$\field{R}^d$-valued vector, and suppose there exists a random variable $X$ such that $X_i \sim X$ for all $i = 1, \ldots, d$. Then for all $\lambda > 0$
we have
\[
 \field{E}(\lambda^{X_1 + \cdots + X_d}) \leq \field{E}(\lambda^{dX}).
\] 
\end{lemma}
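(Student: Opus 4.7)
The plan is to apply the generalized Hölder inequality with all exponents equal to $d$. Recall that for nonnegative random variables $Y_1,\ldots, Y_d$ and exponents $p_1,\ldots, p_d > 0$ satisfying $\sum_{i=1}^d 1/p_i = 1$, one has
\[
\field{E}(Y_1 \cdots Y_d) \leq \prod_{i=1}^d \field{E}(Y_i^{p_i})^{1/p_i}.
\]
I would take $Y_i = \lambda^{X_i}$ (which is nonnegative, as $\lambda > 0$) and $p_i = d$ for every $i$, so that indeed $\sum_{i=1}^d 1/p_i = 1$.

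With this choice, the left-hand side becomes $\field{E}(\lambda^{X_1 + \cdots + X_d})$, and the inequality gives
\[
\field{E}(\lambda^{X_1 + \cdots + X_d}) \leq \prod_{i=1}^d \field{E}(\lambda^{d X_i})^{1/d}.
\]
The final step is simply to invoke the hypothesis that each $X_i$ is distributed as $X$; this ensures $\field{E}(\lambda^{d X_i}) = \field{E}(\lambda^{d X})$ for every $i$, so the product on the right collapses to $\field{E}(\lambda^{d X})$, proving the claim.

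There is really no significant obstacle. The only point worth flagging is that the inequality is vacuous (both sides are $+\infty$ or the bound is $+\infty$) whenever $\field{E}(\lambda^{dX})$ fails to be finite, so no integrability hypothesis needs to be assumed explicitly. The whole argument is two lines once one recognizes that ``equal weights'' in Hölder converts the additivity of the $X_i$ in the exponent into the factor $d$ on $X$ in the bound.
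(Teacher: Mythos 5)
Your proof is correct and is essentially identical to the paper's: both apply the generalised H\"older inequality with $Y_j = \lambda^{X_j}$ and all exponents $p_j = d$, then use the identical distribution of the $X_i$ to collapse the product to $\field{E}(\lambda^{dX})$.
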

\begin{proof}
Let $Z_j = \lambda^{X_j}$ and $p_j = d$ for $j=1,\ldots, d$.
Then the result follows from 
the generalised H{\" o}lder's inequality, which states
that
\[ \E\left(\prod_{j=1}^d |Z_j|\right) \leq \prod_{j=1}^d \left(\E |Z_j|^{p_j}\right)^{1/p_j}
\]
for any random variables $Z_1,\ldots, Z_d$ 
and any $p_j\geq 1$ such that $\sum_{j=1}^d 1/p_j = 1$.
(See for example~\cite{Finner}.)
\end{proof}

We are now ready to prove Theorem~\ref{thm:ext}.

\begin{proof}[Proof of Theorem~\ref{thm:ext}]\
By Lemma~\ref{le:forest}, we can decompose the edges of $G$ into $\Delta$ spanning 
forests $F_1, \ldots, F_\Delta$, such that $m_i$, the number of edges in $F_i$, is either $\lceil m/\Delta\rceil$ or $\lfloor m/\Delta\rfloor$. 

Let $\sigma$ be a uniformly random configuration of $V$ (i.e. $\sigma$ is a uniformly random element of $[q]^V$,
and let $X_i$ be the number of monochromatic edges of $F_i$ in the configuration $\sigma$. We know by Lemma~\ref{le:bin} that $X_i \sim \bin(m_i ,q^{-1})$. Then $\mon(\sigma)$, the number of  monochromatic edges of $G$ in $\sigma$, is given by 
$ \mon(\sigma) = X_1 + \cdots + X_\Delta$
and 
\begin{align*}
Z(G, \lambda, q) = q^n \, \field{E}(\lambda^{\mon(\sigma)}) = q^n\, \field{E}(\lambda^{X_1 + \cdots + X_\Delta}) .
\end{align*}

For each $i=1, \ldots, \Delta$, choose $Y_i \sim\bin(\lceil m/\Delta\rceil ,q^{-1})$ such that $\field{P}(Y_i \geq X_i) =1$. Then using the above and Lemma~\ref{le:dom}, we have
\begin{align*}
Z(G, \lambda, q) \leq q^n\, \field{E}(\lambda^{X_1 + \cdots + X_\Delta}) \leq q^n\, \field{E}(\lambda^{Y_1 + \cdots + Y_\Delta}) &\leq q^n\, \field{E}(\lambda^{\Delta Y_1}) \\ &= q^n\, (1 + q^{-1}(\lambda^\Delta-1))^{\lceil m/\Delta \rceil}.
\end{align*}
The last equality holds because $Y_1 \sim \bin(\lceil m/\Delta\rceil,q^{-1})$, so
\[
\field{E}(\lambda^{\Delta Y_1}) 
= \sum_{i=0}^{\lceil m/\Delta \rceil} \binom{\lceil m/\Delta \rceil}{i}\, q^{-i}(1-q^{-1})^{\lceil m/\Delta\rceil -i} \, \lambda^{\Delta i} = (1 + q^{-1}(\lambda^\Delta-1))^{\lceil m/\Delta \rceil}.
\]

Finally, it is easy to check that 
$Z(H(n,m,\Delta), \lambda, q)   = q^n\, (1 + q^{-1}(\lambda^\Delta-1))^{m/\Delta }$ when $\Delta$ divides $m$.
 \end{proof}

\section{Slow mixing}
\label{se:tormix}

We have seen in Section~\ref{se:GD} that for general graphs with maximum degree $\Delta$, the Glauber dynamics mixes rapidly if $q \geq \Delta \lambda^\Delta + 1$.  Some improvements on this were given in Section~\ref{se:comparison}. In this section, we shall see that these general bounds cannot be improved by much (in terms of the exponent of $\lambda$). We give a bound on the number of colours below which Glauber dynamics almost surely mixes slowly for a uniformly random $\Delta$-regular graph.

The technical tool used for most slow-mixing proofs is conductance~\cite{JS89}.
We now introduce the necessary definitions: for convenience, we follow
the treatment given in~\cite{DGJM}. 
Again, $\mathcal{M}$ is a Markov chain with state space $\Omega$, 
transition matrix $P$ and stationary distribution $\pi$.
For $A,B \subseteq \Omega$, define
\[
Q_{\mathcal{M}}(A,B) = \sum_{x \in A,\,  y \in B} \pi(x)P(x,y).
\] 
We define 
\[
\Phi_{\mathcal{M}}(A) = \frac{Q_{\mathcal{M}}(A, \overline{A})}{\pi(A)\pi(\overline{A})},
\]
where $\overline{A}:= \Omega \setminus A$. Finally, we define the 
\emph{conductance} of $\mathcal{M}$ as
\[
 \Phi_{\mathcal{M}} := \min_{A \subseteq \Omega} \Phi_{\mathcal{M}}(A).
\]
We drop the subscript when the Markov chain is clear from the context.
Recall that $\tau(\mathcal{M}) = \tau(\mathcal{M},\nfrac{1}{2e})$.
Conductance gives a lower bound for the mixing time of a Markov chain via the following result.

\begin{theorem} \emph{\cite[Theorem 17]{DGJM}}\,  \label{th:con}
Let $\mathcal{M}$ be an ergodic Markov chain with transition matrix $P$, stationary distribution $\pi$ 
and conductance $\Phi$.  Then
\[ \tau(\mathcal{M}) \geq \frac{e-1}{2e\, \Phi_{\mathcal{M}}}.\]
\end{theorem}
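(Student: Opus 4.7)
The plan is to witness slow mixing by starting the chain from a distribution concentrated on a set $A$ of minimum conductance, and showing that the chain leaks out of $A$ slowly. Since $\Omega$ is finite, we may fix $A \subseteq \Omega$ that attains the minimum in the definition of $\Phi_{\mathcal{M}}$; replacing $A$ by $\overline{A}$ if necessary (which preserves $\Phi(A)$ by reversibility) we may assume $\pi(A) \leq 1/2$. Define the starting distribution
\[
 \mu_0(x) = \begin{cases} \pi(x)/\pi(A) & \text{if } x \in A, \\ 0 & \text{otherwise,} \end{cases}
\]
and let $\mu_t = \mu_0 P^t$. The goal is to show that $\mu_t(A)$ stays large for $t \leq (e-1)/(2e\,\Phi)$ and then transfer this to a single worst-case starting state by convexity.

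The key technical step is a pointwise invariant: $\mu_t(x) \leq \pi(x)/\pi(A)$ for all $t \geq 0$ and all $x \in \Omega$. This holds at $t=0$ by construction, and propagates because
\[
\mu_{t+1}(y) \;=\; \sum_x \mu_t(x) P(x,y) \;\leq\; \sum_x \frac{\pi(x)}{\pi(A)} P(x,y) \;=\; \frac{\pi(y)}{\pi(A)},
\]
where the last equality uses stationarity of $\pi$ under $P$. Using this invariant, I bound the one-step flow out of $A$ by
\[
 \sum_{x \in A,\, y \in \overline{A}} \mu_t(x)\, P(x,y) \;\leq\; \frac{1}{\pi(A)}\sum_{x \in A,\, y \in \overline{A}} \pi(x) P(x,y) \;=\; \frac{Q_{\mathcal{M}}(A,\overline{A})}{\pi(A)} \;=\; \Phi(A)\,\pi(\overline{A}) \;\leq\; \Phi_{\mathcal{M}},
\]
so by induction $\mu_t(A) \geq 1 - t\,\Phi_{\mathcal{M}}$. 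Since $\pi(A) \leq 1/2$, this yields
\[
 \|\mu_t - \pi\|_{TV} \;\geq\; \mu_t(A) - \pi(A) \;\geq\; \tfrac{1}{2} - t\,\Phi_{\mathcal{M}}.
\]

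For $t < (e-1)/(2e\,\Phi_{\mathcal{M}}) = (1/2 - 1/(2e))/\Phi_{\mathcal{M}}$ the right-hand side strictly exceeds $1/(2e)$, so $\|\mu_t - \pi\|_{TV} > 1/(2e)$. Finally, because $\mu_t = \sum_{x \in A}\mu_0(x)\,P^t(x,\cdot)$ is a convex combination and total variation distance is convex, there must exist some $x \in A$ with $\|P^t(x,\cdot) - \pi\|_{TV} > 1/(2e)$, giving $\tau(\mathcal{M}) = \tau(\mathcal{M},\tfrac{1}{2e}) > t$, and letting $t \to (e-1)/(2e\,\Phi_{\mathcal{M}})$ from below completes the argument.

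The only part requiring any real care is choosing the initial distribution correctly so that the pointwise invariant $\mu_t \leq \pi/\pi(A)$ holds — starting from a single bad state $x$ directly would not give such a bound and would complicate the flow estimate. Using $\pi|_A/\pi(A)$ and then passing to a worst-case $x$ via convexity of $\|\cdot\|_{TV}$ at the very end is what makes the constant come out cleanly as $(e-1)/(2e)$.
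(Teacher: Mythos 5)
Your proof is correct. The paper does not prove this statement itself (it is quoted from [DGJM, Theorem 17]), and your argument is precisely the standard conductance lower bound used there: start from $\pi$ conditioned on a minimising set $A$ with $\pi(A)\le\tfrac12$, use the invariant $\mu_t\le\pi/\pi(A)$ to bound the per-step leakage by $\Phi_{\mathcal{M}}$, and pass to a worst-case starting state by convexity of total variation distance. The only step left implicit is that $\|P^t(x,\cdot)-\pi\|_{TV}>\tfrac{1}{2e}$ implies $\tau_x>t$ under the paper's definition of $\tau_x$ as a first hitting time; this needs the standard fact that $t\mapsto\|P^t(x,\cdot)-\pi\|_{TV}$ is non-increasing (stochastic matrices contract total variation), which is worth one line.
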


Suppose now that $G=(V,E)$ is an $n$-vertex graph, $\lambda \geq 1$ is given, and $q$ is a number of colours. 
By Theorem~\ref{th:con}, in order to show that $\mathcal{M} = \mathcal{M}_{\mathrm{GD}}(G, \lambda, q)$ mixes slowly, 
it is sufficient to show that its conductance $\Phi_{\mathcal{M}}$ is exponentially small in $n$. 

We will need some more definitions. For $i\in [q]$  and $\sigma \in \Omega$,  define 
\[ \sigma_i = |\{v \in V: \sigma(v) = i\}|.\] 
Next, define the $r$-shell and $r$-ball  around a colour $i$ as follows:
\[
S_r(i) = \{ \sigma: \sigma_i = n-r \},\qquad
B_r(i) = \{ \sigma: \sigma_i \geq n-r \}.
\]
We see that $B_r(i)$ is the set of configurations at distance at most $r$ from the all-$i$ configuration, and $S_r(i)$ is the set of configurations at distance exactly $r$ from the all-$i$ configuration. To simplify notation, we write $B_r = B_r(1)$ and $S_r=S_r(1)$ for the $r$-ball and $r$-shell around colour 1.

For an $n$-vertex graph $G=(V,E)$ and $r$ is a positive integer satisfying $r \leq n/2$, we define
\[
 \alpha_r(G) = \frac{1}{r} \,\max_{\stackrel{S \subseteq V}{|S|=r}}\, e_G(S),
\]
where $e_G(S)$ is the number of edges of $G$ inside $S$.  
This quantity is low when the edge-expansion of $r$-vertex subgraphs of $G$ is high.
We now establish a uniform bound on the conductance of $\mathcal{M}_{\mathrm{GD}}(G,\lambda,q)$
which holds when $\alpha_r(G)$ and $q$ are sufficiently small.

\begin{lemma}
\label{lem:conductance}
Let $\lambda \geq 1$ and let $\Delta\geq 2$ be an integer.   
Fix $\kappa\in \big(1,\nfrac{\Delta}{2}\big]$ and let $\beta\in (0,1)$.
Suppose that $n\geq \beta^{-1}(2 + \Delta\log_2\lambda)$ is an 
integer and let $r=\lfloor \beta n\rfloor$.
Let $G$ be a $\Delta$-regular, $n$-vertex graph
such that $\alpha_r(G)\leq \kappa$.
Finally, suppose that $q\geq 2$ is an integer which satisfies
\begin{equation}
\label{q-assumption} 
q-1 
\leq \frac{\beta^2}{256\, e^2}\,\,
        \lambda^{\Delta- \kappa - \frac{\kappa^2}{\Delta - \kappa}}.
\end{equation}
Then the conductance of the Markov chain $\mathcal{M}=\mathcal{M}_{\mathrm{GD}}(G,\lambda,q)$ is bounded by
\begin{equation}
\label{eq:conductance}
\Phi_{\mathcal{M}} 
 \leq \frac{2}{\sqrt{2\pi r}}\, 2^{-r}.
\end{equation}
\end{lemma}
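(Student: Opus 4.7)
The plan is to use the conductance bound of Theorem~\ref{th:con} applied to the test set $A = B_r$, the $r$-ball around the all-1 configuration $\mathbf{1}$. Since Glauber dynamics updates one vertex per step, any transition from $B_r$ to $\bar{B_r}$ must originate in $S_r$, giving $Q(B_r, \bar{B_r}) \leq \pi(S_r)$. Colour symmetry of $\pi$ together with the pairwise disjointness of the sets $B_r(i)$ (valid provided $r < n/2$) yields $\pi(B_r) \leq 1/q \leq 1/2$, hence $\pi(\bar{B_r}) \geq 1/2$; combining with the trivial bound $\pi(B_r) \geq \pi(\mathbf{1}) = \lambda^m/Z$ gives $\Phi_{\mathcal{M}}(B_r) \leq 2\pi(S_r)/\pi(\mathbf{1})$. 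It therefore suffices to prove $\pi(S_r)/\pi(\mathbf{1}) \leq \frac{1}{\sqrt{2\pi r}}\, 2^{-r}$.

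For each $\sigma \in S_r$, set $T = \sigma^{-1}(\{2,\ldots,q\})$, a set of size $r$. A direct count using $\Delta$-regularity gives $\mu(\sigma) - m = -r\Delta + e(T) + \mu_{G[T]}(\sigma|_T)$, so
\[
 \frac{\pi(S_r)}{\pi(\mathbf{1})} = \sum_{|T|=r} \lambda^{e(T) - r\Delta}\, Z(G[T],\lambda,q-1).
\]
Apply Theorem~\ref{thm:ext} to $G[T]$ (which has maximum degree at most $\Delta$) to bound the inner partition function by $(1 + (q-1)^{-1}(\lambda^\Delta-1))^{\lceil e(T)/\Delta\rceil}(q-1)^r$; the hypothesis on $q$ forces $q-1 \leq \lambda^\Delta$, so this factor is at most $(2\lambda^\Delta/(q-1))^{\lceil e(T)/\Delta\rceil}(q-1)^r$. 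Using $\alpha_r(G) \leq \kappa$ to bound $e(T) \leq \kappa r$ and $\binom{n}{r}$ to count sets $T$, these substitutions produce
\[
 \frac{\pi(S_r)}{\pi(\mathbf{1})} \leq 2\lambda^\Delta\, \binom{n}{r}\, \Bigl[(q-1)^{(\Delta-\kappa)/\Delta}\, \lambda^{-(\Delta-2\kappa)}\, 2^{\kappa/\Delta}\Bigr]^r.
\]

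The exponent $\Delta - \kappa - \kappa^2/(\Delta-\kappa)$ appearing in \eqref{q-assumption} is engineered precisely for this step: the identity $\bigl((\Delta-\kappa)^2 - \kappa^2\bigr)/\Delta = \Delta - 2\kappa$ shows that raising the hypothesis on $q-1$ to the power $(\Delta-\kappa)/\Delta$ exactly cancels the $\lambda^{-(\Delta-2\kappa)}$. Using $(\Delta-\kappa)/\Delta \geq 1/2$ (from $\kappa \leq \Delta/2$) and $2^{\kappa/\Delta}\leq\sqrt{2}$, the bracketed quantity is bounded by $\beta/(8\sqrt{2}\,e)$. Stirling's bound $\binom{n}{r} \leq (en/r)^r/\sqrt{2\pi r}$, together with $r \geq \beta n/2$ (a consequence of $\beta n \geq 2$, which follows from the lower bound on $n$), gives $\binom{n}{r} \leq (2e/\beta)^r/\sqrt{2\pi r}$. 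Multiplying yields $\pi(S_r)/\pi(\mathbf{1}) \leq \frac{2\lambda^\Delta}{\sqrt{2\pi r}}\, 2^{-5r/2}$, and the lower bound $n \geq \beta^{-1}(2+\Delta\log_2\lambda)$ forces $r \geq 1 + \Delta\log_2\lambda$, which is exactly enough to absorb $2\lambda^\Delta$ into a factor of $2^{3r/2}$ and finish.

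The main obstacle is the disciplined book-keeping through these cascading estimates: the numerical constant $256 = 16^2$ and the specific form of the lower bound on $n$ are calibrated precisely so that the sharp prefactor $1/\sqrt{2\pi r}$ survives intact and so that inequalities such as $2\lambda^\Delta \leq 2^{3r/2}$ hold with exactly the available slack. A minor subtlety is the colour-symmetry step, which formally uses $r < n/2$: this is automatic for $\beta < 1/2$, which is the natural regime in which this slow-mixing bound is informative.
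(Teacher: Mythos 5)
Your proof is correct and follows essentially the same route as the paper's: bound the conductance of $B_r$ by $2\pi(S_r)/\pi(\mathbf{1})$, decompose $\pi(S_r)$ over the $r$-sets $T$ via $\lambda^{e(T)-\Delta r}Z(G[T],\lambda,q-1)$, invoke Theorem~\ref{thm:ext} together with $\alpha_r(G)\leq\kappa$, apply the Stirling bound on $\binom{n}{r}$, and cancel the $\lambda$-powers by raising \eqref{q-assumption} to the power $(\Delta-\kappa)/\Delta$; only minor bookkeeping differs (where the stray factor $2\lambda^{\Delta}$ is absorbed). Your colour-symmetry justification of $\pi(\overline{B_r})\geq 1/2$, with its implicit $r<n/2$ requirement, is simply an explicit version of the step the paper asserts without comment.
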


\begin{proof}
We bound $\Phi_{\mathcal{M}}$ by estimating $\Phi_{\mathcal{M}}(B_r)$. 
Let $P$ be the transition matrix for $\mathcal{M}$ and let $\pi$ be the stationary distribution of $\mathcal{M}$ (that is, the Gibbs distribution). We have
\begin{align*}
\Phi_{\mathcal{M}}   \leq   \Phi_{\mathcal{M}}(B_r)  = \frac{\sum_{x \in B_r,\, 
  y \in \overline{B_r}} \pi(x)P(x,y)}{\pi(B_r)\pi(\overline{B_r})} 
&=  \frac{\sum_{x \in S_r,\, y \in \overline{B_r}} \pi(x)P(x,y)}{\pi(B_r)\pi(\overline{B_r})} \\
&\leq \frac{\pi(S_r)}{\pi(B_r) \pi(\overline{B_r})} \\
& \leq \frac{2\, \pi(S_r)}{\pi(B_r)},
\end{align*}
where the last inequality follows because $\pi(\overline{B_r}) \geq \frac{1}{2}$ (assuming that $q \geq 2$). 

Let $Z=Z(G,\lambda,q)$ be the partition function and
write $m=\Delta n /2$ for the number of edges in $G$. Now 
$\pi(B_r) \geq Z^{-1} \lambda^m$ since the all-$1$ configuration belongs to $B_r$. Next we obtain a lower bound on $\pi(S_r)$.

Suppose that $A\subseteq V$ with $|A|=r$. Writing $E(A)$ for the set of edges 
of $G$ inside $A$, 
we know that
$|E(A)| \leq \alpha_r(G)r\leq \kappa r$. 
Observe that
$|E(A,\overline{A})| = \Delta r - 2|E(A)|$ because $\Delta r$ counts each edge in $E(A)$ twice.
Hence
\begin{align*}
 |E(\overline{A})| = m - |E(A,\overline{A})| - |E(A)| 
&= m - (\Delta r - 2|E(A)|) - |E(A)|  \\
&= m - \Delta r +|E(A)|\\
& \leq m - (\Delta - \kappa)r. 
\end{align*}
Therefore
\begin{align*}
\pi(S_r) 
= Z^{-1} \sum_{\sigma \in S_r} \lambda^{\mon(\sigma)} 
&= Z^{-1} \sum_{A \subseteq V : |A|=r} \lambda^{|E(\overline{A})|} \cdot Z(G[A],\lambda,q-1) \\
&\leq Z^{-1} \sum_{A \subseteq V : |A|=r} \lambda^{m - (\Delta - \kappa)r} \cdot Z(G[A],\lambda,q-1) \\
&\leq Z^{-1} \binom{n}{r} \lambda^{m - (\Delta - \kappa)r} \cdot 
Z((r,\lceil \kappa r \rceil, \Delta) ,\lambda,q-1).
\end{align*}
The final inequality uses the fact that when $\lambda\geq 1$, the partition function is nondecreasing under the 
addition of edges.
Combining these bounds shows that
\begin{equation}
\label{conductanceZ}  \Phi_{\mathcal{M}} 
 \leq 2\binom{n}{r} \lambda^{- (\Delta - \kappa)r} \cdot 
   Z((r,\lceil\kappa r \rceil,\Delta) ,\lambda,q-1).
\end{equation}
Using Corollary~\ref{cor:ext}, we have
\begin{align*}
Z((r,\lceil \kappa r\rceil,\Delta) ,\lambda,q-1) 
&\leq  (1 + (q-1)^{-1}\lambda^\Delta)^{\lceil \kappa r/\Delta \rceil}(q-1)^r \\
&\leq (2(q-1)^{-1}\lambda^\Delta)^{\lceil \kappa r/\Delta \rceil}(q-1)^r \\
&\leq 2\lambda^{\Delta}(2(q-1)^{-1}\lambda^\Delta)^{ \kappa r/\Delta }(q-1)^r \\
&\leq  \left( 4 \lambda^{\kappa} (q-1)^{\frac{\Delta - \kappa}{\Delta}} \right) ^r.
\end{align*}
Here the second 
inequality uses the fact that $q-1\leq \lambda^\Delta$
(which follows from (\ref{q-assumption})), 
and the final inequality follows since
$\kappa/\Delta \leq \nfrac{1}{2}$  as well as 
the fact that $2^r \geq 2\lambda^{\Delta}$ (by our choice of sufficiently large $n$).  
Substituting this into (\ref{conductanceZ}) and applying
the well-known inequality
\[  \binom{n}{r} \leq \frac{n^r}{r!} \leq 
\frac{1}{\sqrt{2\pi r}}   \left(\frac{en}{r}\right)^r    \]
gives
\[ \Phi_{\mathcal{M}} \leq
   \frac{2}{\sqrt{2\pi r}}\, \left( \frac{4en}{r}\, \lambda^{-\Delta - 2\kappa}\, (q-1)^{(\Delta-\kappa)/\Delta}
            \right)^{r}.
\]
Now raising both sides of (\ref{q-assumption}) to the power $(\Delta-\kappa)/\Delta$
and rearranging shows that
\[
\frac{4en}{r}\, \lambda^{- (\Delta - 2\kappa)} 
      (q-1)^{\frac{\Delta - \kappa}{\Delta}} \leq \frac{\beta n}{4r}\leq \frac12.
\]
Therefore
$\Phi_{\mathcal{M}} \leq \frac{2}{\sqrt{2\pi r}} \, 2^{-r}$,  
as claimed.
\end{proof}

Let $\mathcal{G}_{n,\Delta}$ denote the uniform probability space
of all $\Delta$-regular graphs on the vertex set $[n]=\{ 1,2,\ldots, n\}$,
restricting to $n$ even if $\Delta$ is odd.  That is, ``$G\in\mathcal{G}_{n,\Delta}$''
means that $G$ is a uniformly chosen $\Delta$-regular graph on the vertex set $[n]$.
In a sequence of probability spaces indexed by $n$, an event holds \emph{asymptotically almost
surely} (a.a.s.) if the probability that the event holds tends to 1 as $n\to\infty$.

Next, given $\kappa$ we show how to choose $r$ in order to ensure that with high probability,
a random $\Delta$-regular graph $G$ satisfies $\alpha_r(G)\leq \kappa$.

\begin{lemma} 
\label{le:expander}
Fix $\Delta \geq 3$ and let $\kappa \in \big(1,\nfrac{\Delta}{2}\big]$.  
Let 
\begin{equation} \label{eq:expan}
 \beta = \nfrac{1}{2}\, e^{-\big(1 + \frac{2}{\kappa-1}\big)}\,  
    \bigg( \frac{\Delta}{2\kappa} \bigg)^{ -\big( 1 + \frac{1}{\kappa-1}\big) },
\end{equation}  
and for each positive integer $n\geq \beta^{-1}$, define 
$r=r(n) = \lfloor \beta n\rfloor$, which is a positive integer.
Let $G\in\mathcal{G}_{n,\Delta}$.  Then a.a.s.\  $\alpha_r(G)\leq  \kappa$.
\end{lemma}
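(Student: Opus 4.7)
The plan is to apply the first moment method in the configuration model of random $\Delta$-regular graphs, and then transfer the result to $\mathcal{G}_{n,\Delta}$ using the standard fact that for fixed $\Delta$ the probability that the configuration model produces a simple graph is bounded below by a positive constant. The bad event $\alpha_r(G)>\kappa$ is the event that some $S\subseteq[n]$ with $|S|=r$ satisfies $e_G(S)\geq k$, where $k:=\lceil \kappa r\rceil$. Since $e_G(S)$ in any simple realisation is dominated by $e_G(S)$ in the multigraph produced by the configuration model, it suffices to show that the expected number of such $S$ tends to $0$ in the configuration model.

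For a fixed $S$, I would generate $G$ via a uniformly random perfect matching on the $\Delta n$ half-edges, of which exactly $\Delta r$ lie inside $S$. Choosing a $k$-element partial matching inside $S$ and completing it arbitrarily on the remaining $\Delta n - 2k$ half-edges (an overcount, since a configuration with $j\geq k$ internal edges is counted $\binom{j}{k}$ times) gives
\[
  \Prob(e_G(S)\geq k) \;\leq\; \binom{\Delta r}{2k}\,(2k-1)!!\,\frac{(\Delta n-2k-1)!!}{(\Delta n-1)!!}.
\]
The usual simplifications---$\binom{\Delta r}{2k}(2k-1)!!=(\Delta r)!/(2^k k!(\Delta r-2k)!)\leq (\Delta r)^{2k}/(2^k k!)$, $(\Delta n-2k-1)!!/(\Delta n-1)!!\leq (\Delta n-2k)^{-k}$, and $k!\geq (k/e)^k$---collapse the right-hand side to $\bigl(e(\Delta r)^2/(2k(\Delta n-2k))\bigr)^{k}$.

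Combining with the union bound $\binom{n}{r}\leq(en/r)^{r}$, substituting $r=\lfloor\beta n\rfloor$ and $k\geq\kappa r$, and using $\Delta n-2k\geq\tfrac{1}{2}\Delta n$ (which holds for the stated $\beta$, since \eqref{eq:expan} forces $\beta$ well below $\Delta/(4\kappa)$ when $\kappa>1$), the base of the first moment simplifies to
\[
  \frac{e}{\beta}\cdot\Bigl(\frac{e\Delta\beta}{2\kappa}\Bigr)^{\!\kappa} \;=\; e^{1+\kappa}\,(\Delta/(2\kappa))^{\kappa}\,\beta^{\kappa-1}.
\]
The value of $\beta$ in \eqref{eq:expan} is chosen precisely so that this base equals $\tfrac{1}{2}$, so $\E[\#\text{bad sets}]\leq 2^{-r}\to 0$, and by Markov's inequality $\alpha_r(G)\leq\kappa$ a.a.s.\ in the configuration model. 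Transfer to $\mathcal{G}_{n,\Delta}$ is then immediate.

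The main obstacle is arithmetical rather than conceptual: one must keep careful track of the factor $1/2$ inside $e\Delta\beta/(2\kappa)$ (which comes from the $(2k-1)!!$ term and is easy to drop from the $k!$ cancellation), and the $\tfrac12$ of slack built into the stated $\beta$ must absorb the lower-order corrections from $k=\lceil\kappa r\rceil$ versus $\kappa r$ and from $\Delta n-2k$ versus $\Delta n$. Once the exponents line up, the remaining bookkeeping---boundedness of $\beta$, positivity of $\Delta n-2k$, the inequality $n\geq\beta^{-1}$ ensuring $r\geq 1$---is straightforward.
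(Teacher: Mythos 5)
Your overall strategy (configuration model, first moment over $r$-sets with at least $\lceil\kappa r\rceil$ internal pairs, union bound, then conditioning on simplicity) is exactly the paper's. But there is a genuine quantitative gap in the simplification step, and in a lemma whose entire content is that this particular $\beta$ works, it is fatal as written. Your chain of inequalities gives the per-set bound $\bigl(e(\Delta r)^2/(2k(\Delta n-2k))\bigr)^{k}$, and with $\Delta n-2k\geq\tfrac12\Delta n$ and $k\geq\kappa r$ this is at most $\bigl(e\Delta r/(\kappa n)\bigr)^{k}\leq\bigl(e\Delta\beta/\kappa\bigr)^{k}$ --- the crude factor $\tfrac12$ from $\Delta n -2k\ge \tfrac12\Delta n$ exactly cancels the $\tfrac12$ in the denominator, so the base you actually obtain is $\frac{e}{\beta}\bigl(\frac{e\Delta\beta}{\kappa}\bigr)^{\kappa}$, not $\frac{e}{\beta}\bigl(\frac{e\Delta\beta}{2\kappa}\bigr)^{\kappa}$. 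Plugging in \eqref{eq:expan}, the sharp base $e^{1+\kappa}(\Delta/(2\kappa))^{\kappa}\beta^{\kappa-1}$ evaluates to $2^{-(\kappa-1)}$ (not $\tfrac12$, unless $\kappa=2$), so your version is $2^{\kappa}\cdot 2^{-(\kappa-1)}=2$: the first moment is bounded by $2^{r}$, which proves nothing. The slack built into $\beta$ is only a factor $2^{\kappa-1}$ in the base, so it cannot absorb the $2^{\kappa}$ you lose, and this matters precisely in the regime $\kappa\to 1^{+}$ used in the application (Theorem~\ref{th:main2}(ii) takes $\kappa=1+\eta/5$), where $2^{-(\kappa-1)}$ is already barely below $1$.

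The fix is to not split the falling factorials crudely. Your starting expression equals $\binom{\Delta n/2}{k}\,[\Delta r]_{2k}/[\Delta n]_{2k}$, and bounding the ratio termwise by $(r/n)^{2k}$ while keeping $\binom{\Delta n/2}{k}\leq(e\Delta n/(2k))^{k}$ yields $(e\Delta r^{2}/(2kn))^{k}$, which retains the needed $2\kappa$ in the denominator; this is exactly the route the paper takes. (Alternatively, replacing $\Delta n-2k\geq\tfrac12\Delta n$ by $\Delta n-2k\geq(1-2\kappa\beta/\Delta-o(1))\Delta n$ and checking that the resulting $e^{O(k^{2}/\Delta n)}$ penalty is beaten by $2^{-(\kappa-1)}$ for the stated $\beta$ also works, but requires the extra verification you have not done.) One further small point: your final display should read $\E[\#\text{bad sets}]\leq 2^{-(\kappa-1)r}$, which still tends to $0$ and suffices for the a.a.s.\ conclusion.
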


\begin{proof}
We use the configuration model of Bollob{\'a}s~\cite{bollobas} to construct 
random regular graphs. In this model, to construct a random $\Delta$-regular 
graph on $n$ vertices, we take $n$ sets (called \emph{buckets}) each 
containing $\Delta$ labelled objects called \emph{points}. Then we take a 
random partition $P$ of the $\Delta n$ points into $\Delta n/2$ 
\emph{pairs}, where each pair is a set of two distinct points.  
We call $P$ a \emph{pairing}. By replacing each bucket by a vertex and 
replacing each pair by an edge between the two corresponding vertices, we 
obtain a multigraph $G(P)$, which may have loops and multiple edges. 
If $G(P)$ is simple then it is $\Delta$-regular.  It has been 
shown~\cite{bollobas} that a random pairing is simple with probability 
tending to 
$\exp{(- \frac{\Delta^2-1}{4})}$ as $n \to \infty$.

Let $m(2a)$ denote the number of pairings of $2a$ points. 
It is well known that
\[  m(2a) = \frac{(2a)!}{a!\, 2^a}.\]
Write $[x]_a = x(x-1)\cdots (x-a+1)$ to denote the
falling factorial.  
Now let $\mathcal{P}_{n,\Delta}$ denote the uniform
probability space on the set of pairings
with $n$ buckets, each containing $\Delta$ points.  Let $B$
be a fixed set of $r$ buckets.  
Given a positive integer $s$, let $m_B(r,s)$ be the
number of pairings in $\mathcal{P}_{n,\Delta}$ in which
at least $s$ pairs are contained in $B$. 
We can obtain an overcount of $m_B(r,s)$ in the following way.
  We first select $s$ pairs within $B$, in 
\[ \frac{ [\Delta r]_{2s}}{s!2^s}\]
ways. Then we pair up the remaining $\Delta n-2s$ points
in $m(\Delta n-2s)$ ways.   
Hence
\[ m_B(r,s) \leq \frac{[\Delta r]_{2s}}{s! 2^s}\, 
    \frac{(\Delta n - 2s)!}{(\Delta n/2 - s)! 2^{\Delta n/2-s}}
   = \frac{(\Delta r)! (\Delta n - 2s)!}
   { 2^{\Delta n/2} s! (\Delta r - 2s)! (\Delta n/2 - s)! }.
\]
Therefore the probability $p(r,s)$ 
that a random pairing in $\mathcal{P}_{n,\Delta}$
has at least $s$ pairs within $B$ is
\[ p(r,s)= \frac{m_B(r,s)}{m(\Delta n)} \leq 
   \binom{\Delta n/2}{s} \frac{ [\Delta r]_{2s}}
           { [\Delta n]_{2s} }
\leq \binom{\Delta n/2}{s} \bigg( \frac{r}{n} \bigg)^{2s}.
\]

Let $X(r,s)$ be the random variable which counts the number of sets
of $r$ buckets which contain at least $s$ pairs of $P$, for 
$P\in\mathcal{P}_{n,\Delta}$. 
Using the inequality $\binom{a}{b} \leq (ea/b)^b$, 
we have
\[
\E(X(r,s)) = \binom{n}{r} p(r,s) 
   \leq \binom{n}{r}\binom{\Delta n/2}{s} \bigg( \frac{r}{n} \bigg)^{2s}
  \leq \left(\frac{en}{r}\right)^r \, \left(\frac{\Delta e r^2}{2sn}
      \right)^s.
\]
Now fix $s = \lceil\kappa r\rceil$ where, recall,  
$r =\lfloor\beta n\rfloor$. 
By definition of $\beta$ we have $\Delta e r < 2\kappa n$, 
and hence
\[
 \E(X(r,\lceil \kappa r\rceil))  
         \leq \bigg( \frac{ne}{r} \,
   \bigg( \frac{\Delta e r }{2 \kappa n}\bigg)^{\kappa}  \bigg)^{r}\
    \leq ((2\kappa)^{-\kappa} \,e^{\kappa+1} \,     
      \Delta^{\kappa} \,\beta^{\kappa-1})^{r}.
\]

When (\ref{eq:expan}) holds, we see that
\[ (2\kappa)^{-\kappa}\, e^{\kappa+1} \, \Delta^\kappa\, \beta^{\kappa-1} 
   \leq 2^{-(\kappa-1)}  \]
and this upper bound is a constant in $(0,1)$ which is independent of $n$.
Since $r\geq \beta n-1$ it follows that $\E(X(r,\lceil \kappa r\rceil)) = o(1)$, and we 
conclude that 
\[ \E(X(r,\lceil \kappa r\rceil) \mid G(P) \text{ is simple}) \leq 
     \frac{\E(X(r,\lceil \kappa r\rceil))}{\Prob(G(P) \text{ is simple})} =o(1).
\]
This shows that when (\ref{eq:expan}) holds, 
a.a.s.\ $G\in\mathcal{G}_{n, \Delta}$ has the property that all
subsets of vertices of size $r$ have fewer than $\kappa r$ edges. 
\end{proof}

Now we can easily show that when $q$ is sufficiently small and $n$ is sufficiently large,
the mixing time of the Glauber dynamics is slow for almost all $\Delta$-regular graphs.

\begin{theorem}
\label{th:torpid-random}
Fix $\Delta\geq 3$  and let $\kappa \in \big(1, \nfrac{\Delta}{2} \big]$. 
Suppose that $\beta$ is defined by (\ref{eq:expan}) and let $q\geq 2$ be
an integer which satisfies (\ref{q-assumption}).
Let $G\in\mathcal{G}_{n,\Delta}$.
Then a.a.s.\
the Glauber dynamics  $\mathcal{M} = \mathcal{M}_{\mathrm{GD}}(G, \lambda, q)$ satisfies
\[
\tau(\mathcal{M}) \geq 2^{\beta n-4}. \]
\end{theorem}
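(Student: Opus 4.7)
The proposed proof is essentially a routine combination of the three technical ingredients developed in this section: Lemma~\ref{le:expander} (expansion for random regular graphs), Lemma~\ref{lem:conductance} (a conductance bound in terms of $\alpha_r(G)$), and Theorem~\ref{th:con} (the standard conductance lower bound on mixing time). The plan is to chain them together, taking care that the various parameter hypotheses line up consistently.

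First, I would verify the hypotheses of Lemma~\ref{lem:conductance}. The integer $r = \lfloor \beta n\rfloor$ is the same quantity used in Lemma~\ref{le:expander}, and the assumption $q \geq 2$ satisfying (\ref{q-assumption}) is built into the theorem statement. The only hypothesis that is not immediate is the mild lower bound $n \geq \beta^{-1}(2 + \Delta \log_2 \lambda)$, but since we are making an asymptotic (a.a.s.) claim we may freely assume $n$ is sufficiently large. Then I would invoke Lemma~\ref{le:expander} to conclude that, for $G\in\mathcal{G}_{n,\Delta}$, a.a.s.\ the expansion parameter satisfies $\alpha_r(G)\leq \kappa$; on this high-probability event, Lemma~\ref{lem:conductance} applies and yields
\[
\Phi_{\mathcal{M}} \leq \frac{2}{\sqrt{2\pi r}}\, 2^{-r}.
\]

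Next I would plug this conductance bound into Theorem~\ref{th:con}, giving
\[
\tau(\mathcal{M}) \geq \frac{e-1}{2e\,\Phi_{\mathcal{M}}} \geq \frac{(e-1)\sqrt{2\pi r}}{4e}\, 2^{r}.
\]
Using $r \geq \beta n - 1$ and checking that the numerical prefactor $\tfrac{(e-1)\sqrt{2\pi r}}{4e}$ is at least $2^{-3}$ for all $r\geq 1$ (which is a direct evaluation, since $(e-1)\sqrt{2\pi}/(4e) > 1/8$), we obtain $\tau(\mathcal{M}) \geq 2^{\beta n - 4}$, as required.

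There is really no hard step here; the proof is purely a bookkeeping exercise assembling previously proved lemmas. The only matters that require a brief sanity check are that the two occurrences of $r$ and $\beta$ across Lemmas~\ref{lem:conductance} and~\ref{le:expander} are compatible (they are, by construction of $\beta$ via (\ref{eq:expan})), that the a.a.s.\ expansion event of Lemma~\ref{le:expander} is exactly what is needed to activate Lemma~\ref{lem:conductance}, and that the explicit constants combine to give the clean exponent $\beta n - 4$ in the final bound.
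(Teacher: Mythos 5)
Your proposal is correct and matches the paper's proof exactly: it chains Lemma~\ref{le:expander}, Lemma~\ref{lem:conductance}, and Theorem~\ref{th:con} in the same order, with the same choice of $r=\lfloor \beta n\rfloor$, and the same handling of the a.a.s.\ event. The only difference is that you spell out the numerical check $(e-1)\sqrt{2\pi r}/(4e) \geq 1/8$ giving the exponent $\beta n - 4$, which the paper leaves implicit.
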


\begin{proof}
For each positive integer $n\geq \beta^{-1}( 2 + \Delta \log_2 \lambda)$, let
$r = r(n) = \lfloor \beta n\rfloor$, which is a positive integer. 
By Lemma~\ref{le:expander} we know that
a.a.s.\ $G\in\mathcal{G}_{n,\Delta}$ satisfies $\alpha_r(G)\leq \kappa$.
Hence a.a.s.\ the conductance of the corresponding Glauber dynamics $\mathcal{M}_{\mathrm{GD}}(G,\lambda,q)$
is bounded above by
\[ \frac{2}{\sqrt{2\pi r}}\, 2^{-r}\]
by Lemma~\ref{lem:conductance}.
Applying Theorem~\ref{th:con} completes the proof.
\end{proof}

We conclude this section by proving Theorem~\ref{th:main2}
and Theorem~\ref{th:main3}.  

\begin{proof}[Proof of Theorem~\ref{th:main2}]
(i) Given $\eta\in (0,1)$, let $k=\lceil \eta^{-1}\rceil$ and define 
$c_1 = k2^{k+1}(\Delta k)^{2k}$.
If $q > c_1\lambda^{\Delta-1+\eta}$
then $q > c_1 \lambda^{\Delta - 1 + 1/k}$, by choice of $k$.
Then the conclusion follows from Theorem~\ref{th:glaubercompare}.

For (ii), given $\eta \in (0,1)$ define $\kappa = 1+\eta/5$. 
Since $\Delta\geq 3$ we have
  $\kappa\in \big(1,\nfrac{6}{5}\big)\subseteq (1,\nfrac{\Delta}{2}]$.
Define 
\[ c_2 = \nfrac{1}{1024} e^{-4(1 + \frac{1}{\kappa - 1})}\, \left(
  \frac{\Delta}{2\kappa}\right)^{-2(1 + \frac{1}{\kappa-1})}.\]
By our choice of $\kappa$ and since $\Delta\geq 3$, 
we have
\begin{align*}
\kappa + \frac{\kappa^2}{\Delta-\kappa}
 & \leq 1 + \frac{1}{\Delta-1} + \eta.
\end{align*}
Therefore, if
\[
 q-1 \leq c_2 \lambda^{\Delta -1 -\frac{1}{\Delta -1} - \eta} \]
then (\ref{q-assumption}) holds,  and
the result follows by applying Theorem~\ref{th:torpid-random}.
\end{proof}

\begin{proof}[Proof of Theorem~\ref{th:main3}]
The first and third statement follow from substituting $\Delta=4$
into Theorem~\ref{th:main2} (i) and (ii), respectively.
(So $c_3$ is obtained by substituting $\Delta=4$ in $c_1$,
and $c_5$ is obtained from $c_2$ similarly.)

For (ii), let $k = \lceil 2\eta^{-1}\rceil$ and define
$c_4 = (8k-1)\, 2^{k^2+8k}\, k^{2k^2 + 4k}$.
If $q > c_4 \lambda^{2+\eta}$ then $q > c_4 \lambda^{2 + 2/k}$,
by definition of $k$.  Then 
Theorem~\ref{th:glaubergrid} applies, completing the proof.
\end{proof}

\subsection*{Acknowledgements} 
  We are grateful to Ostap Hryniv and Gregory
Markowsky for leading us to the generalised H{\" o}lder's inequality
(and to~\cite{Finner}) for Lemma~\ref{le:dom}. We are also grateful to Mario Ullrich for providing feedback on an earlier draft of this paper. We would also like to thank the referees for their helpful comments.

\section*{Appendix}

Suppose that $q,\, \Delta\geq 3$ are integers and that $B$ is a 
real number. 
We prove that the polynomial 
\[
f(x) := (q-1)x^{\Delta} + (2- q -B)x^{\Delta -1} + Bx -1 
\]
has a double root in $(0,1)$ only if $0 < B = \Theta(q^{\frac{1}{\Delta - 1}})$ i.e.\ $\log B = \frac{\log q}{\Delta - 1} + O(1)$.
 Here all asymptotic notation is with respect to $q \to \infty$.

First we note some properties of $f$. Observe that 
$f''(x) = c_1 x^{\Delta -2} + c_2 x^{\Delta - 3}$ for some constants 
$c_1, c_2$. Thus $f''(x)$ has at most one root in $(0,1)$. This implies that $f'(x)$ has at most one turning point in $(0,1)$ and hence at most two roots in $(0,1)$. Thus $f(x)$ has at most two turning points in $(0,1)$.
This together with the fact that $f(0) = -1$ and $f(1) = 0$ implies that if $f$ has a double root in $(0,1)$, it must be the case that $f(x) \leq 0$ for all $x \in [0,1]$. (To see this, consider the graph of $f$ with the constraints deduced above.)

We show that (i) if $0<B = \omega(q^{\frac{1}{\Delta -1}})$ and 
$q$ is sufficiently large, then $f(x)>0$ for some $x \in (0,1)$; 
(ii) if $B\leq 0$ then $f(x) < 0$ for all $x \in (0,1)$; and 
(iii) if $0 < B = o(q^{\frac{1}{\Delta -1}})$ and $q$ is sufficiently large, 
then $f(x) < 0$ for all $x \in (0,1)$.
Thus in all three cases there is no double root of $f$ in $(0,1)$; the only possibility remaining is that $0<B = \Theta(q^{\frac{1}{\Delta - 1}})$.

Splitting the terms in $f$, we have:
\[
f(x) = (q-1)x^{\Delta} - (q-2)x^{\Delta -1} - Bx^{\Delta -1} + Bx - 1.
\]
First suppose that $0 < B = \omega(q^{\frac{1}{\Delta -1}})$. 
Then $f(q^{- \frac{1}{\Delta -1}})$ is dominated by the fourth term above, which is positive.  Hence $f(q^{- \frac{1}{\Delta -1}})>0$ for $q$ sufficiently large, proving (i).

For (ii) and (iii), first observe that for all $x \in (0,1)$, we have
\[ f(x) = (x-1) \left( (q-1)x^{\Delta -1} + 1 + \sum_{i=1}^{\Delta -2}(1-B)x^i \right). \]
If $B\leq 0$ then for all $x\in (0,1)$, the second factor on the right hand side is positive
and the first factor is negative,  establishing (ii).  

For the remainder of the proof, suppose that 
$0 \leq B = o(q^{\frac{1}{\Delta-1}})$.  Using the
above identity and the fact that $B$ is positive, for all
$x\in (0,1)$ we obtain
\begin{align*}
f(x) 
&\leq (x-1)\left( (q-1)x^{\Delta -1} + 1 + \sum_{i=1}^{\Delta -2}(-B)x^i \right) \\
&\leq (x-1)\left((q-1)x^{\Delta -1} + 1 - \Delta B x \right).
\end{align*}
If $x \in (0, q^{-\frac{1}{\Delta-1}}]$ then $\Delta Bx = o(1)$, so $f(x)<0$ (for all sufficiently large $q$). If $x \in [q^{- \frac{1}
{\Delta -1}},1)$ then it is easy to check that $\Delta Bx = 
o((q-1)x^{\Delta -1})$, so $f(x)<0$ (for all sufficiently large $q$). 
Combining these two statements shows that (iii) holds, completing the proof.
\qed
\end{document}